\theoremstyle{plain}
\newtheorem{theorem}{Theorem}[section] 
\newtheorem{proposition}{Proposition}[section] 
\newtheorem{lemma}{Lemma}[section]
\newtheorem{assumption}{Assumption}
\newtheorem{definition}{Definition}
\newtheorem{example}{Example}
\newtheorem{algorithm}{Algorithm}
\let\emptyset\varnothing
\numberwithin{equation}{section}
\title{Treatment Effects of Multi-Valued Treatments in Hyper-Rectangle Model}
\author{Xunkang Tian\footnote{Faculty of Doctoral Studies, European Research University. Email: \href{mailto:xunkang.tian@eruni.org}{xunkang.tian@eruni.org}}}
\date{\today}
\begin{document}

\maketitle

%\vspace{-3em}
%\begin{center}
%\href{https://drive.google.com/file/d/1E_CKBQFaGqvd41GhjXWHf0xALxZRfSsE/view}{Click Here for the Latest version}
%\end{center}

\begin{abstract}
This study investigates the identification of marginal treatment responses within multi-valued treatment models. Extending the hyper-rectangle model introduced by Lee and Salanié (2018), this paper relaxes restrictive assumptions, including the requirement of known treatment selection thresholds and the dependence of treatments on all unobserved heterogeneity. By incorporating an additional ranked treatment assumption, this study demonstrates that the marginal treatment responses can be identified under a broader set of conditions, either point or set identification. The framework further enables the derivation of various treatment effects from the marginal treatment responses. Additionally, this paper introduces a hypothesis testing method to evaluate the effectiveness of policies on treatment effects, enhancing its applicability to empirical policy analysis.
\end{abstract}

\begin{center}
{\small \textbf{Keywords:} Multi-valued Treatment, Marginal Treatment Response, Set Identification, Policy Relevant Treatment Effect}
\end{center}

\clearpage

\section{Introduction}\label{sec:intro}

%This paper considers the identification of marginal treatment response in multi-valued treatment models. I model the selection of treatment by a hyper-rectangle model introduced by \cite{lee2018identifying}. In their paper, they assume the threshold of treatment selection is a known function, and they can only identify the treatment effect when the treatment depends on all the unobserved heterogeneity. In my case, I introduce another ranked treatment assumption, and argue that the marginal treatment response is identifiable, point or set, in a much broader situation. Besides, from the marginal treatment response, various kinds of treatment effects can be elicited.

% -- Why multi valued treatments

In economic applications, researchers frequently encounter situations involving treatments with multiple levels. For instance, participants in labor training programs may receive varying types or intensities of training based on their individual characteristics. Similarly, in medical practice, patients may be categorized into different care groups depending on specific indicators of their health conditions. 
In such scenarios, traditional binary treatment models become inadequate, and researchers must instead rely on multi-valued treatment models. These models allow for a more accurate assessment of heterogeneous treatment effects, which is essential for effective policy evaluation and the optimal design of interventions.

% -- What's the challenge

However, applying these models in practice involves significant challenges. A key difficulty arises from the presence of unobserved heterogeneity, which complicates the relationship between treatment selection and outcomes. Since individuals may be grouped into treatments based on unobserved factors correlated with potential outcomes, standard methods often fail to provide reliable identification of treatment effects. 
Additionally, multi-valued treatments typically involve more complex selection mechanisms compared to binary treatments. As the number of treatment levels increases, characterizing selection behavior becomes increasingly demanding, and assumptions about treatment assignment need to be carefully justified. Traditional methods for identifying treatment effects often depend on strict assumptions such as strong functional forms, restrictive monotonicity conditions, or independence assumptions that may not hold in realistic empirical contexts. These limitations highlight the need for approaches that impose fewer and more credible restrictions while maintaining rigorous identification.

% -- hyper rectangle model

A key approach in recent literature on multi-valued treatment models involves the construction of a hyper-rectangle framework combined with instrumental variables to achieve identification. Specifically, \cite{lee2018identifying} shows how the interaction between treatment assignment thresholds and the distribution of unobserved heterogeneity determines treatment selection. They further develop methods to identify the marginal treatment response, which serves as a fundamental building block for identifying various treatment effects under multi-valued treatments. Their approach demonstrates a flexible and practical path toward identification, providing a foundation for subsequent econometric analyses.

% This paper -- what did I do

This paper builds upon previous studies by proposing a generalized framework for identifying marginal treatment responses within multi-valued treatment models. Specifically, I extend the hyper-rectangle model introduced by \cite{lee2018identifying} with a customized set of assumptions. 
My approach considers scenarios where either the treatment selection thresholds or the distribution of unobserved heterogeneity is unknown, where one of them can be identified given the knowledge of the other one. 

To facilitate identification, I introduce a decomposition of the treatment assignment mechanism by defining a concept named the "leading term". This decomposition simplifies the analysis by clearly structuring how instrumental variables interact with unobserved heterogeneity. Further, I distinguish between cases based on the rank of these leading terms. When a leading term has full rank, the model achieves point identification of the marginal treatment responses. In contrast, if no leading term achieves full rank, I propose an additional ranked treatment assumption, enabling set identification under more realistic and flexible conditions.

Additionally, I develop a hypothesis testing method tailored explicitly to evaluate the effectiveness of policy interventions based on the identified treatment effects. This testing procedure provides policymakers with a practical tool to rigorously assess how changes in treatment assignment rules influence aggregate outcomes, thus expanding the practical applicability of multi-valued treatment models.

% Literature Review

%- First Review: literature on classic (binary) treatment effect identification

The literature on treatment effect identification has historically focused on binary treatment settings, where individuals are assigned either to a treatment or control group. A seminal contribution by \cite{imbens1994identification} provided conditions for identifying local average treatment effects using instrumental variables, clarifying the role of compliance behavior in treatment assignment. 
\cite{angrist1996identification} formalized the instrumental variable approach within the Rubin Causal Model, explicitly characterizing the instrumental variable estimand as the average causal effect for the subgroup of compliers. 
Subsequently, \cite{heckman1997instrumental} systematically examined the identifying assumptions necessary for estimating average treatment effects and treatment effects on the treated.

Building on these foundations, \cite{abadie2003semiparametric} developed semiparametric instrumental variable estimators to identify average treatment effects under weaker assumptions. Furthermore, \cite{imbens2004nonparametric} extended the analysis by reviewing various nonparametric methods to estimate the treatment effects and analyzing the plausibility of key assumptions. 
\cite{hahn1998role} studies the average treatment effect and average treatment effect on the treated, clarifying the role of the propensity score in efficient estimation under unconfoundedness. 
\cite{heckman2005structural} introduces the concept of marginal treatment effects to unify the nonparametric literature on treatment effects with structural econometric estimation, allowing for heterogeneity in treatment responses. 
\cite{mogstad2018using} show how instrumental variables methods can identify policy-relevant treatment parameters beyond the subpopulation directly affected by the instruments through a unified framework based on marginal treatment effects.

While much of the literature on treatment effects focuses on point identification, 
recent methodological advancements have focused on relaxing stringent assumptions associated with traditional instrumental variable frameworks, where treatment effects only admit partial identification. 
\cite{chen2023differential} propose an approach using differential treatment effects to partially identify average or heterogeneous treatment effects under unmeasured confounding, along with a two-stage inference procedure to conduct statistical inference when point identification is infeasible. 
%  --- idk if i need to add more literature

%- Then Review: literature on multi-valued treatment effect identification
%must include Hekman and Pinto 2018, and  Torgovitsky, Santos, and Mogstad 2019(this one also points out how to use marginal treatment reponse to extend to various treatments effects)

Extending beyond the binary treatment context, recent econometric research on multi-valued treatment models captures more realistic scenarios involving multiple intervention levels. 
\cite{imbens2000role} extends the propensity score methodology from the binary treatment setting to multi-valued treatments, facilitating the estimation of average causal effects. 
\cite{cattaneo2010efficient} develops efficient semiparametric estimators for multi-valued treatment effects defined by a collection of possibly over-identified non-smooth moment conditions when the treatment assignment is under ignorability. 
Further developments by \cite{heckman2018unordered} introduce an unordered monotonicity assumption to identify treatment effects of multi-valued treatments without imposing a strict hierarchy among treatments. 
Collectively, these studies offer rigorous methodological tools and clarify essential identification issues in treatment effect models.

% -- hypothesis test on treatment effect

Beyond estimation, an important strand of the treatment effect literature focuses on hypothesis testing, particularly assessing whether a treatment has any impact and whether treatment effects vary across subpopulations. 
\cite{crump2008nonparametric} develop nonparametric tests on whether average treatment effect is zero, as well as detecting conditional average treatment effect heterogeneity across subpopulations. 
\cite{wu2021randomization} develop a version of the Fisher randomization test adapted for weak null hypotheses that do not imply sharp potential outcome restrictions. Their studentized test statistic achieves finite-sample exactness under the sharp null and retains asymptotic validity under the weak null, offering a model-free approach robust to test treatment effect heterogeneity. 
Under instrumental variable frameworks, \cite{abadie2002bootstrap} proposes a bootstrap procedure to test distributional hypotheses of treatments effects, including tests of equality of distributions and stochastic dominance. 
More recently, \cite{chernozhukov2018generic} proposes general inference methods based on machine learning proxies, facilitating estimation and testing of heterogeneous treatment effects in high-dimensional randomized experiments.

%- Then Review: Hypothesis test on set-identified values (include Inbens and Manski 2004), 

As mentioned above, it is possible that treatment effects can only be partially identified under scenarios when strong identification assumptions are relaxed. 
There is a growing body of work on hypothesis testing in set-identified frameworks. 
A foundational contribution by \cite{imbens2004confidence} proposes confidence intervals that asymptotically cover the true value of the parameter with fixed probability rather than cover the entire identified region, and its exact coverage probabilities converge uniformly to their nominal value. 
\cite{beresteanu2008asymptotic} develop a limit theory for estimators of identification regions, based on set-valued random variables and convergence in the Hausdorff metric, allowing for construction of valid confidence regions for set-identified parameters. 
\cite{romano2010inference} propose an approach to construct uniformly valid confidence regions for identified sets defined through general objective functions. 
\cite{galichon2009test} design a testing framework for non-identifying model restrictions that can be inverted to form confidence sets. Their approach complements the moment inequality-based procedures of \cite{chernozhukov2007estimation}, offering additional flexibility for hypothesis testing when model is incomplete. 
Together, these contributions provide a comprehensive foundation for conducting rigorous inference and testing in treatment effect models as well as set-identified features under a wide range of identifying assumptions.

% My contribution

%My paper 的贡献在于 extending the hyper-rectangle model to a more  generalized framework for the identification of marginal treatment responses in multi-valued treatment models. Specifically, I focus on scenarios where either the thresholds governing treatment assignment or the joint distribution of unobserved heterogeneity is unknown （知道其中任意一个，可以identify另一个）. 
%原版的hyper rectangle model 会 partitions the unobserved heterogeneity space into treatment regions based on the interaction between treatment choice thresholds and instrumental variables，
%我通过By relaxing the restrictions on the treatment assignment, broadens the applicability of multi-valued treatment models to a wider range of empirical settings.

%我引入了leading term这个概念，来考虑不一样的case，对于identification有不一样的影响。 For each case, I establish conditions under which the marginal treatment response is identifiable, either point identified or set identified. 在这个过程之中呢, 我也  employing ranking-based assumptions to achieve set identification of marginal treatment responses.

%本文还有一个贡献在于 develop a hypothesis testing framework for evaluating the impact of policy interventions on treatment effects. This framework enables policymakers to assess whether changes in treatment assignment mechanisms lead to significant improvements in aggregate outcomes, as measured by policy-relevant treatment effects.

This paper contributes to the literature on multi-valued treatment effect identification in econometrics in several ways. 
First, this paper generalizes the hyper-rectangle model introduced by \cite{lee2018identifying}, broadening its applicability. Specifically, I only require the knowledge of either the distribution of unobserved heterogeneity or treatment assignment thresholds, and consider those two scenarios. 
Additionally, my framework allows for a more flexible treatment selection mechanism, relaxing the requirement that treatment assignment must depend on all dimensions of unobserved heterogeneity. This generalized structure accommodates richer empirical settings and makes the model more applicable.

Second, I introduce the concept of leading terms to systematically analyze treatment assignment mechanisms. This approach distinguishes between scenarios where the leading term is of full rank or not, establishing clear conditions for identification. In particular, when a full-rank leading term is unavailable, I propose a novel ranked treatment assumption that achieves set identification of marginal treatment responses. This assumption aligns closely with realistic empirical settings, where higher treatment intensities typically yield systematically larger or smaller treatment effects. Thus, my framework extends the practical applicability and relevance of multi-valued treatment effect models.

Third, the paper contributes methodologically by developing a hypothesis test framework tailored to assess policy interventions' effectiveness. The approach I propose allows researchers and policymakers to evaluate rigorously whether changes in treatment assignment mechanisms yield statistically significant improvements in aggregate outcomes. This advancement bridges the gap between econometric theory and practical policy evaluation, providing a direct tool for policy analysis and decision-making.

%Overall, this paper expands the methodological toolkit for econometric analysis of multi-valued treatments, making it adaptable to more general and empirically relevant situations, and directly addresses practical issues encountered in policy analysis and implementation.

% Paper Structure

The rest of the paper is organized as follows. 
Section~\ref{sec:model} introduces the hyper-rectangle model and formalizes the treatment assignment process. 
Section~\ref{ch:Q} presents the identification of treatment assignment threshold or distribution of unobserved heterogeneity. 
Section \ref{sec:identificationMTR} describes the strategy of identifying marginal treatment responses in different scenarios. 
Section \ref{sec:various} discusses a group of identifiable treatment effects besides marginal treatment responses. 
Section \ref{sec:test} develops the hypothesis test for evaluating the effectiveness of policies. 
Finally, Section~\ref{sec:conclusion} concludes with a discussion of implications and future directions.

\section{Hyper-Rectangle Model}\label{sec:model}

This section presents the micro-econometric model which constitutes the core analytical framework for assessing treatment assignments in this study. 
While the foundational structure of this model is inspired by and primarily conforms to \cite{lee2018identifying},  the model incorporates unique adjustments and a customized set of assumptions, which were conceived to cater to the specific context and requirements of my study.

\subsection{Model Basics}

To provide a brief overview, a standard treatment model encompasses an observed outcome, denoted by $Y \in \mathds{R}$, and the observed treatment, symbolized as $D=1,2,...,T$. Accompanying these, we have observed covariates, denoted by $X \in \mathds{R}^{\bar B}$, where $\bar B \in \mathds{N}^+$ signifies the dimension of $X$. Complementing these entity is a random vector $V=(V_1,...,V_J) \in \mathds{R}^J$ that accounts for unobserved heterogeneity. In this context, $J \in \mathds{N}^+$ represents the dimension of $V$. 
I restrict the outcome $Y$ to be strictly positive and bounded above, that is, $0<Y<\bar Y$ for some $\bar Y\in \mathds{R}^+$. I also confine the unobserved heterogeneity $V$ to the interval $[0,1]^J$. These constraints simplify the ensuing mathematics without significantly undermining the generality of the model. 

Additionally, for each outcome, we observe an instrumental variable, denoted by $Z=(Z_1,...,Z_{\bar{W}}) \in \mathds{R}^{\bar{W}}$, where ${\bar{W}} \in \mathds{N}^+$ is the dimension of $Z$. These instruments play a crucial role in identification and estimation strategies  in the later analysis.

The observed data is composed of 
a sample $\{ (Y^o,D^o,Z^o,X^o): o=1,...,N_o \}$, with $N_o \in \mathds{N}^+$ denoting the sample size. For the sake of notational simplicity, I suppress the conditioning on $X$ in subsequent discussions, and all results should be interpreted as conditional on $X$.

To formalize the relationship between the observed outcome and treatment, let $Y_k$, $k=1,...,T$ denote the potential outcome under treatment $k$, and let $D_k \equiv \mathds{1}\{D=k\}$, $k=1,...,T$. The observed outcome $Y$ can thus be expressed as a sum of potential outcomes weighted by their respective treatment indicators: $Y=\sum\limits_{k=1}^T Y_kD_k$.

The legitimacy of the chosen instruments $Z$ is substantiated through the following assumption:

\begin{assumption}[Conditional Independence]\label{assum:independence}
  The potential outcomes $Y_k$, $k=1,...,J$, and the unobserved heterogeneity $V$ are jointly independent of the instruments $Z$.
\end{assumption}

The objective of this analysis is to estimate the Marginal Treatment Response (MTR), defined as $E[Y_k|V=v]$, which captures the expected potential outcome $Y_k$ given a specific realization of the unobserved heterogeneity $V=v$. I impose the continuity of the MTR in the Data Generating Process (DGP):

\begin{assumption}[Local Equicontinuity]\label{assum:continuousmtr}
  The set of Marginal Treatment Response function $\left\{ E[Y_k|V=v] \right\}_{k=1}^T$ is locally equicontinuous at each $v \in (0,1)^J$.
\end{assumption}

This continuity assumption ensures the mathematical tractability of the model and allows us to employ a host of econometric techniques to analyze the data.

\subsection{The Determination of Treatment}\label{ch:determoftreat}

Next, I delve into the mechanism determining the treatment variable $D$. This is controlled by the confluence of a series of conditions. Specifically, the conditions entail a set of inequalities involving unobserved heterogeneity $V$: $V_1<Q_1(Z)$ or $V_1 \geq Q_1(Z)$, ..., $V_J<Q_J(Z)$ or $V_J \geq Q_J(Z)$. Here, $Q(Z)=(Q_1(Z),...,Q_J(Z))$ represents a vector of functions of the instrumental variable $Z$, and it acts as the threshold for $V$. 
%I assume that the support of threshold is the open interval $(0,1)$, stated as:
I impose a key assumption about the support of the threshold, which I restrict to be the open interval $(0,1)$. This assumption is formalized as follows:
\begin{assumption}[Interior of Threshold]\label{interiorofQ}
Let $\mathcal{Z}$ be the support of instrument $Z$. Then, 
for all $j=1,\ldots,J$, there is $Q_j(\mathcal{Z})=(0,1)$.
%and $Z\in \mathds{R}^W$, $Q_j(Z) \in (0,1)$.
\end{assumption}
Assumption \ref{interiorofQ} precludes uninteresting scenarios in which the threshold $Q_j(Z)$ reaches a boundary point, rendering an explicit threshold ineffective in influencing the treatment assignment. 
In other words, it ensures that all realizations of $Q(Z)$ lie within the interior of its range of variation. 
Besides, the open interval of $Q(\mathcal{Z})$ also indicates any realization of $Q(Z)$ belongs to the interior of its range of variation. 
Furthermore, the assumption that the support of $Q(Z)$ is dense in $(0,1)$ guarantees that changes in the instrument $Z$ can bring about the entire spectrum of variation in the threshold $Q(Z)$. This is critical for the identification analysis to follow, as it ensures that there is sufficient exogenous variation in the instrument to trace out the treatment effect of interest.
%Moreover, the support is dense in $(0,1)$ ensures that the variation of $Q(Z)$ can be generated by varying the instrument $Z$, which is important in the analysis of identification. 
%$Q(Z)$ can the either known or unknown, which I will discuss respectively later.
In the following discussions, I will separately consider the cases where the threshold function $Q(Z)$ is known and unknown.

In this model, treatment $D=k$ is selected if and only if the specified inequalities are satisfied. To formalize this, consider the $\sigma$-algebra $\sigma_{\{ V,Q(Z) \}}$ generated by the set
\begin{equation}\label{sigmaalgebra}
  \{ V_j<Q_j(Z) \}, \quad  j=1,..., J
\end{equation}
I put forward the following assumption
\begin{assumption}[Measurability]\label{assum:measure}
 Treatment variable $D$ is measurable with respect to the $\sigma$-algebra  $\sigma_{\{ V,Q(Z) \}}$.
\end{assumption}

Any set in the $\sigma$-algebra corresponds to taking unions, intersections, and complementation of  the sets in (\ref{sigmaalgebra}). Therefore,
we can envision the treatment model as being constructed on a hyperplane, where every $V_j$ forms one dimension. Given that $V_j\in [0,1]$ and is partitioned by $Q_j(Z)$, 
each hyper-rectangle is formed from the intersection of chosen $V_j<Q_j(Z)$ or $V_j\geq Q_j(Z)$, $j=1,...,J$. 
Each treatment is linked to a combination of one or several hyper-rectangles in this hyperplane.  
The following example demonstrates this style of treatment determination, providing a more concrete understanding of the framework presented above.

\begin{example}[Multi-Way Selection]\label{example1}
  Consider a comprehensive training program. This program evaluates participants based on their performance in three distinct tests and subsequently assigns them into one of the four designated groups, represented as $D=1,2,3,4$. Denote the scores of the participants in these tests as $V_1$, $V_2$, $V_3$, and the minimum required grades for passing each test as $Q_1(Z)$, $Q_2(Z)$, $Q_3(Z)$. In this scenario, $Z$ refers to certain characteristics or conditions that may influence the required threshold for qualification.

  The process of group assignment is primarily determined by the following set of rules:
  \begin{itemize}
    \item Participants who fail to meet the thresholds for tests 2 and 3, irrespective of their performance in test 1, are assigned to group 1.
    \item Participants are assigned to group 2 if they successfully pass test 1 and only one test among tests 2 and 3.
    \item Participants who excel by surpassing the thresholds in all three tests are assigned to group 3.
    \item Group 4 comprises participants who fail test 1, but manage to pass at least one test among tests 2 and 3.
  \end{itemize}

  Aligning this example with the framework of my model, we see that $J=3$ and $T=4$. Thus, the treatment $D$ is ascertained by the following set of conditions:
  \begin{itemize}
    \item $D=1$ if $V_2<Q_2(Z)$, $V_3<Q_3(Z)$.
    \item $D=2$ if $V_1\geq Q_1(Z)$, $V_2<Q_2(Z)$, $V_3\geq Q_3(Z)$, or $V_1\geq Q_1(Z)$, $V_2\geq Q_2(Z)$, $V_3< Q_3(Z)$.
    \item $D=3$ if $V_1\geq Q_1(Z)$, $V_2\geq Q_2(Z)$, $V_3\geq Q_3(Z)$.
    \item $D=4$ if $V_1< Q_1(Z)$, $V_2<Q_2(Z)$, $V_3\geq Q_3(Z)$, or $V_1< Q_1(Z)$, $V_2\geq Q_2(Z)$.
  \end{itemize}

This case is demonstrated graphically in Figure \ref{J=3example}. Here, the hypercube $[0,1]^J$ is sectioned into eight distinct regions. Each region corresponds to a specific combination of test outcomes and leads to a unique treatment assignment.
\begin{figure}[htbp]
  \centering
  \hspace{-2cm}
  \begin{minipage}[b]{0.25\textwidth}
    \includegraphics[width=\linewidth]{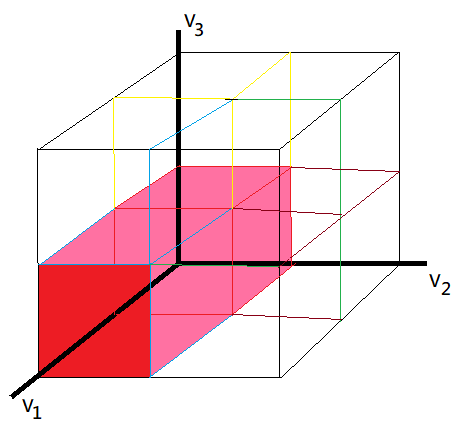}
    \centering (a) $D=1$
  \end{minipage}
  \hspace{0cm}
  \begin{minipage}[b]{0.25\textwidth}
    \includegraphics[width=\linewidth]{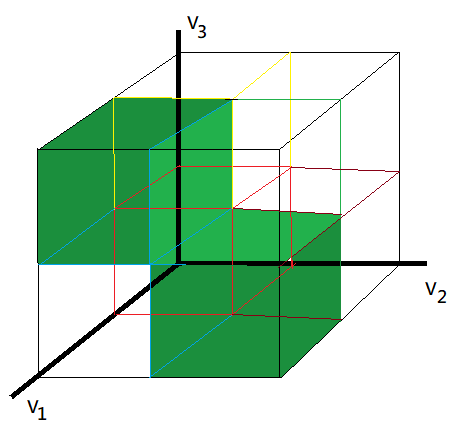}
    \centering (b) $D=2$
  \end{minipage}
  \hspace{0cm}
  \begin{minipage}[b]{0.25\textwidth}
    \includegraphics[width=\linewidth]{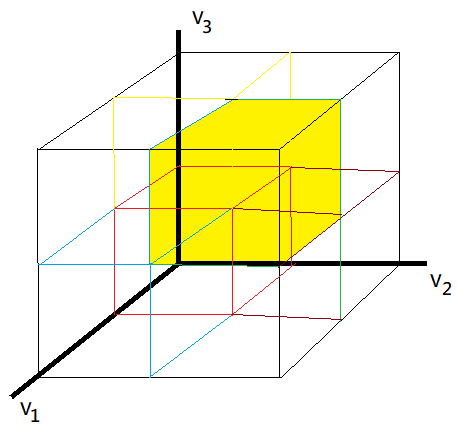}
    \centering (c) $D=3$
  \end{minipage}
  \hspace{0cm}
  \begin{minipage}[b]{0.25\textwidth}
    \includegraphics[width=\linewidth]{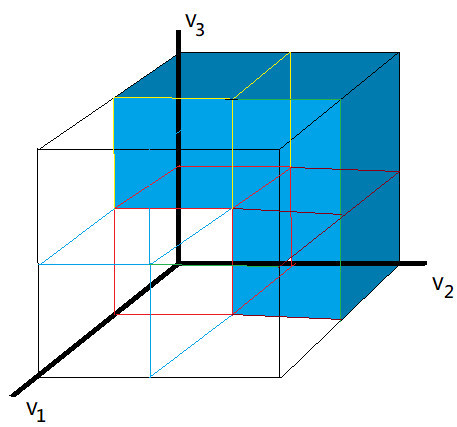}
    \centering (d) $D=4$
  \end{minipage}
  \hspace{-2cm}
  \caption{Illustration of Example \ref{example1}}
  \label{J=3example}
\end{figure}
\end{example}

To precisely illustrate the treatment selection process, I introduce a function $d_k(V,Q(Z))$ such that the treatment indicator can be represented as $\mathds{1}\{ D=k \}=d_k(V,Q(Z))$. 
For the sake of simplification,  denote
\begin{equation}\label{eq:Sj}
  S_j(V,Q(Z)) \equiv \mathds{1}\{ V_j<Q_j(Z) \}, \quad j=1,..., J
\end{equation}
Given that $\mathds{1}\{ V_j \geq Q_j(Z) \} = 1- \mathds{1}\{ V_j < Q_j(Z) \}$, it is observed that for each treatment $k$, the indicator function $d_k(V,Q(Z))$ equates to the summation, product, and difference of selected $\mathds{1}\{ V_j<Q_j(Z) \}$, $j=1,..., J$, represented in a polynomial form.

Now, let $\mathcal{L}$ denote the set of all non-empty subsets $l$ of $\mathbf{J} \equiv \{ 1,..,J \}$. In this context, $d_k(V,Q(Z))$ can be articulated according to how the hyper-rectangle for treatment $k$ is constructed. Mathematically, it is expressed as
\begin{equation}\label{rectangledk}
  d_k(V,Q(Z))= \sum_{l\in \mathcal{L}}  e_l^k  \prod_{j\in l}  S_j(V,Q(Z))^{r_{lj}^k} (1-S_j(V,Q(Z)))^{1-r_{lj}^k}
\end{equation}
where $e_l^k \in \{0,1\}$ signifies the existence of term $l$ in the set for treatment $k$, and $r_{lj}^k \in \{0,1\}$ demonstrates whether $V_j<Q_j(Z)$ or $V_j\geq Q_j(Z)$ is involved.
%%%
Upon polynomial expansion, $d_k(V,Q(Z))$ can be distinctly expressed in a decomposed form as
\begin{equation}\label{decomposedk}
  d_k(V,Q(Z))= \sum_{l\in \mathcal{L}} c_l^k \prod_{j\in l}  S_j(V,Q(Z))
\end{equation}
where $c_l^k \in \mathds{Z}$ denotes the integer coefficient of the term $l$ for treatment $k$.

Reverting to Example \ref{example1},  we can lucidly express each treatment in accordance with Equation \eqref{rectangledk} and \eqref{decomposedk} respectively.
\begin{eqnarray}\label{example1equations}
% \nonumber to remove numbering (before each equation)
  d_1 &=& S_2S_3 \nonumber \\
  d_2 &=& (1-S_1)S_2(1-S_3)+(1-S_1)(1-S_2)S_3 \nonumber \\
   &=& S_2+S_3-S_1S_2-S_1S_3-2S_2S_3+2S_1S_2S_3 \nonumber \\
  d_3 &=& (1-S_1)(1-S_2)(1-S_3) \nonumber \\
   &=& 1-S_1-S_2-S_3+S_1S_2+S_1S_3+S_2S_3-S_1S_2S_3 \nonumber \\
  d_4 &=& S_1S_2(1-S_3)+ S_1(1-S_2) \nonumber \\
   &=& S_1-S_1S_2S_3
\end{eqnarray}

In addition, it is assumed that each realization of $(V,Z)$ must be associated with one and only one treatment. Formally, this can be expressed as follows:

\begin{assumption}[Completeness]  \label{assum:complete}
For any particular instance of $V$ and $Z$, it holds that
\begin{equation}\label{equation:complete}
  \sum_{k=1}^T d_k(V,Q(Z)) =1
\end{equation}
\end{assumption}

This assumption reinforces the exclusivity of the treatment assignment, precluding any void or overlap in treatments for any given combination of $V$ and $Z$. Moreover, it ensures that the expression given in Equation \eqref{decomposedk} is uniquely determined. Therefore, the treatment determination mechanism can be expressed as a function:
\begin{equation}\label{eq:detofd}
  d(V,Q(Z)) = \sum_{k=1}^T k \cdot d_k(V,Q(Z))
\end{equation}

Furthermore, I introduce the following assumption: 
\begin{assumption}[Involvement of Threshold] \label{assumption:threshold_involvement}
Each threshold $Q_j(Z)$ must be involved in at least one treatment's determination mechanism $d_k(V,Q(Z))$, $k=1,...,T$. Formally, there does not exist a $j \in \mathbf{J}$ such that $j \notin l$ for all $l$ in $\{ l: \exists k \in \{ 1,...,T \}, \text{ s.t. } c_l^k \neq 0 \}$.
\end{assumption}

Assumption \ref{assumption:threshold_involvement} essentially states that we preclude the possibility of any dimension $j$ that does not contribute to the diversity of treatment assignments. This assumption is made without loss of any heterogeneity in our model. If a certain dimension of unobserved heterogeneity is not influential in determining a treatment, it can be eliminated from both $V$ and $Q(Z)$ without fundamentally altering the model. In doing so, we reduce the dimensionality from $J$ to $J-1$. This reduction simplifies the mathematical manipulations and clarifies the interpretations of our model.

An important aspect to note is that any $V_j$, $V_j'$ ($j\neq j'$) can be correlated or can even be identical. This implication is significant as the hyper-rectangle model can seamlessly incorporate the truncated model for treatment determination. Specifically, for a one-dimensional variable $V\in [0,1]$, it is possible to have two or more truncation points that divide the uniform interval into three or more segments, resulting in multi-valued treatments. 

Figure \ref{V1V6} provides an example with $J=4$, where $V_1=V_2$ and $V_3=V_4$. The scenario can then be represented in a two-dimensional plane instead of a four-dimensional space.

\begin{figure}[htbp]
  \centering
    \includegraphics[width=0.5\linewidth]{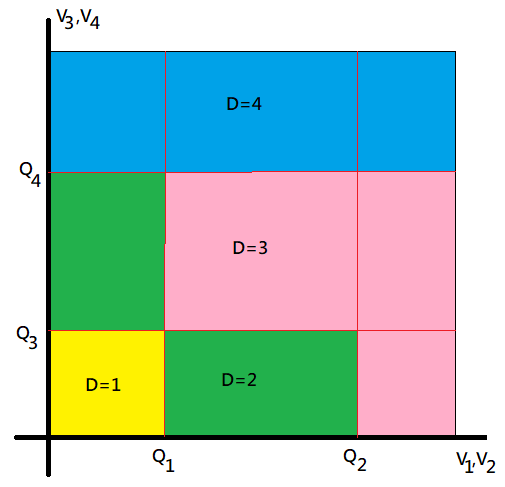}
  \caption{Example with $V_1=V_2$, $V_3=V_4$}
  \label{V1V6}
\end{figure}

However, in order to incorporate this scenario within the framework of my model, it is necessary to introduce an "outside treatment option" $D=5$ to account for all cases not captured by this plane, such as $V_1<Q_1(Z)$, $V_2\geq Q_2(Z)$. Thus, in the scenario depicted by Figure \ref{V1V6}, the total number of treatments should be $T=5$.  
\footnote{
The classical unconfoundedness assumption, $(Y_k)_{k=1}^J \perp\!\!\!\perp D \mid X$, is not required in this framework. 
The unobserved heterogeneity $V$ is correlated with $Y_k$, and also enters the treatment assignment rule since $D$ is a deterministic function of $(V,Z)$. Conditioning on $X$ and $Z$ does not break the dependence between $D$ and $Y_k$. 
}

\subsection{Leading Term}

To facilitate the estimation process, I will employ the term "term $l$" to denote each $c_l^k \prod_{j\in l}  S_j(V,Q(Z))$ with $c_l^k \neq 0$ in Equation \eqref{decomposedk}. Consequently, $d_k(V,Q(Z))$ can be viewed as a finite combination of the term $l\in \mathcal{L}$.

Let $l_j=\mathds{1}\{ j\in l \}$, a term can be succinctly represented by a vector with coefficient as $l=c_l^k (l_1,...,l_J)$. For instance, in Example \ref{example1}, the $D=1$ case implies only one term $l=(0,1,1)$, and the $D=2$ case illustrates six terms, namely, $l^1=(0,1,0)$, $l^2=(0,0,1)$, $l^3=-(1,1,0)$, $l^4=-(1,0,1)$, $l^5=-2(0,1,1)$, and $l^6=2(1,1,1)$, as implied by Equation \eqref{example1equations}.
To further clarify, I propose the following definitions related to the term:

\begin{definition}[Inclusion]
A term $l$ is said to be included in another term $l'$ (denoted as $l \subseteq l'$), if $l'$ encompasses the $S_j$ that are present in $l$. In other words, $l_j \leq l_j'$, $\forall j=1,...,J$. This holds regardless of the values or signs of the coefficients of these terms.
\end{definition}

\begin{definition}[Rank]
The rank of a term $l$, symbolized as $|l|$, is determined by the count of its non-zero elements. Formally, it can be expressed as:
\begin{equation*}
  |l|=\sum_{j=1}^J  l_j
\end{equation*}
A term $|l|$ with a rank equivalent to $J$ is said to be of full rank.
\end{definition}

\begin{definition}[Leading Term]
In a treatment decomposition, a term $l^i$ is called a leading term if there exists no other term $l^{i'}\neq l^i$ such that $l^i \subseteq l^{i'}$.
\end{definition}

I revisit Example \ref{example1} to illustrate the point of leading term. 
In the case of $D=1$, the only term present is the leading term, which holds a rank of $2$. In the case of $D=2$, there is a sole leading term $l^6$, with a rank of $3$. From the definitions, some noteworthy propositions can be yielded, which shed light on the fundamental characteristics of treatment decompositions. 

\begin{proposition}
  In a treatment decomposition, there may exist one or multiple leading term.
\end{proposition}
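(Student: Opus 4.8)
The plan is to read the statement as two assertions: that every treatment decomposition in \eqref{decomposedk} admits at least one leading term, and that the number of leading terms genuinely varies, being exactly one for some treatments and strictly more for others. The first assertion is the structural core and follows from a finiteness argument; the second I would settle by exhibiting explicit admissible decompositions, one of each type.

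For the existence part, I would fix a treatment $k$ and collect its terms into the index set $\mathcal{T}_k \equiv \{ l \in \mathcal{L} : c_l^k \neq 0 \}$. Since a genuinely realized treatment in a model with $T \geq 2$ cannot have $d_k$ constant, $\mathcal{T}_k$ is non-empty, and it is finite because $\mathcal{L}$ has at most $2^J-1$ elements. I would then check that the inclusion relation restricts to a partial order on $\mathcal{T}_k$: reflexivity and transitivity are immediate from the componentwise inequalities $l_j \leq l_j'$, while antisymmetry holds because each $l \in \mathcal{L}$ indexes a unique term in \eqref{decomposedk}, so $l \subseteq l'$ together with $l' \subseteq l$ forces $l = l'$. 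Every finite non-empty poset has a maximal element, and a maximal element under $\subseteq$ is precisely a term not included in any distinct term, i.e. a leading term. This yields at least one leading term.

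For the multiplicity part, I would first record that whenever $\mathcal{T}_k$ has a greatest element under $\subseteq$, in particular whenever the full-rank term $(1,\ldots,1)$ survives, that element is the unique leading term; the four treatments of Example \ref{example1} all fall into this case and illustrate the single-leading-term situation. To realize the other case I would construct an admissible treatment whose maximal terms are incomparable. A convenient choice with $J=4$ is the disjoint union of the hyper-rectangles $S_1 S_2$ and $(1-S_1)S_3 S_4$, which has the form \eqref{rectangledk} and decomposes as $d = S_1 S_2 + S_3 S_4 - S_1 S_3 S_4$; here both $(1,1,0,0)$ and $(1,0,1,1)$ are maximal under $\subseteq$, so there are two leading terms. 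Because the complement of a finite union of hyper-rectangles is again such a union, this treatment embeds in a complete assignment satisfying Assumption \ref{assum:complete}, so the example is legitimate.

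The step I expect to require the most care is the multiplicity construction: I must ensure the exhibited polynomial is not an arbitrary expression but arises from a bona fide hyper-rectangle decomposition that can be completed to a full partition of $[0,1]^J$, and that cancellation among the expanded monomials neither erases one of the intended maximal terms nor produces a higher-rank term dominating both. Confirming that no full-rank term survives in $d = S_1 S_2 + S_3 S_4 - S_1 S_3 S_4$ and that neither $(1,1,0,0)$ nor $(1,0,1,1)$ is contained in the other is the crux that secures the conclusion, whereas the existence of at least one leading term is the routine half.
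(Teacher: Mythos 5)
Your proposal is correct. It is worth noting that the paper itself offers no formal proof of this proposition: it is stated as an immediate consequence of the definitions, with Example \ref{example1} (where every treatment's term set contains a greatest element, hence a unique leading term) and the decomposition accompanying Figure \ref{leadingrank2}, $d = S_1S_2S_3 + (1-S_1)(1-S_2)(1-S_3) = 1-S_1-S_2-S_3+S_1S_2+S_1S_3+S_2S_3$, serving as the implicit witnesses for the single and multiple cases respectively. Your argument covers the same two assertions but adds rigor the paper omits: the finite-poset maximality argument cleanly establishes that at least one leading term always exists (the paper never addresses this), and your antisymmetry check correctly exploits the fact that inclusion ignores coefficients. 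Your $J=4$ construction $d = S_1S_2 + S_3S_4 - S_1S_3S_4$ is valid --- the two regions are disjoint, no full-rank monomial ever appears, $(0,0,1,1)$ is absorbed by $(1,0,1,1)$, and the two surviving maximal terms $(1,1,0,0)$ and $(1,0,1,1)$ are incomparable --- though you could have saved the construction entirely by citing the paper's own Figure \ref{leadingrank2} example, where the full-rank monomials cancel and three rank-2 leading terms remain. The paper's example also illustrates a phenomenon yours does not: multiplicity arising through cancellation of the full-rank term rather than through its absence from the outset.
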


\begin{proposition}\label{prop2}
  In a treatment decomposition, if term $l^i$ possesses full rank, it is the unique leading term.
\end{proposition}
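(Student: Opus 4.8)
The plan is to exploit the fact that each coordinate $l_j = \mathds{1}\{j \in l\}$ takes values only in $\{0,1\}$, so that a full-rank term is forced to be the all-ones vector, which is simultaneously maximal (nothing can strictly contain it) and dominant (it contains every other term). First I would unpack what full rank means: since $|l^i| = \sum_{j=1}^J l^i_j = J$ and each $l^i_j \in \{0,1\}$, every coordinate must satisfy $l^i_j = 1$; that is, $l^i$ corresponds to the full index set $\mathbf{J}$. I would stress at the outset that, by the \emph{Inclusion} definition, the relation $l \subseteq l'$ depends only on these indicator coordinates and not on the coefficients $c_l^k$, so the entire argument is purely combinatorial.

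Next I would verify that $l^i$ is itself a leading term. Suppose toward contradiction that some distinct term $l^{i'} \neq l^i$ satisfied $l^i \subseteq l^{i'}$, i.e. $l^i_j \leq l^{i'}_j$ for all $j$. Since $l^i_j = 1$ and $l^{i'}_j \in \{0,1\}$, this forces $l^{i'}_j = 1$ for every $j$, so $l^{i'}$ and $l^i$ have identical coordinate vectors and hence represent the same subset of $\mathbf{J}$, contradicting $l^{i'} \neq l^i$ (distinct terms correspond to distinct index subsets in the decomposition \eqref{decomposedk}). Thus no term strictly contains $l^i$, and $l^i$ qualifies as a leading term.

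Finally I would establish uniqueness. Let $l^{i''}$ be any term with $l^{i''} \neq l^i$. Because $l^i_j = 1 \geq l^{i''}_j$ for all $j$, we immediately obtain $l^{i''} \subseteq l^i$. Hence $l^{i''}$ is included in a distinct term and, by the \emph{Leading Term} definition, cannot itself be leading. Since every competitor is ruled out, $l^i$ is the unique leading term.

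The proof is essentially immediate once one observes that full rank pins $l^i$ down to $(1,\ldots,1)$, so I do not anticipate a genuine obstacle; the only point requiring care is the bookkeeping around what ``distinct terms'' means. I would be explicit that each subset $l \in \mathcal{L}$ appears at most once in \eqref{decomposedk}, so ``term $l^{i'} \neq l^i$'' refers to a different coordinate vector rather than merely a different coefficient, and that inclusion is insensitive to the coefficients $c_l^k$. Keeping these two conventions clearly stated is what makes both the maximality and the domination steps airtight.
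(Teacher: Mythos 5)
Your proof is correct and follows exactly the argument the paper intends: the paper states this proposition as an immediate consequence of the definitions (offering no separate proof), and your observation that full rank forces $l^i=(1,\ldots,1)$, which therefore includes every other term and is included in none, is precisely that argument made explicit. Your care about distinguishing coordinate vectors from coefficients is a sensible clarification and introduces no deviation from the paper's framework.
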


\begin{proposition}\label{prop3}
  In a treatment decomposition, if term $l^i$ is a leading term, for any other term $l^{i'}$, there always exists an index $j\in \mathbf{J}$, such that $j\in l^i$ while $j\notin l^{i'}$.
\end{proposition}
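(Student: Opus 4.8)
The plan is to prove this directly by unpacking the definition of a leading term together with the definition of inclusion, using the fact that every coordinate $l_j=\mathds{1}\{j\in l\}$ is binary. First I would fix an arbitrary term $l^{i'}\neq l^i$ appearing in the decomposition of the treatment. Since $l^i$ is a leading term, the definition says precisely that there is no other term into which $l^i$ is included; in particular $l^i\subseteq l^{i'}$ must fail, i.e.\ $l^i\not\subseteq l^{i'}$.

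Next I would translate this non-inclusion into a coordinate condition. By the definition of inclusion, $l^i\subseteq l^{i'}$ is equivalent to the universally quantified statement $l^i_j\le l^{i'}_j$ for every $j=1,\dots,J$. Negating this, $l^i\not\subseteq l^{i'}$ means there exists at least one index $j\in\mathbf{J}$ with $l^i_j>l^{i'}_j$. Because each coordinate lies in $\{0,1\}$, the strict inequality $l^i_j>l^{i'}_j$ can hold only when $l^i_j=1$ and $l^{i'}_j=0$, which is exactly the assertion $j\in l^i$ and $j\notin l^{i'}$. This yields the desired index for the chosen $l^{i'}$, and since $l^{i'}$ was an arbitrary other term, the conclusion holds for every other term in the decomposition.

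There is no substantial obstacle: the proposition is essentially the contrapositive restatement of the leading-term definition, made concrete by the binary structure of the coordinate vectors, so the coefficients $c_l^k$ play no role. The only point requiring mild care is correctly negating the universally quantified inclusion condition (turning ``for all $j$, $\le$'' into ``there exists $j$, $>$'') and then invoking the $\{0,1\}$-valuedness to pin down the exact pattern $l^i_j=1$, $l^{i'}_j=0$, rather than stopping at the weaker $l^i_j>l^{i'}_j$.
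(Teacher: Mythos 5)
Your proof is correct and matches the paper's approach: the paper states Proposition \ref{prop3} without an explicit proof, presenting it as an immediate consequence of the definitions, and your argument (negating the universally quantified inclusion condition $l^i_j \le l^{i'}_j$ and using the $\{0,1\}$-valuedness of the coordinates to pin down $l^i_j=1$, $l^{i'}_j=0$) is exactly that consequence made explicit. No gaps.
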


Notably, the rank of the leading term could be less than the number of dimensions involved in treatment determination. An illustration of this can be found in the treatment scenario depicted in Figure \ref{leadingrank2}.

\begin{figure}[htbp]
  \centering
    \includegraphics[width=0.33\linewidth]{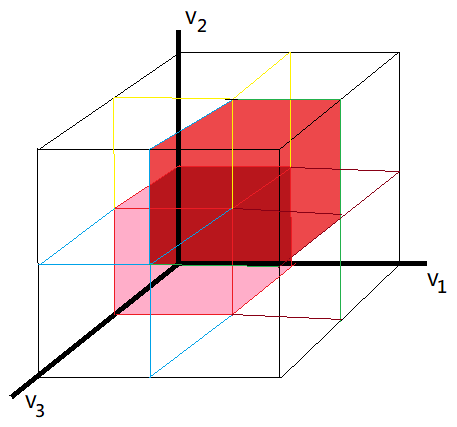}
  \caption{Leading Term with Rank 2}
  \label{leadingrank2}
\end{figure}

The corresponding analytical representation of the treatment is:
\begin{eqnarray*}
  d &=& S_1 S_2 S_3 + (1-S_1)(1-S_2)(1-S_3) \\
   &=& 1-S_1-S_2-S_3+S_1 S_2 +S_1 S_3 +S_2 S_3
\end{eqnarray*}

This expression implies three leading terms, $(1,1,0)$, $(1,0,1)$, and $(0,1,1)$. Interestingly, the rank of each of these leading terms is only two, instead of three. %This suggests that the degree of complexity in the treatment determination can be lower than the total number of dimensions involved.
%%%
Intuitively, this reduction in complexity is attributed to the perfect predictability of one rectangle from another in this particular treatment determination mechanism, effectively reducing the degrees of freedom in treatment assignments. This feature underlines the flexibility of our model and its ability to accommodate a variety of treatment determination mechanisms.

\section{Identifying Threshold or Distribution of Heterogeneity}\label{ch:Q}

In Section \ref{ch:determoftreat}, I introduced the threshold function $Q(Z)$, the knowledge status of which, known or unknown, significantly influences the strategies for empirical analysis. This section addresses each case in turn, and provide a methodology for identifying the threshold function, setting the stage for subsequent analyses.

\subsection{Known Threshold, Unknown Distribution of Heterogeneity}

I first turn to the scenario where the threshold function $Q(Z)$ is explicitly known. For illustrative purposes, consider Example \ref{example1}, in which the minimum grades required for passing the tests are clearly specified in the program descriptions. While these thresholds may differ depending on the candidate's attributes, they are fully observable to the researcher.

The explicit knowledge of the threshold function bestows significant analytical flexibility. Importantly, it eliminates the necessity for assuming prior knowledge of the distribution of unobserved heterogeneity. Instead, we can identify this distribution directly from the data, negating the need to pre-suppose it as a known entity. This approach simplifies the empirical model and bolsters its tractability. The nuances of this advantage and its subsequent implications will be comprehensively discussed in the following discourse.

Our analysis begins with a fundamental assumption regarding the distribution of the unobserved heterogeneity $V$. We denote the probability distribution function of this unobserved heterogeneity as $f_V$, and its cumulative distribution function as $F_V$. In order to discern $f_V$ within our model, we necessitate the following condition:
\begin{assumption}[Continuity of Distribution]\label{continuousfV}
  The probability distribution function $f_V(v)$ is positive and continuous at each $v \in (0,1)^J$.
\end{assumption}

For any treatment $k$, 
the conditional probability of treatment assignment, denoted as $\Pr(D=k|Q(Z)=q)$, can be directly observed from the data as a function of $q$, where $q=(q_1,...,q_J)$ constitutes the realization of $Q(Z)$. Additionally, the following relationship holds:
\begin{eqnarray*}
  \Pr(D=k|Q(Z)=q) &=& \Pr( d_k(V,Q(Z))=1 |Q(Z)=q)  \\
   &=& \Pr( d_k(V,q)=1 )  \\
   &=& \int \mathds{1}\{ d_k(v,q)=1 \} f_V(v) dv,
\end{eqnarray*}
where the second equality is justified by the conditional joint independence of $V$ and $Z$ stipulated in Assumption \ref{assum:independence}. Given that $d_k$ is an indicator function, $\mathds{1}\{ d_k(v,q)=1 \} = d_k(v,q)$. Hence, with respect to Equation \eqref{decomposedk}, the conditional probability can be reformulated as:
\begin{equation}\label{decompprob}
  \Pr(D=k|Q(Z)=q) = \int \sum_{l\in \mathcal{L}} c_l^k \prod_{j\in l} S_j(v,q) f_V(v) dv.
\end{equation}

Equation \eqref{decompprob} provides a decomposition of the conditional probability of being assigned to treatment $k$ into its constituent terms. However, the existence of a treatment with a full rank leading term, versus every treatment's leading term being non-full rank, will significantly impact subsequent analysis. 
The upcoming content of this section will explore them in greater details.

\subsubsection{Existence of Treatment with Full Rank Leading Term}
\label{sec:fullrankleadingfv}

A common and particularly manageable case arises when there exists at least one treatment with a full rank leading term. In such scenarios, researchers can directly identify the distribution of the unobserved heterogeneity without additional stringent assumptions. This specific case is extensively discussed by \cite{lee2018identifying}, and their results can be directly applied to our context.

In Equation \eqref{decompprob}, the behavior of $S_j$ as an indicator function for the threshold, as implied by Equation \eqref{eq:Sj}, implies that $q$ linearly modulates the area of integration. Furthermore, Assumption \ref{continuousfV} establishes the continuity of $f_V$. Assumption \ref{interiorofQ}, in turn, ensures that Equation \eqref{decompprob} is well defined within an open neighborhood of $q$ for all $q\in (0,1)^J$. Consequently, we can infer that Equation \eqref{decompprob} is differentiable across all dimensions of $q$.\footnote{For further proof details, please refer to Section 6 of \cite{lee2018identifying}.}

Without loss of generality, I suppose it is treatment $k$ that has a full rank leading term. 
Denote this full rank leading term of as $\tilde{l}$. A differentiation of $\Pr(D=k|Q(Z)=q)$ with respect to all the dimensions of $q=(q_1,...,q_J)$ will render all terms outside $\tilde{l}$ null as they lack at least one dimension of $q$. Thus, we obtain:
\begin{eqnarray*}
  \frac{\partial^J}{\partial q}\Pr(D=k|Q(Z)=q) &=&  \frac{\partial^J}{\partial q}  \int c_{\tilde{l}}^k \prod_{j\in \tilde{l}}  S_j(v,q)    f_V(v) dv  \\
  &=& \frac{\partial^J}{\partial q_1 \cdots \partial q_J} \int_0^{q_1} \cdots \int_0^{q_J} c_{\tilde{l}}^k  f_V(v_1,...,v_J) dv_1\cdots dv_J   \\
  &=& c_{\tilde{l}}^k f_V(q_1,...,q_J)
\end{eqnarray*}

From this derivative, we obtain:
\begin{equation*}
  f_V(v_1,...,v_J)= \frac{1}{c_{\tilde{l}}^k} \left. \frac{\partial^J \Pr(D=k|Q(Z)=q)}{\partial q} \right|_{q=v}   % \quad \text{a.e. } v \in (0,1)^J
\end{equation*}
which identifies the probability density function $f_V(v)$ almost everywhere throughout its entire support, $v\in(0,1)^J$.

\subsubsection{Absence of Treatment with Full Rank Leading Term}
\label{sec:nofullrankleadingfv}

In scenarios where a full rank leading term is not present on any treatment in the model, the direct identification of $f_V$ from any specific treatment exposure becomes infeasible. Instead, by emulating the analysis structure in Section \ref{sec:fullrankleadingfv}, we can learn certain marginal distribution functions via differentiation.

Consider a leading term $l$ for treatment $k$. Let $i_1,...,i_{|l|}$ be the numbering of $l$'s elements which are equal to one, that is, $l_{j}=1$ for $j\in \{ i_1,...,i_{|l|} \}$ and $l_{j}=0$ for the other cases. We denote $\{ -i_1,...,-i_{J-|l|} \}$ as the complement set of $\{ i_1,...,i_{|l|} \}$ in $\{1,...,J\}$. 

For simplicity in notation, we introduce the following sets:
\begin{eqnarray*}
    I_l^+ &\equiv& \{ i_1,...,i_{|l|} \} \\
    I_l^- &\equiv& \{ -i_1,...,-i_{J-|l|} \}
\end{eqnarray*}

As per Equation \eqref{decompprob}, because $l$ is a leading term, differentiating with respect to dimensions in $I_l^+$ will cancel all terms other than $l$ in the summation over $l\in \mathcal{L}$, as implied by Proporsition \ref{prop3}. Thus, we find:
\begin{equation}\label{distributionidentification}
  \int_0^1 \cdots \int_0^1  f_V(v_1,...,v_J) dv_{-i_1} \cdots dv_{-i_{J-|l|}}  = \frac{1}{c_{l}^k} \left. \frac{\partial^{|l|} \Pr(D=k|Q(Z)=q)}{\partial q_{i_1}\cdots \partial q_{i_{|l|}} } \right|_{q=v}
\end{equation}

We refer to Equation \eqref{distributionidentification} as the distribution identification equation, as it provides us with a marginal distribution of $f_V(v)$ on the dimension $I_l^+$. 
In the pursuit of further simplicity in notation, we introduce 
%$dv_{I_l^-} \equiv dv_{-i_1} \cdots dv_{-i_{J-|l|}}$ and $\partial q_{I_l^+} \equiv \partial q_{i_1}\cdots \partial q_{i_{|l|}}$. 
$v_{I_l^-} \equiv (v_{-i_1}, \cdots, v_{-i_{J-|l|}})$ and $ q_{I_l^+} \equiv  (q_{i_1}, \cdots,  q_{i_{|l|}})$. 
Therefore, the left hand side of Equation \eqref{distributionidentification} can be re-expressed in a concise manner as
\begin{equation}   \label{marginalPDF}
   f_{I_l^+}  \equiv \int f_V(v) dv_{I_{l^1}^-}
\end{equation}
Here, $f_{I_l^+}$ is a notation I introduce to encapsulate this marginal probability density function.

By integrating the distribution identification equations from all the leading terms of all treatments, we can discern new insights. Suppose we have two leading terms, $l$ and $l'$, where $l \subseteq l'$. It follows from the definition of leading terms that they must originate from distinct treatments. The distribution identification equation of term $l$ is consequently implied by that of term $l'$, as long as we undertake an additional integration over the dimensions in $I_l'^+ \backslash I_l^+$. This implies that the distribution identification equation of term $l$ can be disregarded as it fails to provide new information. 

%[Need to expend this para by math!!!? check whether i need] I think no need

Upon gathering all the distribution identification equations that have not been eliminated, we construct an equations system, which encapsulates the accessible information concerning the distribution of $V$. Let these terms be $l^1, ..., l^P$, where $P\in \mathds{Z}^+$ is the number of those terms, the system of equations can be represented as:
\begin{eqnarray*}
    \int f_V(v) dv_{I_{l^1}^-} &=&  \frac{1}{c_{l^1}^{k^1}} \left. \frac{\partial^J \Pr(D=k^1|Q(Z)=q)}{\partial q_{I_{l^1}^+} } \right|_{q=v} \\
     &\vdots &   \\
    \int f_V(v) dv_{I_{l^P}^-} &=&  \frac{1}{c_{l^P}^{k^P}} \left. \frac{\partial^J \Pr(D=k^P|Q(Z)=q)}{\partial q_{I_{l^P}^+} } \right|_{q=v}
\end{eqnarray*}
Here, $k^p$ corresponds to the treatment associated with term $l^p$, where $p=1,...,P$. 
%However, the information provided by these marginal distributions is insufficient for determining the joint distribution $f_V(v)$. 
In the next step, we will infer the joint distribution $f_V(v)$ with the information provided by these marginal distributions. 
%I denote the marginal distribution on the left-hand side of Equation \eqref{distributionidentification} as $F_{I_l^+}$, and recall that the right-hand side is known from the data. 
Assumption \ref{assumption:threshold_involvement} guarantees  the inclusion  of all dimensions $j=1,...,J$ in the combination of indices  ${I_{l^p}^+}$, $p=1,...,P$. Consequently, as stated by Sklar's theorem \citep{sklar1959fonctions}, there exists a copula $C$ that satisfies the following relation:
\begin{equation*}
    F_V(v)=C\left(F_{I_{l^1}^+}(v_{I_{l^1}^+}),...,F_{I_{l^P}^+}(v_{I_{l^P}^+}); \beta_C \right)
\end{equation*}
Here, $\beta_C$ refers to the copula's coefficients, which may have finite or infinite dimension, and  $F_{I_{l}^+}$ is the cumulative distribution of the marginal probability density $f_{I_l^+}$. 
Furthermore,  Assumption \ref{continuousfV} imposes the continuity of $f_V(v)$, which ensures the copula is uniquely determined by Sklar's theorem. Therefore, the distribution $f_V(v)$ is identified across its support $v\in (0,1)^J$.

In order to establish $F_V(v)$ from the copula, we suggest an implementation which consists of the following steps.

\begin{enumerate}[label=Step \arabic*:, align=left]
  \item Compute the marginal cumulative distribution $F_{I_{l^p}^+}$ from the marginal probability density $f_{I_{l^p}^+}$, $p=1,...,P$. 
  \item Selecting an appropriate copula function $\bar{C}(\cdot;\beta_{\bar{C}})$ which maps the marginal distributions $F_{I_{l^p}^+}$, $p=1,...,P$ to the joint distribution $F_V(v)$. It is crucial to note that the coefficients $\beta_{\bar{C}}$ remain to be ascertained. Commonly used copula functions include the Gaussian Copula, Frank Copula, and Gumbel Copula. \label{copula:steptwo}
  
  \item Construct a grid of $V$ and utilize the chosen copula to compute the realized joint distribution $\hat{F}_V(v; \beta_{\bar{C}}) =\bar{C}\left(F_{I_{l^1}^+}(v_{I_{l^1}^+}),...,F_{I_{l^P}^+}(v_{I_{l^P}^+}); \beta_{\bar{C}} \right)$ as a function of $\beta_{\bar{C}}$. Thus the corresponding joint probability density is 
  \begin{equation*}
      \hat{f}_V(v; \beta_{\bar{C}}) =\bar{C}\left(F_{I_{l^1}^+}(v_{I_{l^1}^+}),...,F_{I_{l^P}^+}(v_{I_{l^P}^+}); \beta_{\bar{C}} \right) \prod_{p=1}^P  F_{I_{l^p}^+}(v_{I_{l^p}^+})
  \end{equation*}

  \item Generate a sample of observations from the marginal distribution $F_{I_{l^1}^+}(v_{I_{l^1}^+}),...,F_{I_{l^P}^+}(v_{I_{l^P}^+})$. It is important to recognize that a single $v_j$ may appear in multiple marginal distributions. Consequently, in each sampling trial, we propose to:
  
  \begin{itemize}[align=left]
    \item Randomly select a marginal distribution $F_{I_{l^p}^+}(v_{I_{l^p}^+})$ and sample $v_{I_{l^p}^+}$ from it.
    
    \item Select another marginal distribution $F_{I_{l^{p'}}^+}(v_{I_{l^{p'}}^+})$ at random, establish its conditional distribution on $v_{I_{l^p}^+}$, denoted as $F_{I_{l^{p'}}^+|I_{l^{p}}^+}(v_{I_{l^{p'}}^+}|v_{I_{l^{p}}^+})$, if $I_{l^{p}}^+ \cap I_{l^{p'}}^+ \neq \emptyset$. Subsequently, sample $v_{I_{l^{p'}}^+}\backslash v_{I_{l^p}^+}$ from it.
    
    \item Iterate the previous steps until all elements in $v$ are sampled.
  \end{itemize}
  
  Denote the set of samples as $\{ v_{1g},...,v_{Jg} \}_{g=1}^G$, where $G$ is the predetermined sample size.
  
  \item Construct the likelihood function as
    \begin{equation*}
        \mathcal{L}(\beta_{\bar{C}})=\prod_{g=1}^G \hat{f}_V(v_g ; \beta_{\bar{C}})
    \end{equation*}
    where $v_g=( v_{1g},...,v_{Jg} )$. By employing maximum likelihood estimation, we can obtain estimates for $\beta_{\bar{C}}$, denoted as $\hat{\beta}_{\bar{C}}$. Substitute these values back into $\hat{F}_V(v; \beta_{\bar{C}})$ and $\hat{f}_V(v; \beta_{\bar{C}})$  to arrive at the estimates for the joint distribution.
\end{enumerate}

Given the intrinsic properties of convergence of maximum likelihood estimator, we have the following theorem.
\begin{theorem}\label{th:copulaconverge}
  Under Assumption \ref{assum:independence} - \ref{assumption:threshold_involvement} and \ref{continuousfV}, 
  if the copula function $\bar{C}(\cdot;\beta_{\bar{C}})$ in \ref{copula:steptwo} is correctly specified, the maximum likelihood estimates $\hat{\beta}_{\bar{C}}$ converges to the true value $\beta_{\bar{C}}$ as the number of sampling $G$ is large enough.
\end{theorem}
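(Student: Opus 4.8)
The plan is to treat $\hat\beta_{\bar C}$ as an extremum (M-)estimator and invoke the standard consistency argument for maximum likelihood: if the sample criterion converges uniformly to a population criterion uniquely maximized at the truth, the maximizer converges to the truth. Write $\beta_{\bar C}^0$ for the true copula parameter, whose existence and uniqueness follow from correct specification together with Sklar's theorem and Assumption \ref{continuousfV}: continuity of $f_V$ forces the underlying copula to be unique, so the correctly specified family contains exactly one value reproducing it. Passing to logs, the MLE solves
\begin{equation*}
  \hat\beta_{\bar C} = \argmax_{\beta} \; \frac{1}{G}\sum_{g=1}^G \log \hat f_V(v_g; \beta),
\end{equation*}
so it suffices to analyze the normalized objective $Q_G(\beta) \equiv G^{-1}\sum_{g=1}^G \log \hat f_V(v_g;\beta)$, restricting, as is standard for the Gaussian, Frank, and Gumbel families of Step~\ref{copula:steptwo}, to a finite-dimensional compact parameter space.

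First I would pin down the population objective $Q_0(\beta) \equiv E[\log \hat f_V(v;\beta)]$, where the expectation is taken under the law from which the Step~4 draws are generated. The crucial identity to establish is that this sampling law has density $\hat f_V(\cdot;\beta_{\bar C}^0)$, i.e.\ the true joint $f_V$. Granting this, the information inequality gives $Q_0(\beta) - Q_0(\beta_{\bar C}^0) = -\mathrm{KL}\!\left(\hat f_V(\cdot;\beta_{\bar C}^0)\,\|\,\hat f_V(\cdot;\beta)\right) \le 0$, with equality iff the two densities agree almost everywhere. Identifiability of the copula parametrization — that distinct $\beta$ yield distinct densities, a maintained regularity property of the chosen family — then makes $\beta_{\bar C}^0$ the unique maximizer of $Q_0$.

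Next I would upgrade pointwise to uniform convergence. With the parameter space compact, the map $\beta \mapsto \log\hat f_V(v;\beta)$ continuous, and an integrable dominating envelope (both inherited from smoothness of the copula family and boundedness of the identified marginals), a uniform law of large numbers yields $\sup_\beta |Q_G(\beta) - Q_0(\beta)| \to 0$. Combining uniform convergence, continuity, compactness, and the unique maximizer through the standard extremum-estimator consistency theorem then delivers $\hat\beta_{\bar C} \to \beta_{\bar C}^0$, which is the claim.

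The main obstacle is the identity asserted above: verifying that the sampling scheme of Step~4 actually produces draws whose density is the true $f_V$, so that the likelihood is correctly centered. The scheme stitches overlapping marginals $F_{I_{l^p}^+}$ together through their conditionals, and one must argue that — because Assumption \ref{assumption:threshold_involvement} guarantees every coordinate $j$ is covered by some $I_{l^p}^+$, and because the identified marginals are exactly those of $f_V$ — this reconstruction recovers the true joint law rather than some spurious distribution that merely shares those marginals. A secondary technical point is that the stitched components need not be independent within a trial; I would argue the draws are i.i.d.\ \emph{across} trials $g$, so that a plain law of large numbers applies, and otherwise substitute an ergodic version. Once the sampling density is shown to equal $\hat f_V(\cdot;\beta_{\bar C}^0)$, the remainder is routine MLE asymptotics resting on Assumptions \ref{assum:independence}–\ref{assumption:threshold_involvement} and \ref{continuousfV}.
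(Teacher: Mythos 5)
The first thing to note is that the paper itself offers no proof of Theorem \ref{th:copulaconverge}: the result is asserted directly from ``the intrinsic properties of convergence of maximum likelihood estimator.'' Your extremum-estimator outline (compact parameter space, uniform law of large numbers, information inequality, unique maximizer) is therefore the natural formalization of what the paper leaves implicit, and in that sense you are filling in rather than diverging from the paper's route.

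However, there is a genuine gap, and it sits precisely at the point you flag as ``the main obstacle'' and then defer: the identity that the Step-4 stitching scheme produces draws whose density is the true $f_V$, i.e.\ $\hat f_V(\cdot;\beta^0_{\bar C})$. This identity is false in general. Take $J=3$ with identified marginals on $I_{l^1}^+=\{1,2\}$ and $I_{l^2}^+=\{2,3\}$. The sequential scheme draws $v_{\{1,2\}}$ from $F_{I_{l^1}^+}$ and then draws $v_3$ from the conditional of $F_{I_{l^2}^+}$ given $v_2$; by construction the resulting law makes $V_1$ and $V_3$ conditionally independent given $V_2$ (and if two selected index sets are disjoint, the scheme forces outright independence between the corresponding blocks). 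The true $f_V$ shares both identified marginals yet need not satisfy any such conditional-independence restriction: a joint distribution is simply not determined by a collection of overlapping lower-dimensional marginals, and Assumption \ref{assumption:threshold_involvement} (every coordinate covered by some $I_{l^p}^+$) does nothing to restore this. Consequently your population objective $Q_0(\beta)=E[\log \hat f_V(v;\beta)]$ is centered at the stitched law, not at $f_V$, and the information-inequality step identifies its maximizer as the Kullback--Leibler projection of the stitched law onto the copula family --- a pseudo-true value that coincides with $\beta_{\bar C}$ only if the true copula happens to possess the Markov structure that the sampling order imposes. The rest of your argument (envelope condition, ULLN, compactness, within-family identifiability) is routine and unobjectionable, but the consistency conclusion stands or falls on this unproven identity, which neither your proposal nor the paper establishes; closing it would require an additional assumption, for example that the copula family restricted to distributions compatible with all identified marginals is a singleton, or that $F_V$ itself satisfies the conditional independence implied by the stitching order.
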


As implied by Theorem \ref{th:copulaconverge}, it is possible to recover the probability distribution of $V$ even in the absence of a full rank leading term.

\subsection{Unknown Threshold, Known Distribution of Heterogeneity}

The precise specification of the threshold function is frequently beyond the grasp of researchers, a phenomenon attributable to a multitude of factors. For instance, the allocation of treatment assignment may be underpinned by a latent group formation process that takes place behind the scenes, such as the influence of social networks. In such circumstances, individuals are implicitly categorised into distinct groups based on unobserved characteristics or shared experiences, resulting in a latent group structure. It is also plausible that the threshold function is kept undisclosed and thus unobservable. For instance, in the context of a training program, the organiser may conceal their exact categorisation criteria from researchers. Under such circumstances, the identification of $Q(Z)$ becomes crucial for further analysis. 

To identify the threshold, knowledge of the distribution of the unobserved heterogeneity $V$ is necessary. We make the following assumption about the distribution of $V$:

\begin{assumption}[Distribution of Unobserved Heterogeneity] \label{assum:distribution}
  The unobserved heterogeneity $V$ follows a known distribution with differentiable cumulative distribution function $F_V$, and $F_V$ is increasing in each dimension $j\in J$.
\end{assumption}

Assumption \ref{assum:distribution} requires that $V$ is dense everywhere in probability measure. 
I write the probability density function for $V$ as $f_V$. 
Armed with knowledge about the distribution of $V$, we can pinpoint the threshold function through the conditional probability distribution of treatment assignment. For a fixed treatment $k$, $\Pr(D=k|Z)$ can be directly observed from the sample and is treated as a function of $Z$. The conditional independence of $V$ and $Z$ in Assumption \ref{assum:independence} yields
\begin{eqnarray*}
  \Pr(D=k|Z) &=& \Pr( d_k(V,Q(Z))=1 |Z)  \\
   &=& \int  d_k(v,Q(Z)) f_V(v) dv
\end{eqnarray*}
where the fact $\mathds{1}\{ d_k(v,Q(Z))=1\} = d_k(v,Q(Z))$ is applied in the above equation. 
By inserting the treatment assignment decomposition from Equation \eqref{decomposedk} into the probability distribution, we obtain:
\begin{eqnarray}
  \Pr(D=k|Z)  
  &=& \int    \sum_{l\in \mathcal{L}} c_l^k \prod_{j\in l}  S_j(v,Q(Z)) f_V(v) dv \nonumber \\
  &=& \sum_{l\in \mathcal{L}} c_l^k \int_0^{\alpha_{l1}(Z)} \cdots \int_{0}^{\alpha_{lJ}(Z)} f_V(v_1,...,v_J) dv_1 \cdots dv_J   \nonumber \\
   &=& \sum_{l\in \mathcal{L}} c_l^k   F_V\left(\alpha_{l1}(Z),...,\alpha_{lJ}(Z)\right)    \label{idQ:probexpress}
\end{eqnarray}

Here, each $\alpha_{lj}(Z)$, $j=1,...,J$ is a function of $Z$ defined as:
\begin{equation*}
\alpha_{lj}(Z) = 
\begin{cases}
1, & \text{if } j \notin l \\
Q_j(Z), & \text{if } j \in l
\end{cases}
\end{equation*}

Combining Equation \eqref{idQ:probexpress} for $k=1,...,T$ yields a system of equations: 
\begin{eqnarray}
\sum_{l\in \mathcal{L}} c_l^1   F_V\left(\alpha_{l1}(Z),...,\alpha_{lJ}(Z)\right) &=& \Pr(D=1|Z)  \nonumber \\
 &\vdots &   \nonumber \\
\sum_{l\in \mathcal{L}} c_l^T   F_V\left(\alpha_{l1}(Z),...,\alpha_{lJ}(Z)\right) &=& \Pr(D=T|Z)  \label{idenQ:system}
\end{eqnarray}

Denote the matrix of coefficients in system of Equations \ref{idenQ:system} as $\{ c_l^k \}$. 
The following theorem states the identification of threshold function:

\begin{theorem}\label{thm:idenofQ}
Under Assumptions \ref{assum:independence} - \ref{assumption:threshold_involvement} and \ref{assum:distribution}, the system of Equations \eqref{idenQ:system} point identifies the threshold function $\{Q_j(Z)\}_{j=1}^J$ when 
$J \leq \text{rank} \{ c_l^k \} $.
\end{theorem}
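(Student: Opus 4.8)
The plan is to fix a realization $Z=z$ and prove that the map carrying the threshold vector $(Q_1(z),\dots,Q_J(z))\in(0,1)^J$ to the observable probability vector $\big(\Pr(D=k\mid Z=z)\big)_{k=1}^{T}$ is injective. Collecting the quantities $G_l(Q)\equiv F_V\big(\alpha_{l1}(Z),\dots,\alpha_{lJ}(Z)\big)$ over all $l\in\mathcal{L}$ into a vector $\mathbf{G}(Q)$, the system \eqref{idenQ:system} reads $\mathbf{P}=C\,\mathbf{G}(Q)$, where $\mathbf{P}$ stacks the observed probabilities and $C=\{c_l^{k}\}$ is the coefficient matrix. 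Under Assumption \ref{assum:distribution} the distribution $F_V$ is known, so $\mathbf{G}$ is a known vector-valued function of $Q$ and the only unknowns are the $J$ thresholds; identifying $Q$ is thus equivalent to inverting the map $Q\mapsto C\,\mathbf{G}(Q)$.

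The first substantive step is to record what the singleton terms contribute. For $l=\{j\}$ one has $\alpha_{lj}=Q_j(Z)$ and $\alpha_{li}=1$ for $i\neq j$, whence $G_{\{j\}}(Q)=F_{V_j}(Q_j)$ is the marginal distribution of $V_j$ at $Q_j$. By Assumption \ref{assum:distribution} this marginal is strictly increasing, hence invertible, so knowing $G_{\{j\}}$ is equivalent to knowing $Q_j$. In particular the full map $Q\mapsto\mathbf{G}(Q)$ is globally injective, its image $M=\mathbf{G}\big((0,1)^J\big)$ is a $J$-dimensional manifold, and the theorem reduces to the claim that the linear map $C$ is injective on $M$.

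For local identification I would pass to the Jacobian $C\,D\mathbf{G}(Q)$ of the composite map. The entry of $D\mathbf{G}(Q)$ in row $l$ and column $j$ equals $\partial_j F_V(\alpha_l(Q))$ when $j\in l$ and vanishes otherwise; the $J$ rows indexed by singletons therefore form the diagonal matrix $\mathrm{diag}\big(f_{V_1}(Q_1),\dots,f_{V_J}(Q_J)\big)$ with strictly positive entries, so $D\mathbf{G}(Q)$ has full column rank $J$. It then suffices to show that $C$ stays injective on the $J$-dimensional column space of $D\mathbf{G}(Q)$, i.e.\ that some $J\times J$ minor of $C\,D\mathbf{G}(Q)$ is nonzero; the inverse function theorem then delivers local injectivity, and the hypothesis $J\le\mathrm{rank}\{c_l^{k}\}$ plays its natural role as the order condition ensuring that $C$ furnishes $J$ non-redundant equations. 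Global injectivity would then be obtained from the coordinatewise monotonicity of $F_V$ in Assumption \ref{assum:distribution}, which prevents two distinct threshold vectors from mapping to a common probability vector once the map is a local diffeomorphism throughout $(0,1)^J$.

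I expect the nonvanishing-minor step to be the principal obstacle, because it cannot follow from the rank hypothesis alone: with $|\mathcal{L}|=2^{J}-1$ terms, Sylvester's inequality applied to $C\,D\mathbf{G}$ yields only the vacuous bound $\mathrm{rank}(C)+\mathrm{rank}(D\mathbf{G})-|\mathcal{L}|$, so the structural restrictions on $C$ must be used. Concretely, I would invoke that $C$ arises from genuine indicator decompositions obeying Completeness (Assumption \ref{assum:complete}) and the leading-term property of Proposition \ref{prop3}: since every leading term contains an index not present in any other term, the treatments can be ordered by the rank of their leading terms so that a suitable selection of $J$ rows produces a composite Jacobian minor which, after factoring out the positive densities $f_{V_j}(Q_j)$, reduces to a strictly signed determinant. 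Verifying that such a selection is always available whenever $\mathrm{rank}\{c_l^{k}\}\ge J$ — and that the monotone structure closes the global step — is where the genuine work of the proof lies.
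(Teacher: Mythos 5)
Your proposal reproduces the paper's own first step: collect the quantities $F_V\big(\alpha_{l1}(Z),\dots,\alpha_{lJ}(Z)\big)$ over $l\in\mathcal{L}$ into a vector $\mathbf{G}(Q)$, observe that the singleton terms $l=\{j\}$ contribute the marginal CDF value $F_V(1,\dots,1,Q_j,1,\dots,1)$, and use the strict monotonicity in Assumption \ref{assum:distribution} to conclude that $Q$ is in bijection with $\mathbf{G}(Q)$. Up to that point you and the paper agree. The gap is in what remains: you correctly reduce the theorem to showing that the linear map $C=\{c_l^k\}$ is injective on the $J$-dimensional manifold $\mathbf{G}\big((0,1)^J\big)$, equivalently that some $J\times J$ minor of $C\,D\mathbf{G}(Q)$ is nonvanishing, but you then concede that this cannot be derived from the hypothesis $J\le\mathrm{rank}\{c_l^k\}$ alone and leave the verification open. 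A proof that ends by announcing where ``the genuine work of the proof lies'' is not a proof, so as an attempt this is incomplete.

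That said, your diagnosis of the obstacle is exactly right, and it is sharper than the paper's own treatment. The paper disposes of the same step with an order-counting assertion --- the number of independent equations is at least $J$, the number of unknowns, hence the system solves uniquely --- which is not valid for this nonlinear system: $\mathrm{rank}(C)\ge J$ delivers $J$ independent linear combinations of the $2^J-1$ quantities $G_l$, not the $J$ singleton values themselves. Indeed, the rank condition alone is insufficient. Take $J=2$, $V_1,V_2$ i.i.d.\ uniform, and treatments $d_1=S_1S_2$, $d_2=S_1+S_2-2S_1S_2$, $d_3=1-S_1-S_2+S_1S_2$ (``both'', ``exactly one'', ``neither''). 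The coefficient matrix has rank $2=J$ and all of Assumptions \ref{assum:independence}--\ref{assumption:threshold_involvement} and \ref{assum:distribution} hold, yet every observed probability is symmetric in $(q_1,q_2)$, so the threshold vectors $(Q_1(Z),Q_2(Z))$ and $(Q_2(Z),Q_1(Z))$ are observationally equivalent and point identification fails. So the nonvanishing-minor condition you flag cannot be established in general from the stated hypotheses: either the theorem needs a stronger assumption (e.g., that the singleton columns of $C$ are recoverable from its row space), or its conclusion must be weakened to identification up to the symmetries of the system. In short, your proposal fails to prove the statement, but for the legitimate reason that the statement outruns its hypothesis; the paper's proof contains the same gap without acknowledging it.
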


In most cases, the only constraint on the system arises from the completeness condition assumed in Assumption \ref{assum:complete}, which implies that the sum of the probabilities for all treatments equals one. However, in some cases, additional constraints may be imposed on the probability distribution of each assigned treatment due to practical considerations of the treatment process. For example, if the treatment probabilities are restricted by external factors such as policy guidelines or operational limitations, this could further reduce the rank of the matrix.

However, 
obtaining an explicit solution for $\{Q_j(Z)\}_{j=1}^J$ from this equations system through mathematical manipulations like differentiation or matrix operations is challenging, even though the form of $F_V$ is known. This is primarily due to the fact that Equation \eqref{idQ:probexpress} does not exhibit a separable form for $Q_j(Z)$ without assuming specific forms for the cumulative distribution function $F_V$. Nevertheless, it is possible to gain insights into the threshold function using a numerical approach.

%Introduce numerical method from this point on

Before delving into the specifics of the numerical methods employed, we first establish the loss function to be used throughout our numerical experiments. Let us consider a proposed threshold function $\bar{Q}(\cdot)=\left( \bar{Q}_1(\cdot),...,\bar{Q}_J(\cdot) \right)$, under which we can define a corresponding loss function. 

The loss function measures the discrepancy between our model's predictions and the actual observed data. For this purpose, we use the sum of the squared differences between the computed probabilities under the proposed threshold function and the observed probabilities. More specifically, the loss function for a single observation $Z$ is given by
\begin{equation}\label{eq:lossfunction1}
    loss(Z;\bar{Q})= \sum_{k=1}^T \left( 
    \sum_{l\in \mathcal{L}} c_l^k   F_V\left(\bar{\alpha}_{l1}(Z),...,\bar{\alpha}_{lJ}(Z)\right) - \Pr(D=k|Z)
    \right)^2
\end{equation}

Here, $\bar{\alpha}_{lj}(Z)$ is defined as:
\begin{equation*}
\bar{\alpha}_{lj}(Z) = 
\begin{cases}
1, & \text{if } j \notin l \\
\bar{Q}_j(Z), & \text{if } j \in l
\end{cases}
\end{equation*}

The loss function for a given $\bar{Q}$, across all observations of instruments $Z$, can thus be written as the sum of the individual losses:
\begin{equation}\label{eq:lossfunc2}
    Loss(\bar{Q})=\sum_{o=1}^{N_o}   loss(Z^o;\bar{Q})
\end{equation}

This loss function, capturing the aggregate discrepancy between our model's predictions under the proposed threshold function and the observed data, will guide our subsequent numerical analyses.

%Methos 1: Parametrization

To begin, we propose a parametric approximation of the true threshold function $Q$. This is realized by assuming a parametric form for $\bar{Q}(Z;\beta^Q)$, where $\beta^Q$ denotes the parameters. Formally, we have $\bar{Q}(Z;\beta^Q) = \left(\bar{Q}_1(Z;\beta^Q) ,..., \bar{Q}_J(Z;\beta^Q)\right)$. 

The approximation is performed using a set of basis functions $\{q_{jt_q}(\cdot)\}_{t_q=1}^{T_q}$, for each $j=1,...,J$. Here, $T_q \in \mathds{N}^+$ is the number of basis functions used for approximating each $\bar{Q}_j$. For the parameters, we let $\beta^Q_j=(\beta^Q_{j1},...,\beta^Q_{jT_q})\in \mathds{R}^{T_q}$, and $\beta^Q=(\beta^Q_1,...,\beta^Q_J)$ encompass the coefficients of the entire parametrization. 
Therefore, each $\bar{Q}_j$ is expressed as a linear combination of the basis functions, parameterized by $\beta^Q_{j}$, and can be written as follows:
\begin{equation*}
    \bar{Q}_j(\cdot;\beta^Q)=\sum_{t_q=1}^{T_q} \beta^Q_{jt_q} q_{jt_q}(\cdot)
\end{equation*}

By specifying $\bar{Q}$ in this way, we re-interpret the loss function as a function of the parameters $\beta^Q$. Consequently, finding an approximation to $Q(\cdot)$ is now equivalent to solving the following optimization problem:
\begin{equation*}
    \min_{\beta^Q} Loss( \bar{Q}(\cdot;\beta^Q) )
\end{equation*}

In solving the optimization problem, we can use algorithms like Gradient Descent or Newton's Method. This optimization yields an optimal set of parameters, denoted by $\hat{\beta}^Q$, which provides $\bar{Q}(\cdot;\hat{\beta}^Q)$ as an approximation to the true threshold function $Q(\cdot)$. The following theorem states the convergence of this approximation:

\begin{theorem}\label{thm:convergethreshold}
    Under Assumptions \ref{assum:independence} - \ref{assumption:threshold_involvement} and \ref{assum:distribution}, 
    if the parametric form $\bar{Q}(\cdot;\beta^Q)$ is correctly specified and $J \leq \text{rank} \{ c_l^k \}$, the estimator $\bar{Q}(\cdot;\hat{\beta}^Q)$ converges in probability to the true threshold function $Q(\cdot)$ as $N_0 \rightarrow \infty$.
\end{theorem}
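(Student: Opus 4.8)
The plan is to treat $\hat{\beta}^Q$ as an extremum (M-)estimator that minimizes the empirical criterion $\tfrac{1}{N_o}Loss(\bar{Q}(\cdot;\beta^Q))$ built from \eqref{eq:lossfunction1}--\eqref{eq:lossfunc2}, and to invoke the standard consistency argument for such estimators (in the spirit of Newey and McFadden). The four ingredients I would verify are: a well-defined population criterion toward which the sample criterion converges, identification of the true parameter as the unique minimizer of that population criterion, uniform convergence of the sample criterion over the parameter space, and enough regularity (continuity and compactness of the parameter space) to pass from pointwise to uniform control and thereby to consistency.

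First I would introduce the population criterion. Writing $m_k(Z;\bar{Q}) \equiv \sum_{l\in\mathcal{L}} c_l^k F_V(\bar{\alpha}_{l1}(Z),\ldots,\bar{\alpha}_{lJ}(Z)) - \Pr(D=k|Z)$, the feasible loss is the sample average of $\sum_{k=1}^T m_k(Z;\bar{Q})^2$, with the population objects $\Pr(D=k|Z)$ replaced by uniformly consistent estimates (e.g. a nonparametric regression of $\mathds{1}\{D=k\}$ on $Z$). By the law of large numbers this should converge to $\mathcal{Q}_\infty(\bar{Q}) \equiv E_Z\big[\sum_{k=1}^T m_k(Z;\bar{Q})^2\big]$, which is non-negative and, under the differentiability of $F_V$ in Assumption \ref{assum:distribution} together with continuity of the basis functions, continuous in $\beta^Q$.

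Second, and this is the crux of the identification step, I would show the true threshold is the unique minimizer. Correct specification furnishes a $\beta^Q_0$ with $\bar{Q}(\cdot;\beta^Q_0)=Q(\cdot)$; by the probability decomposition \eqref{idQ:probexpress} each $m_k(Z;Q)=0$ almost surely, so $\mathcal{Q}_\infty(\bar{Q}(\cdot;\beta^Q_0))=0$ attains the global minimum. Uniqueness is precisely where the rank condition $J\le\text{rank}\{c_l^k\}$ enters: by Theorem \ref{thm:idenofQ} the system \eqref{idenQ:system} point-identifies $\{Q_j(Z)\}_{j=1}^J$, so any $\bar{Q}$ with $\mathcal{Q}_\infty(\bar{Q})=0$ must agree with $Q$ on the support of $Z$, and, provided the parametrization is one-to-one in $\beta^Q$, this forces $\beta^Q=\beta^Q_0$. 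This upgrades the point identification of Theorem \ref{thm:idenofQ} into an identifiably-unique minimizer of $\mathcal{Q}_\infty$.

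The main obstacle, and where I would concentrate the effort, is the uniform convergence $\sup_{\beta^Q}\big|\tfrac{1}{N_o}Loss(\bar{Q}(\cdot;\beta^Q)) - \mathcal{Q}_\infty(\bar{Q}(\cdot;\beta^Q))\big| \to_p 0$ jointly with the well-separatedness of the minimum. The difficulty is twofold: the criterion depends on $\beta^Q$ through the highly nonlinear map $\beta^Q \mapsto F_V(\bar{\alpha}_l(Z))$, so I would impose compactness of the parameter space (a standard regularity condition, not stated explicitly in the excerpt) and a dominating envelope, then invoke a uniform law of large numbers via stochastic equicontinuity or a Glivenko--Cantelli class argument; and the preliminary estimate of $\Pr(D=k|Z)$ enters nonlinearly, so its uniform consistency must be propagated through the squared-residual structure. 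Once uniform convergence and the identifiably-unique minimum are established, the well-separatedness follows from continuity plus compactness ($\inf_{\|\beta^Q-\beta^Q_0\|\ge\varepsilon}\mathcal{Q}_\infty>0$ for each $\varepsilon>0$), the standard extremum-estimator conclusion delivers $\hat{\beta}^Q \to_p \beta^Q_0$, and continuity of $\beta^Q\mapsto\bar{Q}(\cdot;\beta^Q)$ transfers this to $\bar{Q}(\cdot;\hat{\beta}^Q)\to_p Q(\cdot)$ as $N_o\to\infty$, completing the argument.
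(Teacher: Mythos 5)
Your proposal is correct and follows essentially the same route as the paper's own proof: correct specification plus the rank condition (via Theorem \ref{thm:idenofQ}) makes the true threshold the global minimizer of the loss, and a law of large numbers links the empirical criterion \eqref{eq:lossfunc2} to its population counterpart, yielding consistency of the minimizer. If anything, your treatment is more careful than the paper's: the paper invokes only a pointwise WLLN and passes directly from convergence of the criterion to convergence of its minimizer, whereas you correctly flag that this step requires uniform convergence over a compact parameter space, a well-separated (identifiably unique) minimum, injectivity of the parametrization $\beta^Q \mapsto \bar{Q}(\cdot;\beta^Q)$, and propagation of the uniform consistency of the preliminary estimates of $\Pr(D=k|Z)$ --- none of which the paper's proof addresses explicitly.
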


Once the approximation is obtained, it will be used for further analysis in the subsequent sections of this paper.

\section{Identification of Marginal Treatment Response}\label{sec:identificationMTR}

In this section, I focus on identifying the marginal treatment response $E[Y_k|V=v]$. Since the identification of $Q(Z)$ has been addressed in Section \ref{ch:Q}, I will treat the threshold $Q(Z)$ as given and condition on $Q(Z)$ rather than $Z$ in the analysis.

Fixing a treatment $k$, we can observe $E[Y D_k | Q(Z)=q]$ from the sample data, where $q=(q_1,...,q_J) \in [0,1]^J$ is a realization of $Q(Z)$. By the model's construction, $Y D_k=Y_k D_k$. Thus, we can express:
\begin{align*}
  E[Y D_k | Q(Z)=q] &= E[Y_k D_k | Q(Z)=q] \\
  &= E[Y_k d_k(V, Q(Z)) | Q(Z)=q] \\
  &= E\left[ \left. E[Y_k d_k(V, Q(Z))|V, Q(Z)=q] \right| Q(Z)=q \right],
\end{align*}
where the last equality follows from the Law of Iterated Expectation.

Using the structure of conditional expectation and the independence of $Y_k$ and $V$ from $Z$, we have:
\begin{equation*}
    E[Y_k d_k(V, Q(Z))|V, Q(Z)=q] = E[Y_k | V] E[d_k(V, Q(Z))|V, Q(Z)=q].
\end{equation*}
Since $E[Y_k | V, Q(Z)=q] = E[Y_k | V]$ and $E[d_k(V, Q(Z))|V, Q(Z)=q] = d_k(V, q)$, it follows that:
\begin{align*}
    E[Y D_k | Q(Z)=q] &= E\left[ E[Y_k | V] d_k(V, q) \right| Q(Z)=q ] \\
    &= E\left[ E[Y_k | V] d_k(V, q) \right] \\
    &= \int d_k(v, q) E[Y_k | V=v] f_V(v) dv.
\end{align*}

Using the decomposition of $d_k(V, Q(Z))$ from Equation \eqref{decomposedk}, we can rewrite $d_k(V, q)$ as a sum of terms $l^1,...,l^N$, where $N \in \mathds{N}^+$ is the number of non-zero terms. Denote $l(q)$ as the set $\{ V: V_j < q_j, \forall j \in l \}$. We then have:
\begin{align}
    E[Y D_k | Q(Z)=q] &= \int \sum_{l \in \mathcal{L}} c_l^k \prod_{j \in l} S_j(v, q) E[Y_k | V=v] f_V(v) dv \notag \\
    &= \sum_{n=1}^N c_l^k \int_{l^n(q)} E[Y_k | V=v] f_V(v) dv. \label{mainleading}
\end{align}

Equation \eqref{mainleading} shows that the expected value $E[Y D_k | Q(Z)=q]$ is a sum of the marginal treatment responses integrated over the regions defined by $d_k(V, q)$. 

Unlike Equation \eqref{decompprob}, where $f_V(v)$ is independent of the treatment $k$, here $E[Y_k | V=v]$ varies with each treatment $k$. This adds complexity to the identification process because we need to account for the treatment-specific response.

To identify $E[Y_k | V=v]$ from observations, it is crucial to determine whether the leading term of $d_k(V, Q(Z))$ is full rank. This distinction is important and different from the discussion in Sections \ref{sec:fullrankleadingfv} and \ref{sec:nofullrankleadingfv}, where the focus was on the presence of at least one treatment with a full rank leading term. Here, we must consider each treatment $k$ separately. If a treatment does not have a full rank leading term, it falls into the category of no full rank leading term.

In this analysis, the variation of $E[Y_k | V=v]$ with different treatments means that we cannot combine equations across different treatments to increase the information available for identification, unlike in Section \ref{sec:nofullrankleadingfv}.

\subsection{Full Rank Leading Term}\label{sec:mtr:full}

When the treatment $k$ has a full rank leading term, we can follow the method described by \cite{lee2018identifying} to identify the marginal treatment response within our framework. This approach, similar to the one discussed in Section \ref{sec:fullrankleadingfv}, relies on the technique of taking derivatives to isolate $E[Y_k|V=v]$ from the summation and integration in Equation \eqref{mainleading}.

In Equation \eqref{mainleading}, the parameter $q$ determines the integration region through the set $l^n(q)$, for $n=1,...,N$. These regions expand or contract linearly with changes in $q$. Given that $f_V(v)$ is continuous by Assumptions \ref{continuousfV} and \ref{assum:distribution}, and $E[Y_k|V=v]$, for all $k=1,...,T$, is locally equicontinuous by Assumption \ref{assum:continuousmtr}, we can ensure the differentiability of Equation \eqref{mainleading} across all dimensions of $q$. Assumption \ref{interiorofQ} further guarantees that for any $q \in (0,1)^J$, Equation \eqref{mainleading} is well-defined within an open neighborhood of $q$. Therefore, we can differentiate Equation \eqref{mainleading} with respect to all components of $q$.\footnote{For detailed proof, see Section 6 of \cite{lee2018identifying}.}

Consider the full rank leading term of treatment $k$, denoted by $\tilde{l}$. When we differentiate $E[Y D_k | Q(Z)=q]$ with respect to all dimensions of $q=(q_1,...,q_J)$, all terms except the one involving $\tilde{l}$ become zero because they do not involve all dimensions of $q$. Thus, we have:
\begin{align*}
  \frac{\partial^J}{\partial q} E[Y D_k | Q(Z)=q] &= \frac{\partial^J}{\partial q} \left( c^k_{\tilde{l}} \int_{\tilde{l}(q)} E[Y_k | V=v] f_V(v) dv \right) \\
  &= \frac{\partial^J}{\partial q_1 \cdots \partial q_J} \int_0^{q_1} \cdots \int_0^{q_J} c_{\tilde{l}}^k E[Y_k | V=v] f_V(v) dv_1 \cdots dv_J \\
  &= c_{\tilde{l}}^k E[Y_k | V=q] f_V(q).
\end{align*}

From this derivative, we can identify the marginal treatment response as
\begin{equation}\label{eq:pointidentifiedMTRleesalanie}
  E[Y_k | V=v] = \frac{1}{c_{\tilde{l}}^k f_V(v)} \left. \frac{\partial^J \Pr(Y D_k = Y_k | Q(Z) = q)}{\partial q} \right|_{q=v} % \quad \text{a.e. } v \in (0,1)^J.
\end{equation}
almost everywhere over the entire support $v \in (0,1)^J$.

%一个重要的关于conditional的问题

%平时我们写的时候，要写$E[Y_k|V=v]$ instead of $E[Y_k|V]$，因为你在写$E[Y_k|V=v]$是需要有具体的v进入函数内的，参考上面的式子
%但是不需要conditional on Z，因为Q(Z)会被化成v

%而你在写E[Y D_k | Q(Z)=q]的时候，加上了D_k后，就要conditional on Q(Z)=q了
%我的E[Y|V=v]之中，Q(Z)会被化成v，但是E[Y D_k | Q(Z)=q]之中，没有v，只有q，因为v会被积分积掉

% A note in my notation:
% the notation,  i,  is used to denote the element
% So it should not be used again to denote the numbering of leading terms
% Need to fix this issue over all the draft

%[Note: I think I will continue using n, N to denote the nunbering of terms]  

\subsection{Not Full Rank Leading Term}\label{sec:notfullrankleadingterm}

When the treatment $k$ does not have a full rank leading term, identifying the marginal treatment response becomes challenging. Some dimensions of $V$ are not involved in the treatment assignment process, which prevents the direct identification of treatment effects. However, by connecting these dimensions with observations from other treatments, we can still achieve identification, although it will be set identification rather than point identification.

Following the structure of analysis in Section \ref{sec:mtr:full}, we can differentiate the dimensions of $v$ associated with each leading term of treatment $k$. This will generate a conditional marginal treatment response for each leading term.

Let the leading terms of treatment $k$ be $l^1, ..., l^{N_k}$, where $N_k \in \mathds{N}^+$ is the number of leading terms for treatment $k$. Fix a leading term $l^i$. Recall the notation introduced in Section \ref{sec:nofullrankleadingfv}. By Equation \eqref{mainleading}, differentiating $E[Y D_k | Q(Z)=q]$ with respect to $q_{I^+_{l^i}}$ yields:
\begin{equation}\label{mtr:partialderive}
  \frac{\partial^{|l^i|} E[Y D_k | Q(Z)=q]}{\partial q_{I^+_{l^i}}} = 
  c^k_{l^i} \int 
      E[Y_k | V_{I^+_{l^i}} = q_{I^+_{l^i}}, V_{I^-_{l^i}} = v_{I^-_{l^i}} ] f_V(q_{I^+_{l^i}}, v_{I^-_{l^i}}) dv_{I^-_{l^i}},
\end{equation}
where Proposition \ref{prop3} ensures that differentiating all elements in a leading term eliminates other terms. The conditional expectation of $Y_k$ given $V$ in the dimensions involved in leading term $l$ is called the conditional marginal treatment response:
\begin{equation}\label{cmtr}
    E[Y_k | V_{I^+_{l^i}} = v_{I^+_{l^i}}]
  \equiv \int E[Y_k | V=v] f_{I^-_{l^i} | I^+_{l^i}}(v_{I^-_{l^i}} | v_{I^+_{l^i}}) dv_{I^-_{l^i}},
\end{equation} 
where $f_{I^-_{l^i} | I^+_{l^i}}(v_{I^-_{l^i}} | v_{I^+_{l^i}})$ is the conditional probability distribution function of $V_{I^-_{l^i}}$ given $V_{I^+_{l^i}}$, defined as:
\begin{equation*}
    f_{I^-_{l^i} | I^+_{l^i}}(v_{I^-_{l^i}} | v_{I^+_{l^i}}) \equiv \frac{f_V(v)}{f_{I^+_{l^i}}(v_{I^+_{l^i}})},
\end{equation*}
where $f_{I^+_{l^i}}$ is the marginal distribution of $V_{I^+_{l^i}}$ defined in Equation \eqref{marginalPDF}. Therefore, Equation \eqref{mtr:partialderive} implies:
\begin{equation}\label{eq:point.iden.of.deriv}
  \frac{\partial^{|l^i|} E[Y D_k | Q(Z)=q]}{\partial q_{I^+_{l^i}}} = 
  c^k_{l^i} E[Y_k | V_{I^+_{l^i}} = q_{I^+_{l^i}}] f_{I^+_{l^i}}(q_{I^+_{l^i}}),
\end{equation}
which allows us to point identify the conditional marginal treatment response $E[Y_k | V_{I^+_{l^i}} = v_{I^+_{l^i}}]$ almost everywhere over its support. 

Equation \eqref{cmtr} provides a decomposition of the conditional marginal treatment response by integrating out the dimensions not in the leading term, $I^-_{l^i}$. Combining all the leading terms of treatment $k$, we obtain a system of equations:
\begin{eqnarray}
  E[Y_k | V_{I^+_{i}} = v_{I^+_{i}}]
  &=& \int E[Y_k | V=v] f_{I^-_{i} | I^+_{i}}(v_{I^-_{i}} | v_{I^+_{i}}) dv_{I^-_{i}},   
  \label{mtr:eqsystem}
\end{eqnarray}
almost everywhere for all $v_{I^+_{l^i}} \in [0,1]^{|l^i|}$, $i=1, ..., N_k$.

%No need to mention the special case here. Because the point identification is not guaranteed. Even in the copula case, it only guarantees the uniqueness of the copula form, instead of the joint distribution itself. 

%A special case arises when $\bigcup\limits_{i=1}^{N_k} I^+_{l^i}=\mathbf{J}$, despite none of the leading terms $l^1,...,l^{N_k}$ is full rank. This corresponds to example of treatment illustrated in Figure \ref{leadingrank2}, which does not possess a full rank leading term, but all dimensions are involved in the treatment determination mechanism. In this case, 

%It is impossible to identify the marginal treatment response without further knowledge on the data generating process. 

However, identifying $E[Y_k|V=v]$ from this equation system is challenging, especially when $\bigcup\limits_{i=1}^{N_k} I^+_{l^i} \subsetneqq \mathbf{J}$. This means some dimensions $j\in \mathbf{J}$ are never involved in the treatment determination mechanism for treatment $k$. Nevertheless, by assuming the ranking of the marginal treatment response with respect to the level of treatment, we can achieve set identification for $E[Y_k|V=v]$. This assumption is stated as follows:

\begin{assumption}[Ranked Treatment]\label{assum:ranking}
  For any two treatments $\kappa, \kappa' \in \{1,...,T\}$, $E[Y_{\kappa'}|V] \leq E[Y_{\kappa}|V]$ if $\kappa' < \kappa$.
\end{assumption}

This assumption is similar to the monotone treatment response assumption in \cite{manski1997monotone}. 
Intuitively, Assumption \ref{assum:ranking} means that, conditional on the unobserved heterogeneity, higher treatment levels are expected to generate better outcomes. 
This assumption is particularly plausible in applications where the treatment variable represents an ordered intensity. For example, in education studies, treatment levels might correspond to the number of years of schooling completed; in labor economics, they might capture the duration or intensity of a training program; in health economics, they may represent different dosage levels of a medical intervention. In such cases, it is often reasonable to expect that higher treatment intensities yield systematically stronger effects on outcomes on average, conditional on unobserved characteristics. 
%Although this assumption may not always hold universally, it provides a disciplined and interpretable way to impose structure in settings where full identification is not possible, and it aligns with many empirical patterns observed in the literature. 
\footnote{This ranked treatment assumption can be extended to accommodate other plausible empirical settings. For instance, if higher treatment intensities are expected to produce lower outcomes on average, the assumption remains compatible with the framework after reversing the direction of the inequality. Moreover, even in cases where treatments are not ordinal in a global sense, such as when individuals are assigned to different physical training programs based on multidimensional characteristics like height, endurance, or strength, it may still be reasonable to assume a personalized ranking of treatment effects. That is, for each individual, conditional on $X$ and $V$, there exists a known ordering over treatments based on expected effectiveness. As long as such an individual-specific ordering can be specified over all treatments, the analytical structure developed in this paper remains applicable, with minor adjustments to the integration domain or inequality direction. Hence, the assumption is stated in a common form with minimized loss of generality.}

Note that the ranking of the marginal treatment response here is different from the monotonicity of treatments in previous econometric studies on treatment effects, such as \cite{imbens1994identification} and \cite{heckman2018unordered}. In their work, monotonicity indicates that when the instrument changes, there cannot be two individuals moving their treatment assignments in opposite directions. In other words, if the instrument causes one individual to switch from treatment group $k$ to treatment group $k'$, it should not cause another individual to switch from treatment group $k'$ to treatment group $k$. In this paper, the ranking of treatments assumes an order of marginal treatment response across different treatment levels. Monotonicity is not required in my setting, as the flexibility and high-dimensionality of the hyper-rectangle setup naturally allow for individuals to move in opposite directions.

%Make up the corresponding part in section 1  ---- Nah, i think i should put it simple in Section 1 and explain there

Equipped with Assumption \ref{assum:ranking}, consider a treatment \(k' < k\), whose leading terms are denoted by \(l'^1, \ldots, l'^{N_{k'}}\), where \(N_{k'} \in \mathds{N}^+\) is the number of leading terms associated with treatment \(k'\). For any given leading term \(l'^{i'}\), Equation \eqref{cmtr} implies that
\begin{equation}\label{eq:ineq1}
    E[Y_{k'}|V_{I^+_{l'^{i'}}}=v_{I^+_{l'^{i'}}}] 
    \leq  \int E[Y_k|V=v] f_{I^-_{l'^{i'}}|I^+_{l'^{i'}}}(v_{I^-_{l'^{i'}}}|v_{I^+_{l'^{i'}}})  d v_{I^-_{l'^{i'}}},
\end{equation}
for almost every \(v_{I^+_{l'^{i'}}} \in [0,1]^{|l'^{i'}|}\), where \(i'=1,\ldots,N_{k'}\). The term \(E[Y_{k'}|V_{I^+_{l'^i}}=v_{I^+_{l'^i}}]\) is point identified from the data, analogous to Equation \eqref{eq:point.iden.of.deriv}. 

Similarly, for treatments satisfying \(k' > k\), an analogous inequality in the opposite direction holds. For each leading term \(l'^{i'}\) of treatment \(k'\), we have
\begin{equation}\label{eq:ineq2}
    E[Y_{k'}|V_{I^+_{l'^{i'}}}=v_{I^+_{l'^{i'}}}] 
    \geq  \int E[Y_k|V=v] f_{I^-_{l'^{i'}}|I^+_{l'^{i'}}}(v_{I^-_{l'^{i'}}}|v_{I^+_{l'^{i'}}})  d v_{I^-_{l'^{i'}}},
\end{equation}
for almost every \(v_{I^+_{l'^{i'}}} \in [0,1]^{|l'^{i'}|}\), with \(i' = 1, \ldots, N_{k'}\). 
For ease of exposition, I refer to the relationship in Equation \eqref{eq:ineq1} or \eqref{eq:ineq2} as the inequality constraint between treatment \(k\) and the leading term \(l'^{i'}\) of treatment \(k'\).

Given these constraints, we proceed to analyze the partial identification of marginal treatment responses. I first consider the identification of a single marginal treatment response and then extend the analysis to jointly characterize all marginal treatment responses across treatments.

\subsubsection{Identification of a Single Marginal Treatment Response}

Consider the identified set of a single marginal treatment response $E[Y_k|V=v]$. 
It is notable that the identification problem is no longer focusing on a single point $V=q$ but on a function over the interval $[0,1]^J$. The identified set for $E[Y_k|V=v]$ is then given by
\begin{eqnarray}
    \mathcal{I}^0_k = \{ E[Y_k|V=v]
   &:& \text{ Equation System \eqref{mtr:eqsystem} holds for all treatments } 1,...,T, \nonumber \\ 
   & & \text{ Equation System \eqref{eq:ineq1} holds for all treatments } k'<k, \nonumber \\
   & & \text{ Equation System \eqref{eq:ineq2} holds for all treatments } k<k'  \}   \label{identifiedset}
\end{eqnarray}
which is composed of a continuum of constraints for measurable, locally bounded, and locally equicontinuous function $E[Y_k|V=v]$ on $[0,1]^J$. 
The following theorem states that $\mathcal{I}^0_k$ is sharp.

%%%%%%%%%%%%修改： 我又改成包进去了
%注意一个事情，我这里没有要求说k的leading term不包含k'的leading term，也就是说，如果k这边自己把k'的leading term l'对应的元素给identify出来了导致l'没包含有效信息，在这里我是没有把它给eliminate出去的。
%因为尽管l'的不等式不包含有效信息（因为我们已经有了相关的等式），但不属于可以内部eliminate（通过两个constraint一个imply另一个）出去的，而是两个条件都给了咱们。所以我留下了这一部分。
%%%%%%%%%%%%修改： 我又改成包进去了

%%%%%%%%%%%%%%%%%   TO BE COMPLETED   %%%%%%%%%%%%%%%%%%%%%%%%
%%%%%%%%%%%%%%%%%%%%%%%%%%%%%%%%%%%%%%%%%%%%%%%%%%%%%%%%%%%%%%
%[A Sharpness assertion should be added here...]

%\begin{theorem}
%    Under ..., the identified set $\mathcal{I}_o$ is sharp.
%\end{theorem}
%%%%%%%%%%%%%%%%%%%%%%%%%%%%%%%%%%%%%%%%%%%%%%%%%%%%%%%%%%%%%%

% ============================
% Theorem: Marginal sharpness
% ============================
\begin{theorem}[Sharpness of the Identified Set]\label{thm:sharpness}
Under Assumptions \ref{assum:independence}-\ref{assumption:threshold_involvement} and \ref{assum:ranking},  $\mathcal I_k^0$ is sharp for the marginal treatment response $E[Y_k\mid V]$.
\end{theorem}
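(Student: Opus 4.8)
\section*{Proof proposal}

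The plan is to prove sharpness by establishing two inclusions between $\mathcal I_k^0$ and the set of marginal treatment responses that are consistent with both the observed distribution and all maintained assumptions. The necessity direction is immediate. For treatment $k$ itself the identity \eqref{cmtr} shows that the true $E[Y_k\mid V]$ satisfies the equality system \eqref{mtr:eqsystem}, whose right-hand conditional responses are pinned down by the data through \eqref{eq:point.iden.of.deriv}. For any $k'<k$, Assumption \ref{assum:ranking} gives $E[Y_{k'}\mid V=v]\le E[Y_k\mid V=v]$ pointwise; integrating both sides against $f_{I^-_{l'^{i'}}|I^+_{l'^{i'}}}$ and invoking \eqref{cmtr} for $k'$ produces exactly \eqref{eq:ineq1}, and the symmetric argument for $k'>k$ yields \eqref{eq:ineq2}. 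Hence the true response lies in $\mathcal I_k^0$.

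The substance is the sufficiency direction. Fix any candidate $m\in\mathcal I_k^0$. Since every observable of the form $E[YD_\kappa\mid Q(Z)=q]$ enters only through the conditional means $E[Y_\kappa\mid V=\cdot]$ — the remainder of the joint law of $(Y_1,\dots,Y_T)$ given $V$ being unrestricted — it suffices to exhibit a full profile $\tilde m_1,\dots,\tilde m_T$ with $\tilde m_k=m$ such that: (a) each $\tilde m_\kappa$ reproduces treatment $\kappa$'s observed conditional responses over all of its leading-term coordinate sets; (b) $\tilde m_1(v)\le\cdots\le\tilde m_T(v)$ for almost every $v$; and (c) each $\tilde m_\kappa$ is locally equicontinuous and valued in $(0,\bar Y)$. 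Given such a profile I would set $Y_\kappa=\tilde m_\kappa(V)+\varepsilon_\kappa$ with $\varepsilon_\kappa$ a conditionally mean-zero, $Z$-independent perturbation supported so that $Y_\kappa\in(0,\bar Y)$. The resulting DGP matches every observable and satisfies Assumptions \ref{assum:independence}, \ref{assum:continuousmtr}, \ref{assum:complete}, \ref{assumption:threshold_involvement} and \ref{assum:ranking}, so $m$ is rationalizable.

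I would construct the profile by induction outward from position $k$, using the true responses as a feasible scaffold. For a treatment $\kappa$ with a full-rank leading term the entire function $\tilde m_\kappa=E[Y_\kappa\mid V]$ is already point identified and therefore fixed; here the corresponding constraint in $\mathcal I_k^0$ degenerates to a pointwise inequality, because the integrated-out coordinate set is empty, so $\tilde m_\kappa$ is automatically on the correct side of $m$. For a treatment without a full-rank leading term the coordinates outside $\bigcup_i I^+_{l^i}$ are free, and only the conditional averages are pinned down. The key feasibility fact is that a function can be fitted pointwise below (for $\kappa<k$) or above (for $\kappa>k$) a given barrier while matching a prescribed conditional mean precisely when the barrier's conditional mean dominates, respectively is dominated by, the target mean — an intermediate-value and mass-reallocation argument carried out along each fiber $\{v_{I^+}=\text{const}\}$ — and these dominance relations are exactly the integrated inequalities \eqref{eq:ineq1}--\eqref{eq:ineq2} that membership in $\mathcal I_k^0$ supplies relative to $m$.

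The hard part, which I would treat most carefully, is to perform this fitting \emph{simultaneously} for all treatments so that the whole chain in (b) holds — not merely each $\tilde m_\kappa$ against $m$, but every pair among the remaining treatments — while honoring each treatment's several, overlapping leading-term averages and preserving equicontinuity and the bounds in $(0,\bar Y)$. I would organize this as a monotone measurable selection: proceeding away from $k$, fit each successive $\tilde m_\kappa$ against its already-constructed neighbor, bending the scaffold response only on the fibers where it crosses the barrier and reallocating the compensating mass over the free coordinates, so that the feasibility of each step again reduces to a conditional-mean dominance inherited from the previously established inequalities together with the slack left by the free coordinates. A final mollification restores smoothness while leaving all conditional averages and the ordering intact, and boundedness is retained because both $m$ and the scaffold responses are bounded. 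Executing the measurable selection and the mollification without disturbing the averaging constraints is where the technical effort concentrates.
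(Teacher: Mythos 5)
Your overall architecture is the same as the paper's: necessity is immediate, and sufficiency is proved by completing the candidate $m$ into a full ordered profile $(\tilde m_1,\dots,\tilde m_T)$ and then generating potential outcomes with the prescribed conditional means (the paper's Lemma~\ref{lem:feasible-completion} performs exactly your $Y_\kappa=\tilde m_\kappa(V)+\varepsilon_\kappa$ step, via conditional quantile functions $G_\kappa^{-1}(U_\kappa;V)$). The genuine gap is at the start of your sufficiency argument: you declare that it ``suffices'' for each $\tilde m_\kappa$ to reproduce treatment $\kappa$'s conditional responses on its leading-term coordinate sets, justifying this only by the remark that observables depend on the conditional law of $(Y_1,\dots,Y_T)$ given $V$ only through the conditional means. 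That remark disposes of higher moments, but it does not connect fiber averages to the data: the observable is $E[YD_\kappa\mid Q(Z)=q]=\int d_\kappa(v,q)\,\tilde m_\kappa(v)\,f_V(v)\,dv$, an integral over the whole region $\{d_\kappa(\cdot,q)=1\}$ for every $q$, and a priori it depends on features of $\tilde m_\kappa$ that the leading-term averages do not pin down. Closing this gap is precisely the role of the paper's Lemma~\ref{lem:lead-completeness}: differentiate both the model-implied and the observed moment with respect to each leading term's coordinates (Proposition~\ref{prop3} eliminates all other terms), match the mixed partials via \eqref{eq:point.iden.of.deriv}, and integrate back up from the common zero boundary values. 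Without this lemma or a substitute for it, ``matches the leading-term averages'' has not been shown to imply ``matches the data,'' and your sufficiency direction does not close.

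A second, smaller problem is your claim that, in the outward induction, feasibility of each step is ``inherited from the previously established inequalities.'' It is not: for $k-2<k-1<k$, fitting $\tilde m_{k-2}\le \tilde m_{k-1}$ with prescribed averages $m_{k-2,l''}$ over a leading term $l''$ of treatment $k-2$ requires $\int \tilde m_{k-1}\,f_{I^-_{l''}\mid I^+_{l''}}\,dv_{I^-_{l''}}\ \ge\ m_{k-2,l''}$, whereas what the earlier steps give you is $\tilde m_{k-1}\le m$ and, from $m\in\mathcal I_k^0$, $m_{k-2,l''}\le \int m\,f_{I^-_{l''}\mid I^+_{l''}}\,dv_{I^-_{l''}}$; both inequalities point the wrong way to deliver the needed dominance. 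The required lower bounds must instead be imposed on $\tilde m_{k-1}$ at the time it is constructed (they are consistent with the data, because the true ordered MTRs satisfy them), which adds constraints whose joint feasibility is exactly what still needs proof. The paper sidesteps this wrong-way chaining by anchoring each auxiliary $m_{k'}$ to data-identified envelopes of treatment $k'$'s conditional MTRs rather than to a previously constructed neighbor, though its own description of that construction is admittedly also brief.
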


Equation \eqref{identifiedset} defines a total of $\sum_{k=1}^T N_k$ constraints for the function $E[Y_k|V=v]$. However, not all of these constraints are active. We can simplify the identified set by eliminating slack constraints.

Specifically, consider the following two types of redundancy: 
First, suppose treatment $k$ has a leading term $l$, and treatment $k'$ has a leading term $l'$, with $l' \subseteq l$. In this case, the inequality constraint between $k$ and $l'$ is redundant. This is because the marginal treatment response can be point-identified conditional on $V_{I^+_l}$; through integration, one can also identify it conditional on $V_{I^+_{l'}}$, rendering the constraint involving $l'$ non-informative.

Second, consider three treatments $k'' < k' < k$, where $k''$ has a leading term $l''$ and $k'$ has a leading term $l'$ such that $l'' \subseteq l'$. I claim that the inequality constraint between $k''$ and $k$ via $l''$ can be eliminated. This follows from combining the inequality between $k''$ and $k'$ via $l''$ and that between $k'$ and $k$ via $l'$. Specifically:
\begin{eqnarray*}
    & & E[Y_{k''}|V_{I^+_{l''}}=v_{I^+_{l''}}] \\
    &\leq& \int E[Y_{k'}|V=v] f_{I^-_{l''}|I^+_{l''}}(v_{I^-_{l''}}|v_{I^+_{l''}})  d v_{I^-_{l''}} \\
    &=& \int \int E[Y_{k'}|V=v] 
    f_{I^-_{l'}|I^+_{l'}}(v_{I^-_{l'}}|v_{I^+_{l'}})
    f_{I^-_{l''}\backslash I^-_{l'}|I^-_{l'},I^+_{l''}} 
    (v_{I^-_{l''}\backslash I^-_{l'}|v_{I^-_{l'}},v_{I^+_{l''}}})
    d v_{I^-_{l''}\backslash I^-_{l'}}
    d v_{I^-_{l'}}  \\
    &=& \int E[Y_{k'}|V_{I^+_{l'}}=v_{I^+_{l'}}]
    f_{I^-_{l''}\backslash I^-_{l'}|I^-_{l'},I^+_{l''}} 
    (v_{I^-_{l''}\backslash I^-_{l'}}|v_{I^-_{l'}},v_{I^+_{l''}})
    d v_{I^-_{l''}\backslash I^-_{l'}} \\
    &\leq& \int \int E[Y_k|V=v] 
    f_{I^-_{l'}|I^+_{l'}}(v_{I^-_{l'}}|v_{I^+_{l'}})  
    f_{I^-_{l''}\backslash I^-_{l'}|I^-_{l'},I^+_{l''}} 
    (v_{I^-_{l''}\backslash I^-_{l'}}|v_{I^-_{l'}},v_{I^+_{l''}})
    d v_{I^-_{l'}}
    d v_{I^-_{l''}\backslash I^-_{l'}} \\
    &=& \int E[Y_{k}|V=v] f_{I^-_{l''}|I^+_{l''}}(v_{I^-_{l''}}|v_{I^+_{l''}})  d v_{I^-_{l''}} 
\end{eqnarray*}
Thus, the constraint between $k''$ and $k$ via $l''$ is implied by other constraints and can be removed. A similar argument applies when $k < k' < k''$ and $l'' \subseteq l'$. Note, however, that this logic does not extend to the cases where $l' \subseteq l''$, or where $k''<k<k'$.

Finally, after eliminating all such redundant inequalities, any remaining equation in \eqref{mtr:eqsystem} that involves only the leading terms removed in the above step can also be discarded, as they provide no additional identifying information.

Let $\mathcal{I}_k^1$ denote the simplified constraint set after this elimination. From the preceding analysis, it follows that $\mathcal{I}_k^1 = \mathcal{I}_k^0$.

\subsubsection{Joint Identification of Multiple Marginal Treatment Responses}

% 这个subsubsection的表达我还没有来润色，等把sharpness给弄完了再改

If we are going to jointly identify a set of marginal treatment responses $\{ E[Y_k|V=v] \}_{k\in K}$, where $K$ is a nonempty subset of $\{1,...,T\}$. 
Note that the joint identification will include internal/incur constraint between different marginal treatment responses, and we need to consider this constraint (interdependence introduced by pooling the marginal treatment responses together). 
Thus the identified set is given by
\begin{eqnarray}
    \mathcal I_K^0 = \{ 
    \{E[Y_k|V=v]\}_{k\in K}
    &:& 
    E[Y_k|V=v]\in \mathcal I_k^1, \ k\in K  \nonumber \\
    & &  E[Y_k|V=v] \leq E[Y_{k'}|V=v] \text{ if } k<k', \ k,k'\in K  \} \label{eq:jointidentset}
\end{eqnarray}
which satisfies both the constraint given by $\mathcal I_K^0$ in Equation \eqref{identifiedset} as well as the inter-treatment ranking constraints implied by Assumption \ref{assum:ranking}.

\begin{theorem}\label{thm:sharpness_multi}
Under Assumptions \ref{assum:independence}-\ref{assumption:threshold_involvement} and \ref{assum:ranking}, 
$\mathcal I_{K}^0$ is sharp for the collection $\{E[Y_k \mid V]\}_{k\in K}$.
\end{theorem}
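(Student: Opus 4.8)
The plan is to prove sharpness by establishing the two standard inclusions: that every collection produced by a data-consistent DGP satisfying all maintained assumptions lies in $\mathcal I_K^0$ (necessity), and that every element of $\mathcal I_K^0$ is realized by some such DGP (achievability). Necessity is immediate: for any admissible DGP, each coordinate $E[Y_k\mid V]$ with $k\in K$ is on its own a feasible single marginal treatment response, hence lies in $\mathcal I_k^1$ by Theorem \ref{thm:sharpness}; and the pairwise inequalities $E[Y_k\mid V]\le E[Y_{k'}\mid V]$ for $k<k'$ in $K$ hold pointwise by Assumption \ref{assum:ranking}. These are precisely the defining conditions of $\mathcal I_K^0$ in \eqref{eq:jointidentset}, so the true collection belongs to it.

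The substance is achievability. The first ingredient I would isolate is that the observables constrain only the marginal conditional means: since $YD_k=Y_kD_k$ and $D$ is a measurable function of $(V,Q(Z))$, every observable $E[YD_k\mid Q(Z)=q]=\int d_k(v,q)\,E[Y_k\mid V=v]\,f_V(v)\,dv$ depends on the joint law of $(Y_1,\dots,Y_T)$ only through each individual function $E[Y_k\mid V=\cdot]$. Consequently, given any family $(g_k)_{k=1}^T$ with $g_k(v)\in(0,\bar Y)$, I can build potential outcomes that are conditionally independent given $V$, each $Y_k\mid V=v$ having an arbitrary distribution on $(0,\bar Y)$ with mean $g_k(v)$ (for instance a two-point law), while holding $f_V$ and $Q(Z)$ at their known or identified values. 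Such a DGP reproduces all observables and satisfies Assumptions \ref{assum:independence}--\ref{assumption:threshold_involvement}, and it satisfies Assumption \ref{assum:ranking} exactly when $g_1\le g_2\le\cdots\le g_T$ holds pointwise. Thus achievability reduces to a deterministic extension problem: given $(g_k)_{k\in K}\in\mathcal I_K^0$, extend it to a fully ranked chain $(g_k)_{k=1}^T$ with every $g_k\in\mathcal I_k^1$.

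To carry out the extension I would treat $K=\{k_1<\cdots<k_m\}$ as fixed anchors and fill in each gap and tail, distinguishing two regimes for an unselected treatment $k$ lying between consecutive anchors $k^-,k^+\in K$. If $k$ has a full-rank leading term, then $\mathcal I_k^1=\{g_k^{\ast}\}$ is a singleton, and the inequality constraints recorded in $\mathcal I_{k^-}^1$ and $\mathcal I_{k^+}^1$ against treatment $k$ are already \emph{pointwise}, because the integration in \eqref{eq:ineq1}--\eqref{eq:ineq2} is then over an empty set of dimensions; hence membership of the neighbors forces $g_{k^-}\le g_k^{\ast}\le g_{k^+}$ automatically, and more generally any $g\in\mathcal I_k^1$ already respects pointwise inequalities against full-rank treatments. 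If $k$ has no full-rank leading term, the equalities in \eqref{mtr:eqsystem} only pin down the conditional means of $g_k$ over the leading-term dimensions $\bigcup_i I^+_{l^i}$, leaving the values over the remaining dimensions free; I would use this freedom to interpolate pointwise between $g_{k^-}$ and $g_{k^+}$ (one-sided at the tails) while preserving the prescribed conditional means, invoking the convexity of $\mathcal I_k^1$ (it is cut out by linear equalities and inequalities) and the fact that the true MTRs furnish one feasible ranked chain to keep the interpolant inside $\mathcal I_k^1$.

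The main obstacle is this last point: showing the integrated conditional-mean constraints defining $\mathcal I_k^1$ are always compatible with a genuine pointwise sandwiching between the chosen neighbors, simultaneously for all unselected treatments. I expect to handle it by constructing the extension coordinate-by-coordinate in increasing order of $k$, maintaining the invariant that the partial chain is pointwise monotone and data-consistent, and by using the redundancy eliminations behind $\mathcal I_k^1=\mathcal I_k^0$ to reduce the active inequalities to those against full-rank neighbors (which are pointwise and thus free) and against adjacent non-full-rank neighbors (which leave enough free dimensions to interpolate). Once a fully ranked, data-consistent chain $(g_k)_{k=1}^T$ is obtained, the marginal construction of the second paragraph yields an admissible DGP whose projection onto $K$ is exactly the prescribed $(g_k)_{k\in K}$, establishing the achievability inclusion and hence the sharpness of $\mathcal I_K^0$.
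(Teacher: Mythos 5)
Your proposal follows essentially the same route as the paper's proof: achievability is obtained by realizing any prescribed conditional-mean functions through conditionally independent outcome distributions given $V$ with the right means (exactly the two-point construction in the paper's Lemma \ref{lem:feasible-completion}), after completing the tuple on $K$ to a ranked, data-consistent family over all $T$ treatments. The extension step you flag as the ``main obstacle'' is precisely the completion that the paper's own proof leaves implicit---it enters as a premise of Lemma \ref{lem:feasible-completion} and is dispatched with only a parenthetical remark in the proof of Theorem \ref{thm:sharpness}---so your version, which distinguishes full-rank fill-ins (where the cross-treatment constraints are pointwise and automatic) from non-full-rank ones (interpolation in the free dimensions using convexity of $\mathcal{I}_k^1$), is if anything more explicit than the paper's.
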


There are also redundant constraints in Equation \eqref{eq:jointidentset}. 
First, if $k,k'\in K$, the inequality constraints between treatment $k$ and leading terms of $k'$ are redundant, since they are directly implied by the equality constraint of leading terms of $k'$ and the ranking constraint, and the same for the constraints between treatment $k'$ and leading terms of $k$. 
Second, we denote $\overline k_K \equiv \min\{\kappa\in K:\kappa> k \}$ and $\underline k_K \equiv \max\{\kappa\in K:\kappa< k \}$, 
if $k\notin K$, the inequality constraints between any treatment $k'\in K\backslash\{\overline k_K, \underline k_K \}$ and leading terms of $k$ are redundant, since they are implied by the constraint between treatment $\overline k_K$ or $\underline k_K$ with leading terms of $k$ and the ranking constraint. 
%However, this is not true if we consider treatment $k$ and the leading terms of $k'$. 
%因为小k不在大K之中，所以你不会考虑到这个不等号的关系

Similarly, we denote the identified set of marginal treatment responses after eliminating the redundant constraints as $\mathcal I_K^1$, thus $\mathcal I_K^1=\mathcal I_K^0$.

% By combining these conditions, \( \mathcal{I}^0 \) provides a sharp identified set for the joint MTRs across all treatment levels, capturing the full extent of information available under the model’s assumptions.
% 需要 证明 sharpness

\subsubsection{Implementation}\label{sec:implement}

%discuss the loss from infinite space to finite space basis
%do not have to go super deep... they are in Xiaohong's handbook

%CONSIDER A QUESTION: if this is a two-step estimation by first estimating distribution function, then estimate theta, will there be a two-step convergence problem?
% Nahhhhh -  We do not need to consider this.  Not relevant with paper

To characterize the identified set for marginal treatment responses $\{E[Y_k|V=v]\}_{k\in K}$, 
where $K$ may be a singleton or a set with multiple elements, I adopt a sieve-based parametric decomposition of each function $E[Y_k|V=v]$ with respect to $v = (v_1,...,v_J)$. The sieve method transforms the original infinite-dimensional problem into a finite-dimensional one, reducing computational complexity and improving tractability. However, this approximation introduces bias due to the limited expressiveness of finite basis expansions, and may inflate variance if the number of basis functions is chosen inappropriately.\footnote{See \cite{chen2007handbook} for a comprehensive treatment of sieve estimation.} 
\begin{equation*}
  E[Y_k|V=v]= \sum_{t=1}^{T_k} \theta_{kt}b_{kt}(v)
\end{equation*}
where $\{b_{kt}(\cdot)\}_{t=1}^{T_k}$ are known basis functions from $[0,1]^J$ to $\mathds{R}$, $T_k\in \mathds{N}^+$ is the number of basis functions, and $\theta_k \equiv (\theta_{k1},...,\theta_{kT_k}) \in \mathds{R}^{T_k}$ parameterizes the function $E[Y_k|V=v]$ for each $k\in K$.

We then define the feasible set for $\theta_K \equiv \{\theta_k\}_{k\in K}$ as:
\begin{equation}  \label{feasibletheta}
  \Theta_K=\left\{ \theta_K :   \left\{
  \sum_{t=1}^{T_k} \theta_{kt}b_{kt}(v)   \right\}_{k\in K}
  \in \mathcal{I}^1_K  \right\}
\end{equation}

This parameterization allows us to impose the identification constraints in $\mathcal{I}^1_K$ directly on the coefficients $\theta_K$, converting the infinite-dimensional constraint system into a finite-dimensional one. To implement this, we sample a finite set of $v$ values from its distribution. These sampled values are used to approximate the constraints in the system, enabling estimation of bounds for $\theta_K$.

Computing the feasible set $\Theta_K$ may be computationally demanding due to the potentially large number of constraints. To improve numerical feasibility and avoid empty identified sets, researchers can introduce slack terms into both equality and inequality constraints, which allow for small tolerances. Monte Carlo simulation methods are particularly useful in this setting. By repeatedly sampling $v$ and solving the associated inequality systems, we can numerically characterize $\Theta_K$.

It is important to note that the use of sieve approximation may result in a loss of accuracy of the identified set $\mathcal{M}_K$ because the true function $E[Y_k|V=v]$ may not lie exactly within the sieve space. This approximation error diminishes as the sieve space becomes richer, but in finite samples, the resulting $\mathcal{M}_K$ only approximates the identified set. Nevertheless, this approach provides a tractable and informative description of the treatment effects of interest.

The final form of the feasible identified set for $\{E[Y_k|V=v]\}_{k\in K}$ is:
\begin{equation}\label{mtrfeasibleset}
  \mathcal{M}_K \equiv \left\{ \left\{ \sum_{t=1}^{T_k} \theta_{kt}b_{kt}(v) \right\} : \theta_K \in \Theta_K  \right\}
\end{equation}
Equation \eqref{mtrfeasibleset} thus describes a feasible approximation of the identified set for marginal treatment responses for treatments $k \in K$, incorporating both model-based restrictions and computational considerations.

\section{Various Treatment Effects}\label{sec:various}

Identifying the marginal treatment response is essential, as it provides the foundation for defining a wide of treatment effects. In my multi-valued treatment setting, when restricting attention to comparisons between two treatment levels, one can define treatment effects analogous to those in binary treatment models. As shown in \cite{mogstad2018using}, these binary-type treatment effects can be expressed in terms of the marginal treatment response. 
Beyond such pairwise comparisons, the multi-valued treatment structure allows for the definition of more general treatment effects involving multiple treatment levels. Examples include:
\begin{itemize}
  \item \textbf{Marginal Treatment Effect} between treatments $k_1$ and $k_2$ conditional on $V$: 
  \begin{equation*}
      E[Y_{k_2}-Y_{k_1}|V] = E[Y_{k_2}|V] - E[Y_{k_1}|V]
  \end{equation*}

  \item \textbf{Average Treatment Effect} between treatments $k_1$ and $k_2$: 
  \begin{equation*}
      E[Y_{k_2}-Y_{k_1}] = \int (E[Y_{k_2}|V=v] - E[Y_{k_1}|V=v]) f_V(v) dv
  \end{equation*}

  \item \textbf{Average Treatment Effect on the Treated} between treatment $k_1$ and $k_2$ for group $D = k_3$, where $k_3 \in \{1,\dots,T\}$:\footnote{By Assumption \ref{assum:independence}, $E[Y_{k_2}-Y_{k_1}|D=k_3] = E[Y_{k_2}-Y_{k_1}|D=k_3,Z]$ for any instrument $Z$.}
  \begin{equation*}
      E[Y_{k_2}-Y_{k_1}|D=k_3] =
      \int (E[Y_{k_2}|V=v] - E[Y_{k_1}|V=v])
      \frac{d_{k_3}(v,Q(Z)) f_V(v)}{\Pr(D=k_3|Q(Z))} dv
  \end{equation*}

  \item \textbf{Local Average Treatment Effect} when instruments shift from $Z$ to $Z'$, with treatment changing from $k_1$ to $k_2$: 
  \begin{equation*}
   \frac{\int (E[Y_{k_2}|V=v] - E[Y_{k_1}|V=v]) d_{k_2}(v,Q(Z')) d_{k_1}(v,Q(Z)) f_V(v) dv}
   {\int d_{k_2}(v,Q(Z')) d_{k_1}(v,Q(Z)) f_V(v) dv}
  \end{equation*}
  provided the denominator is nonzero.

  \item \textbf{Policy Relevant Treatment Effect (PRTE)}: Consider a policy that changes the assignment mechanism from $(\mathbf{d}, Q)$ to $(\mathbf{d}', Q')$, where $\mathbf{d} = (d_1,\dots,d_T)$, and $\mathbf{d}' = (d'_1,\dots,d'_T)$ denotes the post-policy mechanism similarly, with $d_k'$ defined analogously to $d_k$ as Equation \eqref{decomposedk}. The PRTE conditional on $V$ and $Z$ is
  \begin{equation*}
      \sum_{k=1}^T (d'_k(V,Q'(Z)) - d_k(V,Q(Z))) E[Y_k|V]
  \end{equation*}
  and the \textbf{Average Policy Relevant Treatment Effect (APRTE)} conditional on $Z$ is
  \begin{equation} \label{eq:APRTE}
      \sum_{k=1}^T \int (d'_k(v,Q'(Z)) - d_k(v,Q(Z))) E[Y_k|V=v] f_V(v) dv
  \end{equation}
\end{itemize}

These treatment effects share a common structure: they are linear functionals of the marginal treatment responses. Accordingly, their identification reduces to identifying $\{E[Y_k|V=v]\}_{k\in K}$ for a subset $K \subseteq \{1,\dots,T\}$. Letting $G(\{E[Y_k|V=v]\}_{k\in K})$ denote any such treatment effect, the identified set is
\begin{equation}\label{eq:variousTE}
    \left\{ G(\{E[Y_k|V=v]\}_{k\in K}) : \{E[Y_k|V=v]\}_{k\in K} \in \mathcal{I}_K^1 \right\}
\end{equation}
This set can be computed using the same procedures introduced in Section \ref{sec:implement}. 
Although these treatment effects may not be point identified, the identified set in Equation \eqref{eq:variousTE} provides informative bounds that can guide empirical evaluation and policy analysis.

\section{Test on Policy Relevant Treatment Effect}\label{sec:test}
%(Possibly) Best Assignment Mechanism?  -- 暂时不用

%讨论一下之前的假设和overlap的区别，还有和unconfoundness的区别
% 或者换言之，讨论这篇文章是否满足overlap的假设和unconfoundness的假设
% (unconfoundness放在前面当注释了, overlap我感觉不需要讨论，已经在假设中说明了)

%By contrast, the \emph{overlap} condition is always satisfied here. The density $f_V$ is assumed strictly positive on $(0,1)^d$ (Assumption~\ref{continuousfV}), and each threshold function $Q_j(Z)$ takes values in $[0,1]$. Therefore, for any realization of the instruments $Z$ and observed covariates $X$, the probability of receiving each treatment $k$ is strictly positive. In other words, the support of $V$ and the range of the threshold functions ensure that all treatment regions have positive measure, even though treatment is deterministic given $(V,Z)$.

%Our identification strategy thus departs from selection-on-observables: it relies on the instrumental variation in $Z$ together with the structural hyper-rectangle threshold-crossing model and the ranked-treatment assumption, not on unconfoundedness.

% 这篇文章不需要传统的monotonicity的假设，这个也在ranked哪里补充一下吧。（已经说了）

%the difference of $W(\mathbf{d},Q)$ before and after the policy implementation, denoted as 
%$ \Delta W \equiv  W(\mathbf{d}',Q') - W(\mathbf{d},Q)$, 

%这里我们不能用bootstrap，因为会破坏独立性，不能用来做假设检验

Equation \eqref{eq:APRTE} defines the APRTE when a policy changes the treatment assignment mechanism from $(\mathbf{d}, Q)$ to $(\mathbf{d}', Q')$, conditional on $Z$. 
To evaluate the aggregate welfare impact of such policy changes, I define the Gross Policy Relevant Treatment Effect (GPRTE) based on $N_o$ observed units as:
\begin{equation} \label{eq:GPRTE}
    \sum_{o=1}^{N_o} \omega_o \sum_{k=1}^T \int \left(d'_k(v, Q'(Z^o)) - d_k(v, Q(Z^o))\right) E[Y_k | V = v] f_V(v) \, dv,
\end{equation}
where $\omega_o \geq 0$ denotes the weight assigned to observation $o$, reflecting policy makers' welfare objectives. Without loss of generality, I normalize $\sum_{o=1}^{N_o} \omega_o = 1$. A common choice is $\omega_o = 1/N_o$.

Such scenarios frequently arise in policymaking contexts. For example, in labor training programs as described in Section \ref{sec:intro}, when individuals are assigned to different training contents, a policy reform may replace any discretionary assignment with a talent-based mechanism that depends more systematically on individual-level data. 
Similarly, in education, students may receive varying levels of instructional support under practices, while a new policy might implement a standardized test-based allocation rule. In both examples, the policy shifts the assignment mechanism and the resulting effect on social outcomes is captured by the GPRTE.

A natural question arises as to whether the proposed policy improves social outcomes. To address this, I develop a framework to test whether the GPRTE is statistically different from zero. For notational convenience, define the individual-level contribution:
\begin{equation*}
   \Delta \mu^o (\mathbf{d}, Q, \mathbf{Z}^{N_o})
   \equiv  
   \sum_{k=1}^T \int \left(d'_k(v, Q'(Z^o)) - d_k(v, Q(Z^o))\right) E[Y_k | V = v] f_V(v) \, dv,
\end{equation*}
where $\mathbf{Z}^{N_o} \equiv (Z^1, \ldots, Z^{N_o})$. Then the GPRTE can be expressed as
\[
   \Delta W \equiv \sum_{o=1}^{N_o} \omega_o \Delta \mu^o (\mathbf{d}, Q, \mathbf{Z}^{N_o}),
\]
where the dependence of $\Delta W$ on $\mathbf{d}', Q', \mathbf{d}, Q$, and $\mathbf{Z}^{N_o}$ is suppressed for simplicity.

A key challenge is that the treatment effects $E[Y_k | V = v]$ may not be point identified, as discussed in Section~\ref{sec:identificationMTR}. Therefore, I consider two cases, when the GPRTE is point identified and when it is set identified, and develop appropriate testing methods for each.

\subsection{Test for Point Identified Treatment Effect}

I first consider the case in which the GPRTE is point identified. This occurs when all treatment levels involved in the policy change, both before and after implementation, have full-rank leading terms. Under this condition, the marginal treatment responses for the relevant treatments are point identified, and so is the GPRTE defined in Equation~\eqref{eq:GPRTE}.

To assess the impact of the policy, the hypothesis testing problem is specified as:
\begin{eqnarray*}
    H_0: & \quad & \Delta W = 0  \\
    H_1: & \quad & \Delta W \neq 0
\end{eqnarray*}
where $\Delta W$ denotes the GPRTE, representing the aggregate effect of the policy intervention on social outcomes.

The first step is to compute the observed GPRTE from the data. The following algorithm outlines this procedure:

\begin{algorithm}[Computation of Point Identified GPRTE]\label{alg:GPRTE}
Given observations $ (Y^o, D^o, Z^o, X^o)$ for $o = 1, ..., N_o$ and the conditional distribution of unobserved heterogeneity $f(V)$, proceed as follows:
\begin{enumerate}[label=Step \arabic*:, align=left]
  \item For each observation $o = 1, ..., N_o$, independently draw \(M\) samples from the distribution of \(V\) conditional on $Z^o$, denoted by \(v_1^o, v_2^o, ..., v_M^o\). For each draw \(v_m^o\), compute:
    \begin{itemize}
        \item The treatment assignment under the baseline and counterfactual policies:
        \[
        \bar D_m^o = d(v_m^o, Q(Z^o)), \quad 
        \bar D'^o_m = d'(v_m^o, Q'(Z^o)).
        \]
        \item The corresponding marginal treatment responses: \footnote{Note that these expressions depend on $Z^o$ because the instruments enter the treatment assignment rule. For example,
        \[
        E[Y_{\bar D_m^o} | V = v_m^o, Z^o] 
        = \sum_{k=1}^T d_k(v_m^o, Q(Z^o)) E[Y_k | V = v_m^o],
        \]
        and similarly for the counterfactual expression.}
        \[
        E[Y_{\bar D_m^o} | V = v_m^o, Z^o], \quad 
        E[Y_{\bar D'^o_m} | V = v_m^o, Z^o].
        \]
    \end{itemize}
    Average over $m = 1, ..., M$ to obtain the individual-level expected outcomes before and after the policy:
    \begin{equation*}
        \mu'^o_M = \frac{1}{M} \sum_{m=1}^M E[Y_{\bar D'^o_m} | V = v_m^o, Z^o], \quad
        \mu^o_M = \frac{1}{M} \sum_{m=1}^M E[Y_{\bar D_m^o} | V = v_m^o, Z^o].
    \end{equation*}
    Define the difference as:
    \[
    \Delta\mu^o_M \equiv \mu'^o_M - \mu^o_M.
    \]
  \item Aggregate across observations using policy weights $\omega_o$ to compute the observed GPRTE:
  \begin{equation}\label{eq:observedGPRTE}
      \Delta W_{N_o} = \sum_{o=1}^{N_o} \omega_o \, \Delta\mu^o_M,
  \end{equation}
  which serves as the point estimate of the gross policy-relevant treatment effect.
\end{enumerate}
\end{algorithm}

%不同的Z带来的影响天壤之别，又很难进行归一化处理
%Z的分布又不知道。
%plug in Z（实质上也是conditional on Z）之后，就只有V的分布，可以算期望 方差  但是不同的Z对应不同的期望和方差，此时有具体的分布！！根据V算出来分布

% （不需要了）需要补充假设一下Z的独立性
% (done)补充一下MTR期望和方差是有界的

%我想一下，这里Z^o是观测到的，在估计中，所有观测到的变量，都是要conditional on的
%那么既然是在conditional on Z^o（也就是plug in了value）的情况下去算分布，那么也就不需要Z^o之间相互独立的性质。因为我们根本就不会across Z^o去推导其分布。

Consider the case where $N_o$ is fixed and $M \rightarrow \infty$. For a given observation $o$, and conditional on $Z^o$, the quantity 
\[
E[Y_{\bar{D}'^o_m} | V = v_m^o, Z^o] - E[Y_{\bar{D}_m^o} | V = v_m^o, Z^o]
=
\sum_{k=1}^T (d'_k(v_m^o,Q'(Z^o))-d_k(v_m^o,Q(Z^o))) E[Y_{k}|V=v_m^o]
\]
is i.i.d.\ across $m = 1, \dots, M$, with bounded mean and variance. This follows because the treatment assignment indicators $d_k(v_m^o, Q(Z^o))$ and $d'_k(v_m^o, Q'(Z^o))$ are binary and mutually exclusive across $k = 1, \dots, T$, and the marginal treatment responses $E[Y_k|V=v]$ are bounded. Hence, their difference is a bounded weighted sum and has finite conditional expectation and variance. 
%By the Central Limit Theorem, it follows that for large $M$, the sample average
%\[
%\Delta \mu^o_M = \frac{1}{M} \sum_{m=1}^M \left(E[Y_{\bar D'^o_m}|V=v_m^o, Z^o] - E[Y_{\bar D_m^o}|V=v_m^o, Z^o] \right)
%\]
%is approximately normally distributed conditional on $Z^o$.

Now, consider the observed GPRTE in Equation~\eqref{eq:observedGPRTE}. To proceed, we first verify that the conditional expectation of $\Delta W_{N_o}$ equals the true quantity $\Delta W$:

\begin{lemma}\label{lemma:unbiased_GPRTE}
The computed $\Delta W_{N_o}$ in Equation~\eqref{eq:observedGPRTE} is an unbiased estimator of the GPRTE defined in Equation~\eqref{eq:GPRTE}. Specifically,
\begin{equation*}
    \mathbb{E}[\Delta W_{N_o} \mid \mathbf{Z}] = \Delta W.
\end{equation*}
\end{lemma}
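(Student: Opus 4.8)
The plan is to exploit the fact that, once we condition on the instruments $\mathbf{Z}$, the only remaining source of randomness in $\Delta W_{N_o}$ is the Monte Carlo draws $\{v_m^o\}$, whereas the marginal treatment responses $E[Y_k\mid V=v]$ and the assignment indicators $d_k(\cdot,\cdot),d'_k(\cdot,\cdot)$ are deterministic objects. Thus the statement reduces to showing that the sample average over the draws is an unbiased Monte Carlo estimate of the integral defining each individual contribution $\Delta\mu^o$. First I would apply linearity of conditional expectation to pass $\mathbb{E}[\cdot\mid\mathbf{Z}]$ through the weighted sum $\sum_{o=1}^{N_o}\omega_o$ and then through the average $\tfrac{1}{M}\sum_{m=1}^M$, reducing the problem to computing $\mathbb{E}\big[E[Y_{\bar D'^o_m}\mid V=v_m^o,Z^o]-E[Y_{\bar D_m^o}\mid V=v_m^o,Z^o]\mid\mathbf{Z}\big]$ for a single generic draw $v_m^o$.

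Next I would use the footnote identity to rewrite the bracketed difference explicitly as $\sum_{k=1}^T\big(d'_k(v_m^o,Q'(Z^o))-d_k(v_m^o,Q(Z^o))\big)E[Y_k\mid V=v_m^o]$, which is a deterministic function of the draw $v_m^o$ given $Z^o$. The crucial step is to identify the sampling distribution of $v_m^o$: by construction it is drawn from the law of $V$ conditional on $Z^o$, and Assumption \ref{assum:independence} (conditional independence of $V$ and $Z$) guarantees that this conditional law coincides with the marginal density $f_V$. Hence the conditional expectation over the draw equals $\int \sum_{k=1}^T\big(d'_k(v,Q'(Z^o))-d_k(v,Q(Z^o))\big)E[Y_k\mid V=v]\,f_V(v)\,dv$, which is exactly $\Delta\mu^o(\mathbf{d},Q,\mathbf{Z}^{N_o})$.

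Finally I would reassemble the pieces: since the per-draw conditional expectation does not depend on $m$, averaging over $m=1,\dots,M$ leaves $\mathbb{E}[\Delta\mu^o_M\mid\mathbf{Z}]=\Delta\mu^o$, and summing with the weights $\omega_o$ yields $\mathbb{E}[\Delta W_{N_o}\mid\mathbf{Z}]=\sum_{o=1}^{N_o}\omega_o\,\Delta\mu^o=\Delta W$. The computation requires Fubini to interchange the draw-expectation with the finite sum over $k$; this is justified because $Y$ is bounded ($0<Y<\bar Y$), so $E[Y_k\mid V=v]$ is bounded and the $d_k,d'_k$ are $\{0,1\}$-valued, making the integrand integrable. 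The only point requiring real care — and the one I would flag as the main, though minor, obstacle — is the correct identification of the sampling density as $f_V$ via Assumption \ref{assum:independence}; everything else is routine linearity and boundedness.
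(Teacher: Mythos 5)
Your proposal is correct and follows essentially the same route as the paper's proof: pass the conditional expectation through the weights and the Monte Carlo average, expand the per-draw difference via $\sum_{k=1}^T\bigl(d'_k(v,Q'(Z^o))-d_k(v,Q(Z^o))\bigr)E[Y_k\mid V=v]$, and integrate against $f_V$ to recover $\Delta\mu^o$ and hence $\Delta W$. Your explicit appeal to Assumption \ref{assum:independence} to equate the conditional sampling law of $v_m^o$ given $Z^o$ with the marginal density $f_V$ is a point the paper leaves implicit, but it is the same argument, not a different one.
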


Given the unbiasedness of the observed GPRTE estimator, we can analyze the asymptotic distribution of the test statistic, and establish the hypothesis testing framework when \( M \to \infty \) and \( N_o \) is fixed. 
For a given observation \( o \), by Central Limit Theorem, the sample mean \( \Delta \mu^o_M \) satisfies:
\[
\Delta \mu^o_M \sim N\left(\mu'^o_{E[Y|V]} - \mu^o_{E[Y|V]}, \frac{\sigma_o^2}{M}\right),
\]
where \( \sigma_o^2 \) is the variance of the individual differences in marginal treatment responses. Since \( \Delta W_{N_o} \) is a weighted sum of \( \Delta \mu^o_M \), it follows that:
\begin{equation}\label{eq:DeltaWno}
    \Delta W_{N_o} \sim N\left(\Delta W, \sum_{o=1}^{N_o} \omega_o^2 \frac{\sigma_o^2}{M}\right).
\end{equation}
To estimate this variance in Equation \eqref{eq:DeltaWno}, we use the sample variance for each observation \( o \):
\[
\widehat{\sigma}_o^2 = \frac{1}{M-1} \sum_{m=1}^M \left( \Delta Y_m^o - \Delta \mu^o_M \right)^2,
\]
where \( \Delta Y_m^o = E[Y_{\bar{D}'^o_m} | V = v_m^o, Z^o] - E[Y_{\bar{D}_m^o} | V = v_m^o, Z^o] \). The estimated variance of \( \Delta W_{N_o} \) is then:
\[
\widehat{\mathrm{Var}}(\Delta W_{N_o}) = \sum_{o=1}^{N_o} \omega_o^2 \frac{\widehat{\sigma}_o^2}{M}.
\]

Under the null hypothesis \( H_0: \Delta W = 0 \), the test statistic is constructed as:
\[
Z_{\text{test}} = \frac{\Delta W_{N_o}}{\sqrt{\widehat{\mathrm{Var}}(\Delta W_{N_o})}},
\]
%where \( \Delta W_{N_o} \) is the observed GPRTE and \( \widehat{\mathrm{Var}}(\Delta W_{N_o}) \) is its estimated variance. 
When \( M \to \infty \), the numerator asymptotically follows a normal distribution, and the denominator converges to a constant. By Slutsky's theorem, the test statistic \( Z_{\text{test}} \) asymptotically follows a standard normal distribution:
\[
Z_{\text{test}} \sim N(0, 1), \quad \text{under } H_0.
\]

With critical value from standard normal distribution, 
this test evaluates the statistical significance of the GPRTE under the asymptotic framework where \( M \to \infty \) and \( N_o \) is fixed.

%The hypothesis testing procedure is as follows:
%\begin{enumerate}
%    \item Compute \( \Delta W_{N_o} \) and \( \widehat{\mathrm{Var}}(\Delta W_{N_o}) \).
%    \item Calculate the test statistic:
%    \[
%    Z = \frac{\Delta W_{N_o}}{\sqrt{\widehat{\mathrm{Var}}(\Delta W_{N_o})}}.
%    \]
%    \item Compare \( |Z| \) with the critical value \( z_{\alpha/2} \) from the standard normal distribution at the chosen significance level \( \alpha \), or compute the \( p \)-value.
%    \item If \( |Z| > z_{\alpha/2} \) or \( p < \alpha \), reject \( H_0 \), concluding that \( \Delta W \neq 0 \).
%    \item Otherwise, fail to reject \( H_0 \).
%\end{enumerate}

\subsection{Test for Set Identified Treatment Effect}

Consider the case when the marginal treatment response is not point identified. In this situation, GPRTE is represented as a set rather than a single point estimate. This necessitates a different approach for hypothesis testing. 

Our null and alternative hypothesis: 
\begin{eqnarray*}
    H_0: & \quad & \Delta W = 0  \\
    H_1: & \quad & \Delta W \neq 0
\end{eqnarray*}

Let \(\Delta \mathcal{W}\) denote the set of possible values for the GPRTE under the policy change from \((\mathbf{d},Q)\) to \((\mathbf{d}',Q')\), we know from Equation \eqref{eq:variousTE},
\begin{equation*}
    \Delta \mathcal{W} = \{ \Delta W : \Delta W =  \sum_{o=1}^{N_o} \omega_o \Delta \mu^o (\mathbf{d},Q,\mathbf Z^{N_o}) , \ E[Y_k|V=v] \in \mathcal{I}_K^1 \}
\end{equation*}

The region \(\mathcal{I}_K^1\) for each marginal treatment response \(E[Y_k|V=v]\) is defined by a set of inequalities constraints, which determine a continuous set of possible values for \(E[Y_k|V=v]\). 
The GPRTE, \(\Delta W = \sum_{o=1}^{N_o} \omega_o \Delta \mu^o\), is a weighted sum of \(\Delta \mu^o\), where \(\Delta \mu^o \in \mathcal{I}_k^1\). Since weighted sums of continuous sets are also continuous, \(\Delta \mathcal{W}\) is a connected interval.
Inequalities defining \(\mathcal{I}_k^1\) hold with equality at the boundaries, \(\Delta \mathcal{W}\) is a closed. 
As a consequence, $\Delta\mathcal W$ is a closed interval. 
I use the following algorithm to compute it.

\begin{algorithm}[Computation of Set Identified GPRTE]\label{alg:setGPRTE}
Given the observation $ (Y^o, D^o, Z^o, X^o)$, $o = 1, ..., N_o $ in the dataset and the conditional distribution of unobserved heterogeneity $f(V)$, 
\begin{enumerate}[label=Step \arabic*:, align=left]
  \item For each observation $o=1,...,N_o$, draw \(M\) samples from the conditional distribution of \(V\), denoted as \(v_1^o, v_2^o, ..., v_M^o\). For each draw \(v_m^o\),
    \begin{itemize}
        \item Under each mechanism \( (\mathbf{d},Q) \) and \( (\mathbf{d}',Q') \), determine the treatment assignment $\bar D_m^o = d(v_m^o,Q(Z^o))$ and $\bar D'^o_m= d'(v_m^o,Q'(Z^o))$ respectively.
        \item Calculate the set identified  $E[Y_{\bar D'^o_m}|V=v_m^o, Z^o]-E[Y_{\bar D_m^o}|V=v_m^o, Z^o]$, and denote its lower and upper bounds as $\underline {\delta^o_m}$ and $\overline{ \delta^o_m}$. 
    \end{itemize}
  \item Average the lower and upper bounds respectively over $m=1,...,M$:
    \begin{equation*}
      \underline{\Delta \mu^o_M} = \frac{1}{ M} \sum_{m=1}^M \underline{ \delta^o_m}
     \, , \quad \ 
        \overline{\Delta \mu^o_M} = \frac{1}{ M} \sum_{m=1}^M \overline{ \delta^o_m}
    \end{equation*}
  \item Sum up all the observations \(o=1,...,N_o\) with weights, and calculate the lower and upper bounds of observed GPRTE
    \begin{equation*}  %\label{eq:observedGPRTE_upper}
   \underline{\Delta W_{N_o}} =  \sum_{o=1}^{N_o} \omega_o \left(  \underline{\Delta \mu^o_M}  \right)   
   \, , \quad
   \overline{\Delta W_{N_o}} =  \sum_{o=1}^{N_o} \omega_o \left(  \overline{\Delta \mu^o_M}  \right) 
\end{equation*}
The GPRTE is represented by $[\underline{\Delta W_{N_o}},\overline{\Delta W_{N_o}}]$.
\end{enumerate}
\end{algorithm}

% Imbens and Manski 2004: 第一步是要构造上界和下界的联合正态分布

For each observation \(o\), the lower bounds \(\underline{\delta^o_m}\), \(m = 1, \dots, M\), are i.i.d. because they are functions of independently drawn values \(v_1^o, \dots, v_M^o\). Similarly, the upper bounds \(\overline{\delta^o_m}\) are also i.i.d. across \(m\). However, within the same index \(m\), the pair \((\underline{\delta^o_m}, \overline{\delta^o_m})\) are not independent, as both depend on the same draw \(v_m^o\). Since the outcome variable is bounded, both lower and upper bounds are finite.

As a result, for a given \(o\), the quantities \(\underline{\Delta \mu^o_M}\) and \(\overline{\Delta \mu^o_M}\), which are empirical averages over \(M\) i.i.d. samples, satisfy the conditions of the Central Limit Theorem. Hence,  \(\sqrt{M} \, \underline{\Delta \mu^o_M}\) and \(\sqrt{M} \, \overline{\Delta \mu^o_M}\), each converge in distribution to a normal distribution. Consequently, the bounds of GPRTE \(\sqrt{M} \, (\underline{\Delta W_{N_o}}, \overline{\Delta W_{N_o}})\), follows a bivariate normal distribution, which we denote as
\begin{equation*}
  \sqrt M  \begin{pmatrix}
\underline{\Delta W_{N_o}} \\
\overline{\Delta W_{N_o}}
\end{pmatrix}
\sim \mathcal N\left(
\begin{pmatrix}
\mu_{\underline{\Delta W}} \\
\mu_{\overline{\Delta W}}
\end{pmatrix},
%variance
\begin{pmatrix}
\sigma^2_{\underline{\Delta W}} & 
\rho_{\underline{\Delta W},\overline{\Delta W}} \sigma_{\underline{\Delta W}} \sigma_{\overline{\Delta W}} \\
\rho_{\underline{\Delta W},\overline{\Delta W}} \sigma_{\underline{\Delta W}} \sigma_{\overline{\Delta W}} 
&  \sigma^2_{\overline{\Delta W}}
\end{pmatrix}
\right),
\end{equation*}

%Then we need to discuss the estimator for the variance covariance matrix

To construct the variance-covariance matrix of the joint normal distribution for \( \sqrt{M} (\underline{\Delta W_{N_o}}, \overline{\Delta W_{N_o}}) \), we estimate \( \sigma_{\underline{\Delta W}}, \sigma_{\overline{\Delta W}}, \) and \( \rho_{\underline{\Delta W}, \overline{\Delta W}} \). These components are derived from the variances and covariances of the lower and upper bounds across the observations and weights.

The variance of the lower bound is defined as:
\[
\sigma_{\underline{\Delta W}}^2 = \sum_{o=1}^{N_o} \omega_o^2 \frac{\sigma_{\underline{\Delta \mu^o}}^2}{M},
\]
where \( \sigma_{\underline{\Delta \mu^o}}^2 \) is the variance of the sample lower bounds for observation \( o \). The unbiased estimator is:
\[
\widehat{\sigma}_{\underline{\Delta \mu^o}}^2 = \frac{1}{M-1} \sum_{m=1}^M \left( \underline{\delta^o_m} - \underline{\Delta \mu^o_M} \right)^2,
\]
where \( \underline{\Delta \mu^o_M} = \frac{1}{M} \sum_{m=1}^M \underline{\delta^o_m} \) is the sample mean of the lower bounds for observation \( o \). Substituting this into the variance formula, the estimator for \( \sigma_{\underline{\Delta W}}^2 \) is:
\[
\widehat{\sigma}_{\underline{\Delta W}}^2 = \sum_{o=1}^{N_o} \omega_o^2 \frac{\widehat{\sigma}_{\underline{\Delta \mu^o}}^2}{M}.
\]

Similarly, the estimator for the variance of the upper bound $\sigma_{\overline{\Delta W}}^2$ is 
\[
\widehat{\sigma}_{\overline{\Delta W}}^2 = \sum_{o=1}^{N_o} \omega_o^2 \frac{\widehat{\sigma}_{\overline{\Delta \mu^o}}^2}{M}.
\]
where
\[
\widehat{\sigma}_{\overline{\Delta \mu^o}}^2 = \frac{1}{M-1} \sum_{m=1}^M \left( \overline{\delta^o_m} - \overline{\Delta \mu^o_M} \right)^2,
\]

%Similarly, the variance of the upper bound is:
%\[
%\sigma_{\overline{\Delta W}}^2 = \sum_{o=1}^{N_o} \omega_o^2 \frac{\sigma_{\overline{\Delta \mu^o}}^2}{M},
%\]
%where \( \sigma_{\overline{\Delta \mu^o}}^2 \) is the variance of the sample upper bounds for observation \( o \). The unbiased estimator is:
%\[
%\widehat{\sigma}_{\overline{\Delta \mu^o}}^2 = \frac{1}{M-1} \sum_{m=1}^M \left( \overline{\delta^o_m} - \overline{\Delta \mu^o_M} \right)^2,
%\]
%where \( \overline{\Delta \mu^o_M} = \frac{1}{M} \sum_{m=1}^M \overline{\delta^o_m} \) is the sample mean of the upper bounds for observation \( o \). Substituting this into the variance formula, the estimator for \( \sigma_{\overline{\Delta W}}^2 \) is:
%\[
%\widehat{\sigma}_{\overline{\Delta W}}^2 = \sum_{o=1}^{N_o} \omega_o^2 \frac{\widehat{\sigma}_{\overline{\Delta \mu^o}}^2}{M}.
%\]

Denote the difference between the upper and lower bounds as \(R_{N_o}\):
\[
R_{N_o} = \overline{\Delta W_{N_o}} - \underline{\Delta W_{N_o}}. 
\]
We construct a confidence interval as
\begin{equation*}
C^{\Delta W}_{1-\alpha} = \left[ \underline{\Delta W_{N_o}} - \overline{C}_M \widehat{\sigma}_{\underline{\Delta W}} / \sqrt{M}, \overline{\Delta W_{N_o}} + \overline{C}_M \widehat{\sigma}_{\overline{\Delta W}} / \sqrt{M} \right],
\end{equation*}
where \( \overline{C}_M \) satisfies
\begin{equation*}
\Phi \left( \overline{C}_M + \sqrt{M} \cdot \frac{ R_{N_o} }{\max(\widehat{\sigma}_{\underline{\Delta W}}, \widehat{\sigma}_{\overline{\Delta W}})} \right) - \Phi(\overline{C}_M) = 1- \alpha
\end{equation*}
and $\alpha$ is the chosen significance level. 
We state the following theorem:

\begin{theorem}\label{thm:coveragerate}
    $C^{\Delta W}_{1-\alpha}$ asymptotically cover $\Delta W$ by
\begin{equation}\label{eq:asympcover}
    \lim_{M \to \infty} \inf_{ E[Y_k|V=v] \in \mathcal{I}_K^1} \Pr(\Delta W \in C^{\Delta W}_{1-\alpha}) \geq 1-\alpha.
\end{equation}
\end{theorem}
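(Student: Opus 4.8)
The plan is to follow the logic of \cite{imbens2004confidence}, which the interval $C^{\Delta W}_{1-\alpha}$ directly mirrors: the critical value $\overline C_M$ is calibrated to cover a \emph{single point} of a set-identified interval rather than the whole interval, interpolating between a two-sided critical value when the identified interval degenerates to a point and a one-sided value when it has positive width. First I would reduce \eqref{eq:asympcover} to a one-dimensional coverage problem. Holding the $N_o$ observations fixed, the population bounds $\underline{\Delta W}=\lim_M\underline{\Delta W_{N_o}}$ and $\overline{\Delta W}=\lim_M\overline{\Delta W_{N_o}}$ are deterministic, and since the infimum of a sum dominates the sum of infima, the algorithm's interval is (weakly) conservative: the true identified set $\Delta\mathcal W$ satisfies $\Delta\mathcal W\subseteq[\underline{\Delta W},\overline{\Delta W}]$. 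Hence the true value $\Delta W$ attached to any $E[Y_k\mid V=v]\in\mathcal I_K^1$ is a fixed point of $[\underline{\Delta W},\overline{\Delta W}]$, and it suffices to show that $C^{\Delta W}_{1-\alpha}$ covers every $\theta\in[\underline{\Delta W},\overline{\Delta W}]$ with limiting probability at least $1-\alpha$, uniformly in $\theta$. Write $R=\overline{\Delta W}-\underline{\Delta W}\ge 0$ and $\sigma_{\max}=\max(\sigma_{\underline{\Delta W}},\sigma_{\overline{\Delta W}})$.

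Next I would assemble the limiting ingredients. The central limit theorem argument preceding the theorem gives joint asymptotic normality of $(\underline{\Delta W_{N_o}},\overline{\Delta W_{N_o}})$ centred at $(\underline{\Delta W},\overline{\Delta W})$ with standard errors of order $M^{-1/2}$, while the law of large numbers applied to the i.i.d.\ draws $\{v_m^o\}_m$ yields $\widehat\sigma_{\underline{\Delta W}}\to\sigma_{\underline{\Delta W}}$, $\widehat\sigma_{\overline{\Delta W}}\to\sigma_{\overline{\Delta W}}$, and $R_{N_o}\to R$ in probability. By the continuous mapping theorem applied to the defining equation of $\overline C_M$, together with Slutsky's theorem, the estimated standard errors and $R_{N_o}$ may be replaced by their limits without affecting the limiting coverage. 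I would then write $\{\theta\in C^{\Delta W}_{1-\alpha}\}$ as the intersection of the one-sided events $\{\underline{\Delta W_{N_o}}-\overline C_M\widehat\sigma_{\underline{\Delta W}}/\sqrt M\le\theta\}$ and $\{\theta\le\overline{\Delta W_{N_o}}+\overline C_M\widehat\sigma_{\overline{\Delta W}}/\sqrt M\}$; after standardising, the probability of the first is nondecreasing in $\theta$ and that of the second nonincreasing, so for $R$ bounded away from zero the joint coverage is minimised at the two endpoints, where one event is near-binding and the complementary one has probability tending to one.

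It then remains to verify $\ge 1-\alpha$ coverage at the binding endpoints. At $\theta=\underline{\Delta W}$ the binding (left) event contributes $\Phi(\overline C_M)$ in the limit, while the slack (right) event contributes $\Phi(\overline C_M+\sqrt M\,R/\sigma_{\max})$, and the defining relation for $\overline C_M$ is precisely the calibration that equates the resulting limiting coverage to $1-\alpha$. When $R>0$ the drift $\sqrt M\,R/\sigma_{\max}\to\infty$, the slack event becomes negligible, and $\overline C_M$ tends to the one-sided value, so coverage $\to 1-\alpha$; when $R=0$ the problem collapses to the point-identified case and $\overline C_M$ tends to the two-sided value $z_{1-\alpha/2}$, again delivering $1-\alpha$. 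The symmetric computation applies at $\theta=\overline{\Delta W}$, and the monotonicity argument extends the bound to interior $\theta$.

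The crux, and the main obstacle, is \emph{uniformity}: the infimum over $\mathcal I_K^1$ sits inside the limit in \eqref{eq:asympcover}, and $\overline C_M$ is itself random through $R_{N_o}$ and the estimated variances. The delicate regime is $R$ small but positive, where the drift $\sqrt M\,R/\sigma_{\max}$ neither vanishes nor diverges at a fixed rate, so the endpoint minimisation is not immediate and the interior of $[\underline{\Delta W},\overline{\Delta W}]$ must be controlled directly rather than through the two endpoints alone. I would resolve this exactly as in \cite{imbens2004confidence}, by bounding the coverage from below along arbitrary sequences $(\theta_M,R_M)$ and invoking the consistency established above to show that the limit infimum of the coverage probability cannot fall below $1-\alpha$.
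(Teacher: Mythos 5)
Your proposal rests on the same foundation as the paper's proof: both ultimately come down to the interval-coverage result of \cite{imbens2004confidence}. The difference is one of packaging. The paper treats Lemma 4 of \cite{imbens2004confidence} as a black box and devotes its proof to verifying that lemma's hypotheses: consistency of the variance, covariance, and correlation estimators as $M\to\infty$; boundedness of $\widehat{\sigma}_{\underline{\Delta W}}$ and $\widehat{\sigma}_{\overline{\Delta W}}$; and, crucially, the concentration condition that for every $\epsilon_0>0$ there exist $K_0,\nu_0,M_0$ such that $\Pr\bigl(\sqrt{M}\,|R_{N_o}-\mu_R| > K_0\,\mu_R^{\nu_0}\bigr)<\epsilon_0$ for all $M>M_0$, which it establishes from the asymptotic normality of $R_{N_o}$ and normal tail bounds. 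You instead unpack the argument that the lemma encapsulates --- reduction to pointwise coverage over $[\underline{\Delta W},\overline{\Delta W}]$, endpoint minimisation via monotonicity of the two one-sided events, and the observation that the calibration equation for $\overline{C}_M$ interpolates between the one-sided critical value (width bounded away from zero) and $z_{1-\alpha/2}$ (zero width) --- and defer only the uniformity step back to the sequence argument of \cite{imbens2004confidence}. That unpacking is sound, and for the theorem as literally stated (a fixed DGP, hence a fixed width $\mu_R$, with the infimum ranging only over the location of $\Delta W$ inside the identified interval) it essentially suffices. The one concrete step you leave unverified is the rate condition above: where you propose to invoke ``the consistency established above'' in the small-width regime that you yourself flag as the crux, note that consistency of $R_{N_o}$ is strictly weaker than what the Imbens--Manski sequence argument requires there; it is precisely this $\sqrt{M}$-concentration bound on the estimated width, not mere consistency, that constitutes the substantive verification in the paper's proof, so your sketch would need to add it to be complete.
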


Therefore, with the given significance level $\alpha$, we can perform the hypothesis test. If $0\notin C^{\Delta W}_{1-\alpha}$, we can reject the null hypothesis that GRPTE equals zero.

%\section{Empirical Application}

\section{Conclusion}\label{sec:conclusion}

This paper develops a hyper-rectangle model to address the challenges of analyzing treatment effects in micro-econometric studies with set-identified parameters. By introducing a framework that leverages the interplay among observed outcomes, treatments, covariates, unobserved heterogeneity, and instrumental variables, the proposed model enhances the identification and estimation of treatment effects under realistic and flexible assumptions. 

A key contribution of this study is the identification of the Marginal Treatment Response  function and the threshold function $Q(Z)$, which are fundamental for understanding heterogeneous treatment effects. By distinguishing between cases where the threshold function or the distribution of unobserved heterogeneity is known, the model provides a structured approach to derive bounds for treatment effects. This approach offers practical insights into policy-relevant questions by accommodating partial identification and allowing for robust inference.

The model's ability to handle complex treatment assignment mechanisms expands its applicability to a wide range of empirical contexts. It facilitates the estimation of various treatment effect measures, including Average Treatment Effects and Policy Relevant Treatment Effects, while remaining computationally tractable. Moreover, the framework provides tools for hypothesis testing, enabling researchers to draw meaningful conclusions even under partial identification.

Overall, the proposed methodology contributes to the econometric literature by offering a flexible and empirically applicable tool for treatment effect analysis. Future research could extend this framework by exploring dynamic treatment settings, incorporating additional sources of uncertainty, or applying the methodology to large-scale datasets in practice.

\clearpage

\bibliographystyle{ecta}
\bibliography{citation.bib}

\begin{thebibliography}{25}
\newcommand{\enquote}[1]{``#1''}
\expandafter\ifx\csname natexlab\endcsname\relax\def\natexlab#1{#1}\fi

\bibitem[\protect\citeauthoryear{Abadie}{Abadie}{2002}]{abadie2002bootstrap}
\textsc{Abadie, A.} (2002): \enquote{Bootstrap tests for distributional
  treatment effects in instrumental variable models,} \emph{Journal of the
  American statistical Association}, 97, 284--292.

\bibitem[\protect\citeauthoryear{Abadie}{Abadie}{2003}]{abadie2003semiparametric}
---\hspace{-.1pt}---\hspace{-.1pt}--- (2003): \enquote{Semiparametric
  instrumental variable estimation of treatment response models,} \emph{Journal
  of econometrics}, 113, 231--263.

\bibitem[\protect\citeauthoryear{Angrist, Imbens, and Rubin}{Angrist
  et~al.}{1996}]{angrist1996identification}
\textsc{Angrist, J.~D., G.~W. Imbens, and D.~B. Rubin} (1996):
  \enquote{Identification of causal effects using instrumental variables,}
  \emph{Journal of the American statistical Association}, 91, 444--455.

\bibitem[\protect\citeauthoryear{Beresteanu and Molinari}{Beresteanu and
  Molinari}{2008}]{beresteanu2008asymptotic}
\textsc{Beresteanu, A. and F.~Molinari} (2008): \enquote{Asymptotic properties
  for a class of partially identified models,} \emph{Econometrica}, 76,
  763--814.

\bibitem[\protect\citeauthoryear{Cattaneo}{Cattaneo}{2010}]{cattaneo2010efficient}
\textsc{Cattaneo, M.~D.} (2010): \enquote{Efficient semiparametric estimation
  of multi-valued treatment effects under ignorability,} \emph{Journal of
  econometrics}, 155, 138--154.

\bibitem[\protect\citeauthoryear{Chen, Wang, and Small}{Chen
  et~al.}{2023}]{chen2023differential}
\textsc{Chen, K., B.~Wang, and D.~S. Small} (2023): \enquote{A differential
  effect approach to partial identification of treatment effects,} \emph{arXiv
  preprint arXiv:2303.06332}.

\bibitem[\protect\citeauthoryear{Chen}{Chen}{2007}]{chen2007handbook}
\textsc{Chen, X.} (2007): \enquote{Large sample sieve estimation of
  semi-nonparametric models,} \emph{Handbook of econometrics}, 6, 5549--5632.

\bibitem[\protect\citeauthoryear{Chernozhukov, Demirer, Duflo, and
  Fernandez-Val}{Chernozhukov et~al.}{2018}]{chernozhukov2018generic}
\textsc{Chernozhukov, V., M.~Demirer, E.~Duflo, and I.~Fernandez-Val} (2018):
  \enquote{Generic machine learning inference on heterogeneous treatment
  effects in randomized experiments, with an application to immunization in
  India,} Tech. rep., National Bureau of Economic Research.

\bibitem[\protect\citeauthoryear{Chernozhukov, Hong, and Tamer}{Chernozhukov
  et~al.}{2007}]{chernozhukov2007estimation}
\textsc{Chernozhukov, V., H.~Hong, and E.~Tamer} (2007): \enquote{Estimation
  and confidence regions for parameter sets in econometric models 1,}
  \emph{Econometrica}, 75, 1243--1284.

\bibitem[\protect\citeauthoryear{Crump, Hotz, Imbens, and Mitnik}{Crump
  et~al.}{2008}]{crump2008nonparametric}
\textsc{Crump, R.~K., V.~J. Hotz, G.~W. Imbens, and O.~A. Mitnik} (2008):
  \enquote{Nonparametric tests for treatment effect heterogeneity,} \emph{The
  Review of Economics and Statistics}, 90, 389--405.

\bibitem[\protect\citeauthoryear{Galichon and Henry}{Galichon and
  Henry}{2009}]{galichon2009test}
\textsc{Galichon, A. and M.~Henry} (2009): \enquote{A test of non-identifying
  restrictions and confidence regions for partially identified parameters,}
  \emph{Journal of Econometrics}, 152, 186--196.

\bibitem[\protect\citeauthoryear{Hahn}{Hahn}{1998}]{hahn1998role}
\textsc{Hahn, J.} (1998): \enquote{On the role of the propensity score in
  efficient semiparametric estimation of average treatment effects,}
  \emph{Econometrica}, 315--331.

\bibitem[\protect\citeauthoryear{Heckman}{Heckman}{1997}]{heckman1997instrumental}
\textsc{Heckman, J.} (1997): \enquote{Instrumental variables: A study of
  implicit behavioral assumptions used in making program evaluations,}
  \emph{Journal of human resources}, 441--462.

\bibitem[\protect\citeauthoryear{Heckman and Pinto}{Heckman and
  Pinto}{2018}]{heckman2018unordered}
\textsc{Heckman, J.~J. and R.~Pinto} (2018): \enquote{Unordered monotonicity,}
  \emph{Econometrica}, 86, 1--35.

\bibitem[\protect\citeauthoryear{Heckman and Vytlacil}{Heckman and
  Vytlacil}{2005}]{heckman2005structural}
\textsc{Heckman, J.~J. and E.~Vytlacil} (2005): \enquote{Structural equations,
  treatment effects, and econometric policy evaluation 1,} \emph{Econometrica},
  73, 669--738.

\bibitem[\protect\citeauthoryear{Imbens}{Imbens}{2000}]{imbens2000role}
\textsc{Imbens, G.~W.} (2000): \enquote{The role of the propensity score in
  estimating dose-response functions,} \emph{Biometrika}, 87, 706--710.

\bibitem[\protect\citeauthoryear{Imbens}{Imbens}{2004}]{imbens2004nonparametric}
---\hspace{-.1pt}---\hspace{-.1pt}--- (2004): \enquote{Nonparametric estimation
  of average treatment effects under exogeneity: A review,} \emph{Review of
  Economics and statistics}, 86, 4--29.

\bibitem[\protect\citeauthoryear{Imbens and Angrist}{Imbens and
  Angrist}{1994}]{imbens1994identification}
\textsc{Imbens, G.~W. and J.~D. Angrist} (1994): \enquote{Identification and
  Estimation of Local Average Treatment Effects,} \emph{Econometrica}, 62,
  467--475.

\bibitem[\protect\citeauthoryear{Imbens and Manski}{Imbens and
  Manski}{2004}]{imbens2004confidence}
\textsc{Imbens, G.~W. and C.~F. Manski} (2004): \enquote{Confidence intervals
  for partially identified parameters,} \emph{Econometrica}, 72, 1845--1857.

\bibitem[\protect\citeauthoryear{Lee and Salani{\'e}}{Lee and
  Salani{\'e}}{2018}]{lee2018identifying}
\textsc{Lee, S. and B.~Salani{\'e}} (2018): \enquote{Identifying effects of
  multivalued treatments,} \emph{Econometrica}, 86, 1939--1963.

\bibitem[\protect\citeauthoryear{Manski}{Manski}{1997}]{manski1997monotone}
\textsc{Manski, C.~F.} (1997): \enquote{Monotone treatment response,}
  \emph{Econometrica: Journal of the Econometric Society}, 1311--1334.

\bibitem[\protect\citeauthoryear{Mogstad, Santos, and Torgovitsky}{Mogstad
  et~al.}{2018}]{mogstad2018using}
\textsc{Mogstad, M., A.~Santos, and A.~Torgovitsky} (2018): \enquote{Using
  instrumental variables for inference about policy relevant treatment
  parameters,} \emph{Econometrica}, 86, 1589--1619.

\bibitem[\protect\citeauthoryear{Romano and Shaikh}{Romano and
  Shaikh}{2010}]{romano2010inference}
\textsc{Romano, J.~P. and A.~M. Shaikh} (2010): \enquote{Inference for the
  identified set in partially identified econometric models,}
  \emph{Econometrica}, 78, 169--211.

\bibitem[\protect\citeauthoryear{Sklar}{Sklar}{1959}]{sklar1959fonctions}
\textsc{Sklar, M.} (1959): \enquote{Fonctions de r{\'e}partition {\`a} n
  dimensions et leurs marges,} in \emph{Annales de l'ISUP}, vol.~8, 229--231.

\bibitem[\protect\citeauthoryear{Wu and Ding}{Wu and
  Ding}{2021}]{wu2021randomization}
\textsc{Wu, J. and P.~Ding} (2021): \enquote{Randomization tests for weak null
  hypotheses in randomized experiments,} \emph{Journal of the American
  Statistical Association}, 116, 1898--1913.

\end{thebibliography}

\appendix

\section{Proofs}

\begin{proof}[Proof of Theorem \ref{thm:idenofQ}]
Consider the system of $T$ equations given in system of Equations \eqref{idenQ:system}, which consists of $J$ unknowns $\{Q_j(Z)\}_{j=1}^J$. Assumption \ref{assumption:threshold_involvement} implies each unknown threshold function $Q_j(Z)$, $j \in \{1,\ldots,J\}$ appears in the system through the functions $\alpha_{lj}(Z)$.

First, I argue that the vector $Q(Z) = \left(Q_1(Z),...,Q_J(Z)\right)$ has a one-to-one correspondence with the vector 
$\bar F(Z) \equiv \left(F_V\left(\alpha_{11}(Z),...,\alpha_{1J}(Z)\right),...,F_V\left(\alpha_{2^J-1,1}(Z),...,\alpha_{2^J-1,J}(Z)\right)\right)$. 
On one hand, from the construction of $\alpha_{lj}(Z)$, we know each $Q(Z)$ uniquely determines $\bar F(Z)$. 
On the other hand, for every $j_0\in\{1,...,J\}$, there is a subset $l^{j_0}= \{j_0\} \subset \mathcal L$ such that
$F_V\left(\alpha_{l^{j_0}1}(Z),...,\alpha_{l^{j_0}J}(Z)\right)$ 
has all inputs equal to 1 except for the $j_0$th element, which is equal to $Q_{j_0}(Z)$. By Assumption \ref{assum:distribution}, $F_V$ is strictly increasing in each argument, meaning that $Q_j(Z)$, $j=1,...,J$ can be uniquely recovered from the corresponding element of $\bar F(Z)$. 
Therefore, $\bar F(Z)$ and $Q(Z)$ are equivalent in terms of the unknowns they contain. 

For the system to have a unique solution, we require that the number of independent equations provided by the matrix $\{ c_l^k \}$ is at least $J$, the number of unknowns. 
Thus, if the rank condition $J \leq \text{rank} \{ c_l^k \}$ holds, the system can be solved uniquely for the threshold functions.
\end{proof}

\begin{proof}[Proof of Theorem \ref{thm:convergethreshold}]
    First, consider the system of Equations \eqref{idenQ:system} for a fixed value of $Z$. For the parametric form $\bar{Q}(Z; \beta^Q)$ with the true parameter vector $\beta^Q$, the corresponding loss function is defined in Equation \eqref{eq:lossfunction1}. As implied by Theorem \ref{thm:idenofQ}, the true threshold function $Q(Z)$ uniquely solves this system, ensuring that it minimizes the loss function. Therefore, if the parametric form $\bar{Q}(Z; \beta^Q)$ is correctly specified, the true parameter vector $\beta^Q$ is guaranteed to be the global minimizer of the loss function $loss(Z; \bar{Q})$.

    Next, consider the empirical average of the loss function over the sample in Equation \eqref{eq:lossfunc2}. As $N_0 \rightarrow \infty$, by the Weak Law of Large Numbers, the empirical average loss function converges in probability to the expected loss function:
    \begin{equation*}
        \lim_{N_0 \rightarrow \infty} \Pr\left(|Loss(\bar{Q}) - E[loss(Z;\bar{Q})]| > \epsilon\right) = 0
    \end{equation*}
    for any $\epsilon > 0$. 

    The expected loss function $E[loss(Z;\bar{Q})]$ is the expected value of the loss function over the distribution of $Z$. Since Theorem \ref{thm:idenofQ} guarantees that for each fixed $Z$, the true parameter $\beta^Q$ minimizes the loss function, the true parameter also minimizes the expected loss function. 
    Since the empirical loss function $Loss(\bar{Q})$ converges to the expected loss function, the estimator $\hat{\beta}^Q$ obtained by minimizing $Loss(\bar{Q})$ will converge to the true parameter $\beta^Q$ as $N_0 \rightarrow \infty$. 
    Therefore, $\hat{\beta}^Q$ is a consistent estimator for $\beta^Q$, and $\bar{Q}(\cdot;\hat{\beta}^Q)$ converges in probability to the true threshold function $Q(\cdot)$ as the sample size increases.
\end{proof}

% ============================
% Lemma 1: Completeness
% ============================
\begin{lemma}\label{lem:lead-completeness}
Let $k\in\{1,\ldots,T\}$ and let $m_k:(0,1)^J\to(0,\bar Y)$ be admissible (measurable, locally bounded, locally equicontinuous) and satisfy that for every leading term $l$ of treatment $k$ and for a.e. $v_{I_l^+}\in(0,1)^{|I_l^+|}$,
\begin{equation}\label{eq:lead-equality}
\int m_k(v_{I_l^+},v_{I_l^-})\, f_{I_l^-\mid I_l^+}(v_{I_l^-}\mid v_{I_l^+})\,dv_{I_l^-}
\;=\;
m_{k,l}(v_{I_l^+}),
\end{equation}
where $m_{k,l}(\cdot)$ is the conditional marginal treatment response identified from the data via Equation \eqref{eq:point.iden.of.deriv}. Then, for a.e. $q\in(0,1)^J$,
\begin{equation}\label{eq:full-moment}
E[Y D_k\mid Q(Z)=q]\;=\;\int d_k(v,q)\, m_k(v)\, f_V(v)\, dv.
\end{equation}
\end{lemma}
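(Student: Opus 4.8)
The plan is to reduce the claimed identity \eqref{eq:full-moment} to a term-by-term comparison inside the decomposition \eqref{mainleading}. For any admissible $g$, define the functional $\Phi[g](q)\equiv\int d_k(v,q)\,g(v)\,f_V(v)\,dv$. Substituting the expansion \eqref{decomposedk} of $d_k$ and using $\prod_{j\in l}S_j(v,q)=\mathds 1\{v_j<q_j,\ \forall j\in l\}$ would give
\begin{equation*}
\Phi[g](q)=\sum_{l}c_l^k\,G_g^l(q_{I_l^+}),\qquad
G_g^l(q_{I_l^+})\equiv\int_{l(q)}g(v)\,f_V(v)\,dv,
\end{equation*}
where the sum runs over terms $l$ with $c_l^k\neq0$ and each $G_g^l$ depends only on the coordinates $q_{I_l^+}$. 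Since \eqref{mainleading} is exactly $E[YD_k\mid Q(Z)=q]=\Phi[m^\ast](q)$ with $m^\ast\equiv E[Y_k\mid V=\cdot]$, it suffices to prove $G_{m_k}^l=G_{m^\ast}^l$ for \emph{every} term $l$. The whole argument then rests on showing that $m_k$ and $m^\ast$ generate the same ``partial marginal'' $\int g\,f_V\,dv_{I_l^-}$ for each term.

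I would first dispatch the leading terms. Multiplying the hypothesis \eqref{eq:lead-equality} by $f_{I_l^+}(v_{I_l^+})$ and using $f_{I_l^-\mid I_l^+}=f_V/f_{I_l^+}$ converts it into
\begin{equation*}
\int m_k(v)\,f_V(v)\,dv_{I_l^-}=m_{k,l}(v_{I_l^+})\,f_{I_l^+}(v_{I_l^+})\qquad\text{for a.e. } v_{I_l^+}.
\end{equation*}
On the other hand, the defining relation of $m_{k,l}$ as the identified conditional MTR in \eqref{cmtr} yields the identical identity with $m^\ast$ in place of $m_k$. Hence the two partial marginals coincide for every leading $l$, and integrating both over the box $\prod_{i\in I_l^+}[0,q_i]$ gives $G_{m_k}^l=G_{m^\ast}^l$. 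Note that no reconstruction of $m_k$ itself is required here: only its leading-term marginals enter $G^l$.

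The substantive step is to propagate this to non-leading terms. Every term $l'$ with $c_{l'}^k\neq0$ is included in some leading term $l$ (the leading terms are maximal under $\subseteq$), so $I_{l'}^+\subseteq I_l^+$; writing $B\equiv I_l^+\setminus I_{l'}^+$ gives $I_{l'}^-=B\cup I_l^-$. Integrating the leading-term identity over $v_B$, which is legitimate by Tonelli since $m_k$ is bounded by $\bar Y$ and $f_V$ is a continuous density, would produce
\begin{equation*}
\int m_k(v)\,f_V(v)\,dv_{I_{l'}^-}=\int m_{k,l}(v_{I_l^+})\,f_{I_l^+}(v_{I_l^+})\,dv_B,
\end{equation*}
and the same computation with $m^\ast$ yields the identical right-hand side. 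Thus the partial marginals agree for $l'$ as well, giving $G_{m_k}^{l'}=G_{m^\ast}^{l'}$. Summing $\sum_l c_l^k G^l$ over all terms then delivers $\Phi[m_k]=\Phi[m^\ast]=E[YD_k\mid Q(Z)=q]$ for a.e. $q$, which is \eqref{eq:full-moment}.

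I expect the main obstacle to be organizational rather than analytic: keeping the index sets $I_l^+,I_l^-,B$ straight across terms, verifying via Proposition \ref{prop3} and the maximality definition that every non-leading term sits under a leading one, and carrying the ``a.e.'' qualifiers through the $v_B$-integration (by Fubini, a null set in the $(v_{I_{l'}^+},v_B)$ coordinates has null $v_{I_{l'}^+}$-sections for a.e.\ $v_{I_{l'}^+}$). The admissibility of $m_k$ together with the continuity and positivity of $f_V$ (Assumptions \ref{assum:continuousmtr} and \ref{continuousfV}/\ref{assum:distribution}) are precisely what justify the interchanges of integration and the finiteness of every integral; crucially, no smoothness of $m_k$ is needed, since the argument only manipulates integrals of $m_k$ and never differentiates it.
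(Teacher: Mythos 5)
Your proof is correct, but it takes a genuinely different route from the paper's. The paper works by differentiation: it defines $\Psi_k(q)\equiv\int d_k(v,q)\,m_k(v)\,f_V(v)\,dv$ and $G_k(q)\equiv E[YD_k\mid Q(Z)=q]$, computes the mixed partials $\partial^{|I_l^+|}/\partial q_{I_l^+}$ of both for each \emph{leading} term $l$ (using the hypothesis \eqref{eq:lead-equality} on one side and Equation \eqref{eq:point.iden.of.deriv} on the other), and then appeals to absolute continuity, a multivariate fundamental theorem of calculus, and vanishing boundary values to conclude $\Psi_k=G_k$ a.e. You instead never differentiate: you compare the two candidate MTRs term by term through their partial marginals $\int g\,f_V\,dv_{I_l^-}$, get equality for leading terms directly from \eqref{eq:lead-equality} and \eqref{cmtr}, and propagate to non-leading terms by integrating the leading-term identity over $v_B$, $B=I_l^+\setminus I_{l'}^+$, using the maximality of leading terms under inclusion. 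This propagation step is a real contribution: the paper's final appeal to the FTC only matches derivatives indexed by leading terms, and its claim that $\Psi_k(q)=G_k(q)=0$ whenever some coordinate of $q$ is zero is false unless every term of $d_k$ involves every coordinate (terms $l$ with $j\notin I_l^+$ do not collapse when $q_j=0$); what actually closes that gap is precisely your observation that every term sits inside a leading term, so vanishing of the leading-term marginal differences forces vanishing of all term-level differences. Your route also buys something else: since it only manipulates integrals of $m_k$, it needs no differentiability or absolute-continuity bookkeeping for $\Psi_k$, only Tonelli/Fubini plus boundedness, which is cleaner under the stated admissibility conditions.
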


\begin{proof}[Proof of Lemma \ref{lem:lead-completeness}]
Write $S_j(v,q)=\mathds{1}\{v_j<q_j\}$ and $d_k(v,q)=\sum_{l\in\mathcal L_k} c_l^k \prod_{j\in I_l^+} S_j(v,q)$. Define
\[
\Psi_k(q)\;\equiv\;\int d_k(v,q)\, m_k(v)\, f_V(v)\, dv,\qquad
G_k(q)\;\equiv\;E[Y D_k\mid Q(Z)=q].
\]
By bounded support of $Y$ and the regularity assumed for $f_V$ and $m_k$, both $\Psi_k$ and $G_k$ are Lebesgue a.e. differentiable in each $q_j$ on $(0,1)^J$ and absolutely continuous on rectangles.

Fix a leading term $l$ of treatment $k$. Differentiating $\Psi_k$ w.r.t. the coordinates in $I_l^+$ and using the Leibniz rule for integrals over variable limits yields, for a.e. $q_{I_l^+}$,
\begin{align*}
\frac{\partial^{|I_l^+|}}{\partial q_{I_l^+}} \Psi_k(q)
&= c_l^k \int m_k\big(q_{I_l^+},v_{I_l^-}\big)\, f_V\big(q_{I_l^+},v_{I_l^-}\big)\, dv_{I_l^-} \\
&= c_l^k\, m_{k,l}(q_{I_l^+})\, f_{I_l^+}(q_{I_l^+}),
\end{align*}
where the second equality follows from the definition of the conditional density and Equation \eqref{eq:lead-equality}. On the other hand, by  Equation \eqref{eq:point.iden.of.deriv},
\[
\frac{\partial^{|I_l^+|}}{\partial q_{I_l^+}} G_k(q)
\;=\; c_l^k\, m_{k,l}(q_{I_l^+})\, f_{I_l^+}(q_{I_l^+})
\quad \text{for a.e. } q_{I_l^+}.
\]
Thus $\frac{\partial^{|I_l^+|}}{\partial q_{I_l^+}} \Psi_k(q)=\frac{\partial^{|I_l^+|}}{\partial q_{I_l^+}} G_k(q)$ a.e. for every leading term $l$.

Finally, note that $\Psi_k(q)=G_k(q)=0$ whenever any coordinate of $q$ equals $0$ (the integration region collapses). By absolute continuity on rectangles and the fundamental theorem of calculus in several variables, equality of all these mixed partial derivatives together with the common boundary value implies $\Psi_k(q)=G_k(q)$ for a.e. $q\in(0,1)^J$. 
\end{proof}

% ============================
% Lemma 2: Feasible completion
% ============================
\begin{lemma}\label{lem:feasible-completion}
Fix admissible functions $m_1,\ldots,m_T$ such that, for each treatment $k$:
\begin{itemize}
\item[(a)] $m_k$ satisfies the leading-term equalities \eqref{eq:lead-equality} for every leading term $l$ of treatment $k$;
\item[(b)] the ranking inequalities \eqref{eq:ineq1} and \eqref{eq:ineq2} hold a.e. % (implied by Assumption \ref{assum:ranking}) hold a.e.
\end{itemize}
Then there exists a probability space supporting latent $V$ with density $f_V$, instruments $Z$, and potential outcomes $(Y_1,\ldots,Y_T)$ such that:
\begin{enumerate}
\item $E[Y_k\mid V=v]=m_k(v)$ for a.e. $v$, and $0<Y_k<\bar Y$ a.s.;
\item $Y_k \perp Z \mid V$ for all $k$; %(Assumption \ref{assum:independence});
\item for all $k$ and a.e. $q$, $E[Y D_k\mid Q(Z)=q]$ computed under $D=d(V,Q(Z))$ equals the observed value in the data.
\end{enumerate}
\end{lemma}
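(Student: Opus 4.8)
The plan is to realize the candidate collection by the most economical coupling: deterministic potential outcomes driven entirely by $V$. First I would build the probability space as the product $(0,1)^J\times\mathcal Z$ equipped with $f_V(v)\,dv\otimes P_Z$, where $P_Z$ is the marginal law of the instrument in the data and $\mathcal Z$ its support. Letting $V$ and $Z$ be the coordinate maps makes $V$ and $Z$ independent with $V\sim f_V$, and I set $D=d(V,Q(Z))$ through Equation \eqref{eq:detofd}. Because the same selection functions $d_k$ are used, the structural Assumptions \ref{interiorofQ}, \ref{assum:measure}--\ref{assumption:threshold_involvement} are inherited verbatim, and the $V$-marginal is exactly $f_V$.

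For the outcomes I would define $Y_k:=m_k(V)$ for each $k$. Since each $m_k$ is admissible with range in $(0,\bar Y)$, this immediately delivers $0<Y_k<\bar Y$ almost surely and $E[Y_k\mid V=v]=m_k(v)$, which is conclusion (1). As $Y_k$ is $\sigma(V)$-measurable and $V\perp Z$, the whole vector $(Y_1,\ldots,Y_T,V)$ is independent of $Z$; this yields conclusion (2), $Y_k\perp Z\mid V$, and in fact the full joint independence of Assumption \ref{assum:independence}. The observed outcome is then $Y=\sum_{k}m_k(V)\,\mathds 1\{D=k\}$, and local equicontinuity of the constructed responses (Assumption \ref{assum:continuousmtr}) holds because the $m_k$ are admissible.

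The substantive step is conclusion (3). Using the independence just established,
\[
E[Y D_k\mid Q(Z)=q]=E[\,m_k(V)\,d_k(V,Q(Z))\mid Q(Z)=q\,]=\int d_k(v,q)\,m_k(v)\,f_V(v)\,dv
\]
for a.e.\ $q$. Hypothesis (a) is precisely the leading-term equality \eqref{eq:lead-equality} required by Lemma \ref{lem:lead-completeness}; invoking that lemma with the admissible function $m_k$ identifies this integral with the observed moment $E[YD_k\mid Q(Z)=q]$ from the data, for every $k$ and a.e.\ $q$, closing (3). I would also record that the induced law of $D$ given $Q(Z)=q$ equals $\int d_k(v,q)\,f_V(v)\,dv$, the true selection probability, so the selection margin is reproduced as well.

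Finally I would indicate the role of hypothesis (b). Conclusions (1)--(3) follow from (a) alone via the deterministic construction; the integrated inequalities \eqref{eq:ineq1}--\eqref{eq:ineq2} are what tie this completion back to the identified set and to Assumption \ref{assum:ranking}, for in the sharpness argument the companions $m_{k'}$ are chosen so that the pointwise ordering $m_{k'}(v)\le m_k(v)$ holds whenever $k'<k$, whereupon $Y_k=m_k(V)$ also satisfies the ranked-treatment restriction. The main obstacle is therefore concentrated entirely in conclusion (3): the leading-term equalities restrict $m_k$ only through its conditional averages on the leading coordinates, whereas the target moment integrates $m_k$ against the full indicator decomposition of $d_k$ in Equation \eqref{decomposedk}, including the dominated non-leading terms. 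Bridging this gap is exactly the content of Lemma \ref{lem:lead-completeness}, which propagates equality from the leading-coordinate derivatives to the full multivariate integral using the shared vanishing boundary values and absolute continuity on rectangles; with that lemma in hand the remaining assembly is routine.
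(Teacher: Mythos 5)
Your proposal is correct and follows essentially the same route as the paper: construct $(V,Z)$ with $V\sim f_V$ independent of $Z$, realize each potential outcome with conditional mean $m_k(v)$ using only $V$ (and noise independent of $Z$), verify conclusions (1)--(2) directly, and obtain conclusion (3) by invoking Lemma~\ref{lem:lead-completeness}. Your deterministic choice $Y_k=m_k(V)$ is simply the degenerate special case of the paper's construction $Y_k=G_k^{-1}(U_k;V)$ with $G_k(v,\cdot)$ a point mass at $m_k(v)$, and your closing observation that hypothesis (b) is not needed for the lemma's conclusions (only for the sharpness argument downstream) is also consistent with the paper's proof.
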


\begin{proof}[Proof of Lemma \ref{lem:feasible-completion}]
Let $U_1,\ldots,U_T$ be i.i.d. $\mathrm{Uniform}(0,1)$, independent of $(V,Z)$. For each $k$ and $v\in(0,1)^J$, choose any distribution $G_k(v,\cdot)$ on $(0,\bar Y)$ with mean $m_k(v)$ (e.g., a two-point distribution at $a(v),b(v)\in(0,\bar Y)$). Define
\[
Y_k \;=\; G_k^{-1}\!\big(U_k;V\big).
\]
Then $E[Y_k\mid V]=m_k(V)$ and $Y_k\in(0,\bar Y)$ a.s.; since the mapping uses only $V$ and $U_k$, we have $Y_k \perp Z\mid V$. By Lemma \ref{lem:lead-completeness}, for each $k$ the function $m_k$ reproduces the model-implied moment $E[Y D_k\mid Q(Z)=q]$.
\end{proof}

\begin{proof}[Proof of Theorem \ref{thm:sharpness}]
Fix any $m_k:(0,1)^J\to(0,\bar Y) \in\mathcal I_k^0$. For each $k'\neq k$, pick an admissible $m_{k'}$ that (i) satisfies its own leading-term equalities (as in \eqref{eq:lead-equality} with $k'$ in place of $k$), and (ii) satisfies the ranking inequalities jointly with $m_k$ (e.g., use pointwise envelopes based on the identified conditional MTRs for treatment $k'$ and extend them constantly in uninvolved coordinates). The collection $(m_1,\ldots,m_T)$ then meets the premises of Lemma \ref{lem:feasible-completion}, which delivers a DGP consistent with all assumptions and with $E[Y_k\mid V]=m_k$. Thus every $m_k\in\mathcal I_k^0$ is observationally equivalent to the truth under some admissible DGP.

If $m_k\notin\mathcal I_k^0$, then it violates at least one model-implied equality or inequality. Such a violation contradicts the observed moments or the assumed ordering and hence is refuted by the data under the maintained assumptions.

Therefore $\mathcal I_k^0$ is sharp.
\end{proof}

\begin{proof}[Proof of Theorem \ref{thm:sharpness_multi}]
Same logic as in the marginal case, but now we consider the full collection $\{ m_k \}_{k\in K}$ simultaneously. By Lemma 2, any such tuple in $\mathcal I_{K}^0$ can be completed into a full DGP satisfying the model and matching the observed data on all quantities used in identification. Hence, the joint identified set is sharp.
\end{proof}

\begin{proof}[Proof of Lemma~\ref{lemma:unbiased_GPRTE}]
We take the conditional expectation of $\Delta W_{N_o}$ given $\mathbf{Z}$:
\[
\mathbb{E}[\Delta W_{N_o} \mid \mathbf{Z}] = \sum_{o=1}^{N_o} \omega_o \, \mathbb{E} \left[ \hat\mu'^o_{E[Y|V]} - \hat\mu^o_{E[Y|V]} \mid Z^o \right].
\]

By construction, each $v_m^o$ is independently drawn from the distribution of $V$, and the estimators $\hat\mu^o_{E[Y|V]}$ and $\hat\mu'^o_{E[Y|V]}$ are computed by averaging over these draws. Denote $\bar D^o = d(v, Q(Z^o))$, and observe:
\begin{align*}
    \mathbb{E} \left[ \hat\mu^o_{E[Y|V]} \mid Z^o \right]
    &= \mathbb{E} \left[ \left. \frac{1}{M} \sum_{m=1}^M E[Y_{\bar D^o_m} \mid V = v_m^o, Z^o] \right| Z^o \right] \\
    &= \mathbb{E} \left[ E[Y_{\bar D^o} \mid V = v, Z^o] \mid Z^o \right] \\
    &= \sum_{k=1}^T \int d_k(v, Q(Z^o)) E[Y_k \mid V = v] f_V(v) dv,
\end{align*}
and similarly for $\mathbb{E} \left[ \hat\mu'^o_{E[Y|V]} \mid Z^o \right]$.

Substituting back into the expectation of $\Delta W_{N_o}$ gives:
\[
\mathbb{E}[\Delta W_{N_o} \mid \mathbf{Z}] 
= 
\sum_{o=1}^{N_o} \omega_o \sum_{k=1}^{T} \int \left( d'_k(v, Q'(Z^o)) - d_k(v, Q(Z^o)) \right) E[Y_k \mid V = v] f_V(v) dv,
\]
which coincides with the definition of $\Delta W$ in Equation~\eqref{eq:GPRTE}. Therefore, we conclude that $\Delta W_{N_o}$ is an unbiased estimator of $\Delta W$.
\end{proof}

\begin{proof}[Proof of Theorem \ref{thm:coveragerate}]
The coverage is driven by the Lemma 4 of \cite{imbens2004confidence}. We verify all its requirements are satisfied in our situation. 

The covariance between the lower and upper bounds is:
\[
\mathrm{Cov}(\underline{\Delta W_{N_o}}, \overline{\Delta W_{N_o}}) = \sum_{o=1}^{N_o} \omega_o^2 \frac{\mathrm{Cov}(\underline{\Delta \mu^o}, \overline{\Delta \mu^o})}{M},
\]
where \( \mathrm{Cov}(\underline{\Delta \mu^o}, \overline{\Delta \mu^o}) \) is the covariance between the sample lower and upper bounds for observation \( o \). The unbiased estimator for this covariance is:
\[
\widehat{\mathrm{Cov}}(\underline{\Delta \mu^o}, \overline{\Delta \mu^o}) = \frac{1}{M-1} \sum_{m=1}^M \left( \underline{\delta^o_m} - \underline{\Delta \mu^o_M} \right) \left( \overline{\delta^o_m} - \overline{\Delta \mu^o_M} \right).
\]
Substituting this into the covariance formula, the estimator for \( \mathrm{Cov}(\underline{\Delta W_{N_o}}, \overline{\Delta W_{N_o}}) \) is:
\[
\widehat{\mathrm{Cov}}(\underline{\Delta W_{N_o}}, \overline{\Delta W_{N_o}}) = \sum_{o=1}^{N_o} \omega_o^2 \frac{\widehat{\mathrm{Cov}}(\underline{\Delta \mu^o}, \overline{\Delta \mu^o})}{M}.
\]

The correlation coefficient is defined as
\[
\rho_{\underline{\Delta W}, \overline{\Delta W}} = \frac{\mathrm{Cov}(\underline{\Delta W_{N_o}}, \overline{\Delta W_{N_o}})}{\sigma_{\underline{\Delta W}} \sigma_{\overline{\Delta W}}},
\]
then the estimated correlation coefficient is:
\[
\widehat{\rho}_{\underline{\Delta W}, \overline{\Delta W}} = \frac{\widehat{\mathrm{Cov}}(\underline{\Delta W_{N_o}}, \overline{\Delta W_{N_o}})}{\widehat{\sigma}_{\underline{\Delta W}} \widehat{\sigma}_{\overline{\Delta W}}}.
\]

The estimators \( \widehat{\sigma}_{\underline{\Delta W}}, \widehat{\sigma}_{\overline{\Delta W}}, \widehat{\mathrm{Cov}}(\underline{\Delta W_{N_o}}, \overline{\Delta W_{N_o}}), \) and \( \widehat{\rho}_{\underline{\Delta W}, \overline{\Delta W}} \) converge to their population counterparts as \( M \to \infty \), ensuring the asymptotic validity of the estimated variance-covariance matrix.

% Next we need to argue about boundedness

Besides, since $\overline{\Delta W_{N_o}} - \underline{\Delta W_{N_o}}$ is bounded, so are $\widehat{\sigma}_{\underline{\Delta W}}$ and $\widehat{\sigma}_{\overline{\Delta W}}$.

%Verify the condition

Moreover, we need to verify that 
for all \(\epsilon_0 > 0\), there exist constants \(\nu_0 > 0\), \(K_0 > 0\), and \(M_0 > 0\) such that for all \(M > M_0\),
\[
\Pr\left(\sqrt{M} \left| (\overline{\Delta W_{N_o}} - \underline{\Delta W_{N_o}}) 
- (\mu_{\overline{\Delta W}} - \mu_{\underline{\Delta W}}) \right| 
> K_0(\mu_{\overline{\Delta W}} - \mu_{\underline{\Delta W}})^{\nu_0} \right) < \epsilon_0.
\]

Write $\mu_R = \mu_{\overline{\Delta W}} - \mu_{\underline{\Delta W}}$. 
We know from earlier derivations that \(\overline{\Delta W_{N_o}}\) and \(\underline{\Delta W_{N_o}}\) jointly follow a normal distribution as \(M \to \infty\), conditional on \(N_o\) being fixed. Therefore, \(R_{N_o}\) is also normally distributed:
\[
\sqrt{M} (R_{N_o} - \mu_R) \sim \mathcal{N}(0, M \sigma_R^2),
\]
where \(\sigma_R^2 = \mathrm{Var}(\overline{\Delta W_{N_o}} - \underline{\Delta W_{N_o}})\).  
Using the tail probability of the standard normal distribution \(Z \sim \mathcal{N}(0, 1)\), we know:
\[
\Pr(|Z| > z) = 2(1 - \Phi(z)),
\]
where \(\Phi(z)\) is the cumulative distribution function of the standard normal. For our case
\[
Z = \frac{\sqrt{M} (R_{N_o} - \mu_R)}{\sqrt{M} \sigma_R} = \frac{R_{N_o} - \mu_R}{\sigma_R},
\]
Then the condition to be verified becomes:
\[
\Pr\left(\frac{|R_{N_o} - \mu_R|}{\sigma_R} > \frac{K_0 \mu_R^{\nu_0}}{\sqrt{M} \sigma_R} \right) < \epsilon_0.
\]

As \(M \to \infty\), the term \(\frac{K_0 \mu_R^{\nu_0}}{\sqrt{M} \sigma_R}\) grows larger. Therefore, for any given \(\epsilon_0 > 0\), we can choose a sufficiently large \(M_0\) such that:
\[
\frac{K_0 \mu_R^{\nu_0}}{\sqrt{M} \sigma_R} \geq z_{\epsilon_0},
\]
where \(z_{\epsilon_0}\) satisfies \(2(1 - \Phi(z_{\epsilon_0})) = \epsilon_0\).

The constants \(K_0\) and \(\nu_0\) serve as scaling factors that ensure the bound grows appropriately with \(\mu_R\). For sufficiently large \(M\), \(K_0\) can be chosen proportional to \(\sigma_R\), and \(\nu_0\) determines the nonlinearity in the scaling with \(\mu_R\). These constants are less critical as \(M\) increases because \(\frac{\sqrt{M}}{K_0 \mu_R^{\nu_0}}\) dominates the tail behavior.

By choosing \(M_0\) large enough, and selecting \(K_0\) and \(\nu_0\) to ensure the scaling matches the tail decay of the normal distribution, the probability:
\[
\Pr\left(\sqrt{M} \left| R_{N_o} - \mu_R \right| > K_0 \mu_R^{\nu_0} \right)
\]
can be made arbitrarily small. Thus, the condition is satisfied.

%here is M goes to infinity according their paper's assumption, not N_o. Since N_o is fixed and given.

By the Lemma 4 of \cite{imbens2004confidence}, Equation \eqref{eq:asympcover} holds.
\end{proof}

\end{document}